\let\oldeqref\eqref
\RenewDocumentCommand\eqref{s m}{%
  \IfBooleanTF#1%
  {\oldeqref{#2}}
  {\oldeqref{#2}}
}
\newcommand{\citepsupplement}{\citep}
\newcommand{\citetsupplement}{\citet}
\theoremstyle{plain}
\newtheorem{prop}{Proposition}
\newtheorem{coro}[prop]{Corollary}
\newtheorem{lemm}[prop]{Lemma}
\newtheorem{theo}[prop]{Theorem}
\theoremstyle{definition}
\theoremstyle{remark}
\title{Confidence Intervals for \\ Nonparametric Empirical Bayes Analysis}
\author{Nikolaos Ignatiadis \\ \texttt{ignat@stanford.edu} \and Stefan Wager \\ \texttt{swager@stanford.edu}}
\date{September 2021}
\begin{document}

\maketitle

\begin{abstract}
In an empirical Bayes analysis, we use data from repeated sampling to imitate inferences made by an oracle Bayesian with extensive knowledge of the data-generating distribution. Existing results provide a comprehensive characterization of when and why empirical Bayes point estimates accurately recover oracle Bayes behavior. In this paper, we develop flexible and practical confidence intervals that provide asymptotic frequentist coverage of empirical Bayes estimands, such as the posterior mean or the local false sign rate. The coverage statements hold even when the estimands are only partially identified or when empirical Bayes point estimates converge very slowly.\\ \newline
\noindent\textbf{Keywords:} Empirical Bayes, mixture models, local false sign rate, partial identification, bias-aware inference
\end{abstract}

\section{Introduction}
\label{sec:intro}

Empirical Bayes methods enable frequentist estimation that emulates a Bayesian oracle. Suppose
we observe $\Zo$ generated as below, and want to estimate $\theta_G(\zo)$,
\begin{equation}
\label{eq:EB}
\mu \sim \gprior, \ \ \Zo \sim p(\cdot \cond \mu), \ \ \theta_{\gprior}(\zo) = \EE[\gprior]{h(\mu) \cond \Zo = \zo},
\end{equation}
for some known function $h(\cdot) \in \RR$.  Given knowledge of $\gprior$, $\theta_{\gprior}(\zo)$ can be directly evaluated via
Bayes' rule. An empirical Bayesian does not know $\gprior$,
but seeks an approximately optimal estimator $\smash{\htheta(\zo) \approx \theta_{\gprior}(\zo)}$ using independent
draws $\Zo_1, \, \Zo_2, \, ..., \, \Zo_n$ from the distribution \eqref{eq:EB}.

The empirical Bayes approach was first introduced by \citet{robbins1956empirical} and has proven to be successful
in a wide variety of settings with repeated observations of similar phenomena, such as genomics~\citep{efron2001empirical, love2014moderated}, education~\citep{lord1969estimating, gilraine2020new} and actuarial science~\citep{buhlmann2006course}. Table~\ref{tab:eb_examples} provides concrete applications of model~\eqref{eq:EB} for these subject areas. In all examples, the posterior mean \smash{$\theta_{\gprior}(\zo) = \EE[\gprior]{\mu \cond \Zo=z}$} is a statistic of interest, as it describes the (mean squared error) optimal shrinkage rule for estimating $\mu$. In the genomics application, it is also of interest to determine the local false-sign rate
\smash{$\theta_{\gprior}(\zo) = \PP{\mu \zo \leq 0 \cond \Zo = \zo}$}, i.e., the posterior probability that $\mu$ has a different sign than $\Zo$. 

\renewcommand{\arraystretch}{1.2} 
\begin{table}
\centering
\begin{tabular}{@{}lllll@{}} \toprule
Subject & $i$ & $Z_i$ & $\mu_i$ & $Z_i \cond \mu_i \; \approxdot$  \\ \midrule 
Actuarial science & Contract & Number of insurance claims & Risk profile & $\Poisson{\mu_i}$  \\ 
Education & Student & Score in test with $N$ & Latent ability  & $\text{Binom}(N, \mu_i)$  \\
& & multiple choice questions &  & \\
Genomics & Gene & t-statistic comparing & Standardized  & $\nn(\mu_i,1)$  \\
& &  expression between conditions & effect size & 
\\ \bottomrule
\end{tabular}
\caption{Example applications for empirical Bayes inference in model~\eqref{eq:EB}.}
\label{tab:eb_examples}
\end{table}

As elaborated later, there is by now a large literature proposing a suite of estimators \smash{$\htheta(\zo)$}
for $\theta_{\gprior}(\zo)$. Many of these estimators have theoretical guarantees under nonparametric specification of $\gprior$, say $\gprior \in \gcal$, where $\gcal$ is a convex class of distributions. The goal of this paper is to move past point estimation, and develop nonparametric confidence intervals for $\theta_G(\zo)$, i.e., intervals with the following property:
\begin{equation}
\label{eq:CI}
\ii_\alpha(\zo) = \sqb{\htheta^-_\alpha(\zo), \, \htheta^+_\alpha(\zo)}, \ \ 
\liminf_{n \rightarrow \infty} \PP[\gprior]{\theta_{\gprior}(\zo) \in \ii_\alpha(\zo)} \geq 1 - \alpha \text{ for all } \gprior \in \gcal. 
\end{equation}
Despite widespread use of empirical Bayes methods, the problem has received surprisingly little attention. In fact, we are not aware of confidence intervals with property~\eqref{eq:CI} beyond two special cases: one proposal by~\citet{lord1975empirical} for inference about the posterior mean in the binomial model and another by~\citet{robbins1980empirical} for the same task in the Poisson model.

\subsection{Motivating application: Predicting automobile insurance claims}
\label{subsec:bichsel}
To motivate our interest in confidence intervals of the form~\eqref{eq:CI}, we revisit the historical work of~\citet{bichsel1964erfahrungs}. Bichsel developed a theoretical framework for assigning automobile insurance premium rates, in a way that accounts for the claims experience of each individual. He analyzed a dataset (Table~\ref{tab:bichsel_intro}) of claims made in the year 1961 by holders of a Swiss automobile insurance policy. Bichsel posited that $\Zo_i(t)$, the number of claims made in year $t$ by the $i$-th insurance holder, is distributed as $\Poisson{\mu_i}$, where $\mu_i$ is $i$'s latent risk. $\mu_i$ is a random draw from a distribution $\gprior$ that captures the heterogeneity of the insurance portfolio. Bichsel further assumed that the number of claims $Z_i(t), Z_i(t')$ in different years $t \neq t'$ are i.i.d. conditionally on $\mu_i$. Given these assumptions, Bichsel sought to estimate the expected number of claims in the next year, among all insurance holders that made $\Zo_i(1961) = \zo$ claims in 1961,
\begin{equation}
\label{eq:posterior_mean_poisson}
\theta_{\gprior}(\zo) = \EE[\gprior]{ \Zo(1962) \cond \Zo(1961)=\zo} = \EE[\gprior]{\mu \cond \Zo(1961) =\zo}.
\end{equation}
Bichsel reasoned, that if $\theta_{\gprior}(\zo)$ were known, it could be used by the insurance company for policy decisions, such as increasing or decreasing the premium of a policy holder with $\zo$ claims in 1961. Since $\gprior$, and consequently $\theta_{\gprior}(\zo)$, were not known to Bichsel, he considered an empirical Bayes approach.\footnote{\label{footnote:eb_intervals}$\theta_{\gprior}(\zo)$ is a property of $\gprior$, i.e., of the portfolio heterogeneity. The goal is to best assess how many claims will be made across all individuals in the portfolio that made $\zo$ claims in 1961, and not to reason about the risk $\mu_i$ of any individual policy holder with $\Zo_i(1961)=\zo$. The problem of forming intervals containing the true $\mu_i$ (for individuals) is of scientific importance, see e.g.,~\citet*{morris1983parametric, laird1987empirical, armstrong2020robust, koenker2020} for some proposals; however it is not the problem we consider in this work.}  

The problem of point estimation for $\theta_\gprior(\zo)$ is well understood. One popular nonparametric solution\footnote{In his work, Bichsel modeled $\gprior$ parametrically as a Gamma distribution with unknown parameters.} is to first estimate $\gprior$ through the nonparametric maximum likelihood estimator (NPMLE) of~\citet{kiefer1956consistency} and~\citet{simar1976maximum}: one estimates \smash{$\hG$} as the maximizer of the marginal log-likelihood \smash{$\sum_{i} \log(f_{\gprior}(\Zo_i))$}, \smash{$f_{\gprior}(\zo)= \int \exp(-\mu)\mu^{\zo}/\zo!\,d\gprior(\mu)$}, among all possible prior distributions $\gprior$. Then, with \smash{$\hG$} in hand, one estimates \smash{$\theta_{\gprior}(\zo)$} through the plug-in principle, i.e., \smash{$\htheta_{\gprior}(\zo) = \theta_{\hG}(\zo)$} (shown in the third column of Table~\ref{tab:bichsel_intro}). 

However, in so far as $\theta_{\gprior}(\zo)$ may be used for policy decisions of the insurance company, it is also important to assess the uncertainty in estimating it. In this paper, we develop two complementary approaches that address the problem of inference for empirical Bayes estimands, and enable the construction of intervals with the property~\eqref{eq:CI} under the general model~\eqref{eq:EB}. The last two columns of Table~\ref{tab:bichsel_intro} show the two confidence intervals that we propose for Bichsel's data. The assumption we make in forming these intervals is that $\gprior$ is supported on $[0,5]$. The `F-localization' intervals (third column of Table~\ref{tab:bichsel_intro}) have simultaneous coverage for all $\zo$, while the `\Amari' intervals (fourth column) have pointwise coverage. We next provide a high-level overview of our two constructions.

\renewcommand{\arraystretch}{1.2} 
\begin{table}
\centering
\begin{tabular}{@{}rllll@{}} \toprule
z & $\#\cb{\Zo_i =\zo}$  & $\htheta_{\text{NPMLE}}(\zo)$ & F-localization $\ii_\alpha(\zo)$ & \Amari~$\ii_\alpha(\zo)$ \\ \midrule
0 & 103704  & 0.14 & 0.13 -- 0.14 & 0.13 -- 0.14 \\
1 & 14075 & 0.25 & 0.23 -- 0.27 & 0.24 -- 0.26 \\
2 & 1766  & 0.44 & 0.36 -- 0.53 & 0.38 -- 0.49 \\
3 & 255  & 0.69 & 0.48 -- 0.94 & 0.53 -- 0.91 \\
4 & 45  & 0.82 & 0.52 -- 1.64 & 0.58 -- 1.39 \\
$\geq$ 5 & 8 &  &  &  \\ \bottomrule
\end{tabular}
\caption{\textbf{Empirical Bayes confidence intervals in an actuarial application.} The first two columns report the number of claims $z$ made in 1961 by 119853 holders of a Swiss automobile insurance policy~\citep{bichsel1964erfahrungs}. The next column reports the nonparametric maximum likelihood (NPMLE) point estimates of the posterior mean \smash{$\theta_{\gprior}(\zo) = \EE[\gprior]{\mu \cond \Zo =\zo}$} under model~\eqref{eq:EB} with \smash{$\Zo \cond \mu \; \sim \Poisson{\mu}$}. The last two columns report the two types of confidence intervals developed in this work; the $F$-localization intervals (Section~\ref{sec:Flocalization}) and \Amari~intervals (Section~\ref{sec:amari}).}
\label{tab:bichsel_intro}
\end{table}

\subsection{Empirical Bayes confidence intervals}
In our approach the data analyst first specifies~\eqref{eq:EB}, i.e., \smash{$\theta_{\gprior}(\zo)$}, the empirical Bayes estimand of interest (e.g., the posterior mean~\eqref{eq:posterior_mean_poisson}) and the conditional distribution of $\Zo$ given $\mu$ (e.g., $\Poisson{\mu}$), which we represent by its conditional density $p(\cdot \cond \mu)$ with respect to a $\sigma$-finite measure $\lambda$ on a subset of $\RR$ (e.g., the counting measure on $\NN_{\geq 0}$). We also require the data analyst to specify a convex class of priors $\gcal$ such that $\gprior \in \gcal$. For example, for our analysis of Bichsel's data in Table~\ref{tab:bichsel_intro}, we assumed that $\gprior \in \gcal = \pp([0, 5])$, where:\footnote{We provide more guidance for choosing $\gcal$ in Section~\ref{sec:gcal}.}
\begin{equation}
\label{eq:all_dbns}
\pp(\Ksupport)  := \cb{G \text{ distribution}:\; \support(G) \subset \Ksupport} \text{ for } \Ksupport \subset \RR.
\end{equation}

\subsubsection{$F$-localization}
Our first confidence interval construction is based on the notion of \emph{$F$-localization}. The key idea is to construct a confidence set for the marginal distribution of $\Zo$ and then determine all $\gprior \in \gcal$ consistent with this confidence set. Let us denote the marginal distribution of $\Zo$ by  $F_{\gprior}$ and its $d\lambda$-density by $f_{\gprior}$, i.e.,
\begin{equation}
\label{eq:marginal_density}
f_{\gprior}(\zo) = \int p(\zo \cond \mu)d\gprior(\mu)\;,\;\;\;  F_{\gprior}(t) = \PP[\gprior]{\Zo \leq t} = \int \ind\p{\zo \leq t} f_{\gprior}(\zo)d\lambda(\zo).
\end{equation}
We then define an $F$-localization as an (asymptotic) $1-\alpha$ confidence set $\ff_n(\alpha)$ of distributions, i.e., a set such that
\begin{equation}
\label{eq:dbn_nbhood}
\liminf_{n \to \infty}\cb{ \PP[G]{ F_G \in \ff_n(\alpha)} \, - \, (1-\alpha)}\, \geq \, 0.
\end{equation}
With an $F$-localization $\ff_n(\alpha)$ in hand, and deferring the construction of such to Section~\ref{sec:Flocalization}, we can form confidence intervals \smash{$\ii_{\alpha}(\zo) = [\htheta^-_\alpha(\zo),\htheta^+_\alpha(\zo)]$} for $\theta_{\gprior}(\zo)$ by letting,
\begin{align}
&\htheta^-_\alpha(\zo) = \inf \cb{ \theta_{\gprior}(\zo) \mid \gprior \in \gcal\p{\ff_n(\alpha)}},\, \htheta^+_\alpha(\zo) = \sup \cb{ \theta_{\gprior}(\zo) \mid \gprior \in \gcal\p{\ff_n(\alpha)}}, \label{eq:nbhood_worst_case}\\
&\text{where } \gcal(\ff) = \cb{ \gprior \in \gcal \cond F_{\gprior} \in \ff}. \label{eq:localized_gcal}
\end{align}
The intervals~\eqref{eq:nbhood_worst_case} satisfy~\eqref{eq:CI}, since \smash{$\mathbb P_{\gprior}[  \theta_{\gprior}(\zo) \in [\htheta^-_\alpha(\zo),\htheta^+_\alpha(\zo)]] \geq \PP[\gprior]{F_{\gprior} \in \ff_n(\alpha)},$}
and the same argument also demonstrates that coverage holds simultaneously over all possible empirical Bayes estimands \smash{$\theta_{\gprior}(\zo) = \EE[\gprior]{h(\mu) \cond \Zo=\zo}$}, where both $\zo$ and $h$ can vary. In Section~\ref{sec:Flocalization} we explain how for common choices of $\gcal$ and $\ff_n(\alpha)$,~\eqref{eq:nbhood_worst_case} can be computed by solving two linear programs.

It is interesting to consider the $F$-localization approach in the context of the dichotomy of~\citet{efron2014two, efron2019bayes} on  $F$- versus $G$-modeling. Under model~\eqref{eq:EB}, it is typically  straightforward to estimate $F_{\gprior}$, because the observed $\Zo_i$ are direct measurements from $F_{\gprior}$. In contrast, estimation of $\gprior$ is a difficult inverse problem. In some cases, the empirical Bayes estimand of interest may be expressed directly in terms of $F_{\gprior}$: for example, \citet{robbins1956empirical} proves that the posterior mean in the Poisson model (\smash{$p(\cdot \cond \mu) = \Poisson{\mu}$} in~\eqref{eq:EB}) is equal to,
\begin{equation}
\label{eq:robbins_poisson}
\theta_{\gprior}(\zo) = \EE[G]{\mu \cond \Zo = \zo} = (\zo+1)\frac{f_{\gprior}(\zo+1)}{f_{\gprior}(\zo)},\;f_{\gprior}(\zo) = \PP[G]{\Zo = \zo},\;  \zo \in \mathbb N_{\geq 0}.
\end{equation}
When a formula as~\eqref{eq:robbins_poisson} is available, it is convenient to proceed by $F$-modeling, i.e., to estimate \smash{$\hat{F}_G$} and evaluate the corresponding $F$-formula by the plugin principle. In the Poisson model, letting \smash{$\hat{F}_G$} be the empirical distribution of the $Z_i$, the plugin principle leads to the estimate $\htheta_{\text{Robbins}}(\zo) = (\zo+1)\#\cb{\Zo_i =\zo+1}/\#\cb{\Zo_i =\zo}$ for \eqref{eq:robbins_poisson}. A caveat of $F$-modeling, however, is that, natural constraints on the empirical Bayes estimand are not enforced. For instance, the posterior mean \eqref{eq:robbins_poisson} in the Poisson model is non-decreasing in $\zo$, but $\htheta_{\text{Robbins}}(\cdot)$ does not enforce such monotonicity. In contrast,  natural constraints such as monotonicity are automatically enforced under $\gprior$-modeling, that is, if one first estimates \smash{$\hG$} and then lets \smash{$\htheta_{\gprior}(\zo) = \theta_{\hG}(\zo)$}.

For inference using $F$-localization, the two perspectives are complementary. The data analyst constructs~\eqref{eq:dbn_nbhood}, an $F$-modeling task, and the Bayes structure of the problem is enforced through~\eqref{eq:nbhood_worst_case}. For example, in the Poisson posterior mean problem, the lower bounds of the confidence intervals, \smash{$\htheta^-_\alpha(\zo)$}, are monotonic in $\zo$, and similarly for the upper bounds \smash{$\htheta^+_\alpha(\zo)$}.

\subsubsection{\Amari~(Affine Minimax Anderson-Rubin Intervals)}
\label{subsubsec:amari}
The $F$-localization approach is generic, streamlined to implement and enables simultaneous inference for all empirical Bayes estimands of interest. The $F$-localization intervals for a specific estimand $\theta_{\gprior}(\zo)$, however, can be overly wide. Our second construction, \Amari, seeks to do better than $F$-localization, i.e., to provide shorter confidence intervals, by focusing on a specific estimand (compare e.g., columns 3 and 4 of Table~\ref{tab:bichsel_intro}). The starting point for \Amari~is the observation that we can write the empirical Bayes estimand $\theta_{\gprior}(\zo)$, as a ratio of two linear functionals of $\gprior$, i.e.,
\begin{equation}
\label{eq:ratio}
\theta_{\gprior}(\zo) = \frac{\int h(\mu) p(\zo \cond \mu) \ d\gprior(\mu)}{\int p(\zo \cond \mu) \ d\gprior(\mu)} = \frac{a_{\gprior}(\zo)}{f_{\gprior}(\zo)},
\end{equation}
where $f_{\gprior}(\cdot)$ is the marginal density of $\Zo$ and $a_{\gprior}(\zo)$ is used to denote the numerator. Hence, by a construction that goes back to at least~\citet{fieller1940biological}, the following two hypothesis tests are equivalent for $c \in \RR$,
\begin{equation}
\label{eq:hypothesis_test_fractional}
H_0:\; \theta_{\gprior}(\zo) = c        \; \Longleftrightarrow \; H_0: \theta_{\gprior}^{\text{lin}}(\zo; c) = 0, \;\text{where } \theta_{\gprior}^{\text{lin}}(z; c) = a_{\gprior}(\zo) - c f_{\gprior}(\zo).
\end{equation}
By inverting the test for \smash{$H_0:\; \theta_G(\zo) = c$} we can form confidence intervals for $\theta_{\gprior}(\zo)$. The upshot of ~\eqref{eq:hypothesis_test_fractional} then is that it suffices to construct confidence intervals for linear functionals $L(\gprior)$ of $\gprior$, say $L(\gprior) = \theta_{\gprior}^{\text{lin}}(\zo; c)$. We provide the details of this reduction in Section~\ref{sec:amari},
and proceed to explain our approach to inference for linear functionals of $\gprior$. Our core proposal is to estimate $L(\gprior)$  as an affine estimator, i.e., one of the form
\begin{equation}
\label{eq:Q}
\begin{split}
\hL = \hL(\gprior) = \frac{1}{n} \sum_{i = 1}^n Q(\Zo_i),
\end{split}
\end{equation}
where \smash{$Q(\cdot)$} is chosen to optimize a worst-case bias-variance tradeoff depending on the prior class $\gcal$.  To form confidence intervals, we first estimate the variance and worst-case bias of~\eqref{eq:Q} as
\begin{align}
&\hV = \frac{1}{n(n - 1)}\sqb{\sum_{i = 1}^n Q^2(\Zo_i) - \p{\sum_{i=1}^n Q(\Zo_i)}^2\bigg/n}, \label{eq:sample_var_est}\\
&\hB^2 = \sup_{\gprior \in \gcal(\ff_n)}\cb{\Bias[\gprior]{Q,L}^2]}, \;\; \Bias[\gprior]{Q,L} = \int Q(\zo)f_{\gprior}(\zo)d\lambda(\zo) - L(\gprior). \label{eq:bias_est}
\end{align}
Here, the worst case bias is computed with respect to \smash{$\gcal(\ff_n)$}~\eqref{eq:localized_gcal}, where \smash{$\ff_n = \ff_n(\alpha_n)$} is an $F$-localization at level $\alpha_n \to 0$ as $n \to \infty$. With \smash{$\hV,\hB$} in hand, we build bias-aware confidence intervals \smash{$\ii_\alpha$} for \smash{$L(\gprior)$}
\citep[e.g.,][]{armstrong2018optimal,imbens2004confidence, imbens2017optimized}
\begin{equation}
\label{eq:im_iw_ci}
\ii_\alpha = \hL \pm t_\alpha(\hB, \hV), \ \
  t_\alpha(B, V)= \inf\cb{t : \PP{\abs{b + V^{1/2} W} > t} < \alpha \text{ for all } \abs{b} \leq B},
\end{equation}
where $W \sim \nn\p{0, \, 1}$ is a standard Gaussian random variable. Sections~\ref{sec:linear_functionals} and \ref{sec:amari} have formal results establishing asymptotic coverage
properties for these intervals.

Conceptually, our AMARI intervals build on recent work by~\citet{noack2019bias}, who consider inference of average treatment effects in the fuzzy regression discontinuity design. There, the estimand also takes the form of a ratio of two linear functionals as in~\eqref{eq:ratio}.~\citet{noack2019bias} name their approach after~\citet{anderson1949estimation}, who develop confidence intervals in the linear instrumental variable model, and so similarly we acknowledge~\citet{anderson1949estimation} as part of the acronym \Amari.

\subsection{Related Work}
\label{subsec:related_work}

As discussed briefly above, the empirical Bayes principle has spurred considerable interest over
several decades. One of the most successful applications of this idea involves compound estimation of a
high-dimensional Gaussian mean: we observe \smash{$\mathbf{Z} \sim \nn\p{\bm{\mu}, \, I}$}, and want to recover $\bm{\mu}$ under squared
error loss. If we assume that the individual $\mu_i$ are drawn from a prior $\gprior$, then empirical Bayes estimation
provides a principled shrinkage rule \citep{efron1973stein,efron2011tweedie}, whose theoretical properties are well-understood~\citep{brown2009nonparametric,jiang2009general}. The compound estimation problem when the individual $Z_i$ are Poisson, is  also reasonably well understood~\citep{brown2013poisson}.

The more general empirical Bayes problem~\eqref{eq:EB} has raised interest in applications
\citep{efron2012large,efron2016empirical,stephens2016false, koenker2017rebayes}; however, the accompanying formal results are less
comprehensive. \citet{muralidharan2012high} considered compound estimation in~\eqref{eq:EB}, when $p(\cdot \cond \mu)$ is a one-dimensional exponential family. For $p(\cdot \cond \mu) = p(\cdot - \mu)$, a smooth location family, some authors, including~\citet{butucea2009adaptive,pensky2017minimax}, have considered rate-optimal estimation of linear functionals of $\gprior$; and their setup covers, for example, the numerator $a_{\gprior}(\zo)$ in \eqref{eq:ratio}. The main message of these papers, however, is rather pessimistic: for example, \citet{pensky2017minimax} shows that for many linear functionals, the minimax rate for estimation in mean squared error over certain Sobolev classes $\gcal$ is logarithmic (to some negative power) in the sample size.

In this paper, we study a closely related problem but take a different point of view. Even if minimax rates of
optimal point estimates $\htheta(\zo)$ may be extremely slow (or even if estimands are only partially identified), we seek confidence intervals for $\theta_{\gprior}(\zo)$ that still achieve accurate coverage in reasonable sample sizes and explicitly account for bias. The results of
\citet{butucea2009adaptive} and \citet{pensky2017minimax} imply that the length of our confidence intervals
must go to zero very slowly in general; but this does not mean that our intervals cannot be useful in finite samples
(and, in fact, our real data applications in Section~\ref{sec:applications} and numerical experiments in Section~\ref{sec:simulations} suggest that they can be).

To the best of our knowledge, with the exception of a handful of special cases, the problem of nonparametric inference in empirical Bayes problems has been left unexplored. Furthermore, practitioners using empirical Bayes ideas typically do not conduct inference and instead only consider point estimates of functionals of the unknown prior $\gprior$. Among recent empirical Bayes works, \citet{efron2014two, efron2016empirical,efron2019bayes} has advocated estimating (and reporting) the variance of empirical Bayes estimates \smash{$\htheta(\zo)$}, and then using these variance estimates for uncertainty quantification. Such intervals, however, do not account for bias and so could only achieve valid coverage via undersmoothing; and it is unclear how to achieve valid undersmoothing in practice, noting the very slow rates of convergence in empirical Bayes problems.
\citet{efron2014two, efron2016empirical,efron2019bayes} himself does not suggest his intervals be combined with undersmoothing, and rather uses them as pure uncertainty quantification tools.

Two notable existing results for inference as in~\eqref{eq:CI} concern the posterior mean in the Binomial~\citep{lord1975empirical, lord1976interval} and Poisson problems~\citep{robbins1980empirical, karlis2018confidence}. The $F$-localization approach we propose, generalizes the approach of~\citet{lord1975empirical} and~\citet{lord1976interval} for inference of the posterior mean in the Binomial problem to the general empirical Bayes problem~\eqref{eq:EB}. We provide more details regarding this connection at the end of Section~\ref{subsec:floc_emb}, and in Section~\ref{subsec:lord_dataset} we revisit the data application of~\citet{lord1975empirical}. The Poisson posterior mean problem is special, and particularly amenable to the task of forming confidence intervals, because of the existence of Robbins' formula~\eqref{eq:robbins_poisson}, as we elaborate in Section~\ref{subsec:power_poisson}.

From a methodological perspective, our work relies upon advances in convex programming and is inspired by~\citet{koenker2014convex}, who demonstrated that it is fruitful to revisit traditional ideas in empirical Bayes estimation through the lens of modern convex optimization. The $F$-localization approach requires solving two linear programs (or more generally, quasi-convex programs, cf. Section~\ref{sec:Flocalization}). \Amari, our second approach, builds heavily on the literature on affine minimax estimation of linear functionals in Gaussian problems. \citet{donoho1994statistical} and related papers \citep{armstrong2018optimal,cai2003note,donoho1991geometrizing,low1995bias, johnstone2011gaussian} show that there exist affine estimators that achieve quasi-minimax performance and can be efficiently derived via convex programming. In turn, such affine estimators have recently proven useful for statistical inference in a number
of settings, such as semiparametrics \citep{hirshberg2017balancing,kallus2016generalized} and regression discontinuity designs \citep{armstrong2018optimal,imbens2017optimized, eckles2020noise}.

\section{Simultaneous confidence intervals through $F$-localization}
\label{sec:Flocalization}
In this section we discuss our first approach, namely $F$-localization confidence intervals. The key idea is to `localize' the marginal distribution $F_{\gprior}$ with high probability, i.e., to construct a set $\ff_n(\alpha)$~\eqref{eq:dbn_nbhood} such that $F_{\gprior} \in \ff_n(\alpha)$ with (asymptotic) probability at least $1-\alpha$. $\ff_n(\alpha)$ then implies a confidence set $\cb{\gprior \in \gcal: F_{\gprior} \in  \ff_n(\alpha)}$ for $\gprior$, which we project to form confidence intervals for $\theta_{\gprior}(\zo)$ as in~\eqref{eq:nbhood_worst_case}.

A convenient and universal $F$-localization proceeds by restricting $F$ to be in a Kolmogorov-Smirnov ball around the empirical distribution function $\widehat{F}_n(t) = \frac{1}{n}\sum_{i=1}^n \ind\p{Z_i \leq t}$,
\begin{equation}
\label{eq:DKW}
\ff^{\text{DKW}}_n(\alpha) = \cb{F \text{ distribution }:  \sup_{t \in \mathbb R}\abs{F(t) - \widehat{F}_n(t)}  \leq  \sqrt{\log\p{2/\alpha}\big/(2n)}}.
\end{equation}
By Massart's tight constant for the Dvoretzky–Kiefer–Wolfowitz (DKW) inequality~\citep{massart1990tight}, the above is a finite-sample $F$-localization, i.e., \smash{$\PP[\gprior]{F_{\gprior} \in \ff^{\text{DKW}}_n(\alpha)} \geq 1-\alpha$} for all $n$ and for any choice of $p(\cdot \cond \mu)$ in~\eqref{eq:EB}.

This construction is rooted in empirical Bayes tradition. In his discussion, \citet{robbins1956empirical}, suggests that one could achieve asymptotically optimal empirical Bayes regret\footnote{That is, to learn a denoiser \smash{$\htheta(\zo)$} such that $\mathbb E[(\mu - \htheta(\Zo))^2]$ converges to the mean squared error Bayes risk for estimating $\mu$ in model~\eqref{eq:EB}.} by choosing \smash{$\hG$} such that \smash{$\sup_{t}|F_{\hG}(t) -  \widehat{F}_n(t)| \leq c_n$}, with $c_n \to 0$ as $n\to \infty$ and then using a plug-in estimate of the posterior mean \smash{$\htheta(\zo) = \theta_{\hG}(\zo)$}; see also~\citet{donoho2013achieving} for a modern refinement and implementation to achieve optimal empirical Bayes regret in the Gaussian problem.\footnote{\citet{anderson1969confidence} suggested to use the DKW band to form confidence intervals for the mean of a $[0,1]$-valued random variable, as follows: one takes the minimum, resp. maximum of \smash{$\int \zo dF(\zo)$} subject to \smash{$F \in \ff^{\text{DKW}}_n(\alpha)$} and $F$ supported on $[0,1]$; cf.~\citet{romano2000finite}.}

The optimization problem~\eqref{eq:nbhood_worst_case} defining \smash{$\htheta^+_\alpha(\zo)$} (and similarly for \smash{$\htheta^-_\alpha(\zo)$}) can be readily solved using modern convex optimization solvers, as long as $\gcal$ can be efficiently discretized (cf. Supplement~\ref{sec:gcal_discretization}). The simplest case occurs when $\gcal$ and $\ff_n(\alpha)$ may be represented by linear constraints. In that case, we may use the \citet{charnes1962programming} transformation for linear-fractional programming and compute \smash{$\htheta^+_\alpha(\zo)$} by solving a linear program. As a concrete example (see~\eqref{eq:charnescooper} below for the general case), consider the prior class $\gcal = \pp(\Ksupport)$ from \eqref{eq:all_dbns} with \smash{$\Ksupport = \cb{\mu_1,\dotsc,\mu_{\Bdisc}}$} a finite set. Then, we can compute \smash{$\htheta^+_\alpha(\zo)$} by solving the linear program:
\begin{equation}
\label{eq:dkw_loc_opt}
\begin{aligned}
&\underset{\zeta,\; (g_j)_{j=1}^{\Bdisc}}{\text{maximize}} 
& & \sum_{j=1}^{\Bdisc} h(\mu_j)p(\zo \cond \mu_j)g_j\\
& \text{subject to}
& & \sum_{j=1}^{\Bdisc}g_j =\zeta,\;\,\sum_{j=1}^{\Bdisc} p(\zo \cond \mu_j)g_j = 1,\,\;g_j \geq 0,\, j=1,\dotsc,\Bdisc,\;\, \zeta \geq 0,\;\;\\
&&& \sup_{t \in \mathbb R}\abs{ \sum_{j=1}^{\Bdisc} g_j \int_{(-\infty,t]} p(\tilde{\zo} \cond \mu_j)d\lambda(\tilde{\zo}) -\zeta \widehat{F}_n(t)}  \leq  \zeta \cdot \sqrt{\log\p{2/\alpha}\big/(2n)}.
\end{aligned} 
\end{equation}
The optimization variables $\zeta,\, g_j$ in~\eqref{eq:dkw_loc_opt} have the interpretation \smash{$\zeta=1/f_{\gprior}(\zo)$}, \smash{$g_j = \zeta \cdot \PP[\gprior]{\cb{\mu_j}}$} for $j=1,\dotsc,\Bdisc$, where $\gprior \in \gcal$.

\subsection{Refined $F$-localization}
\label{subsec:floc_emb}
The $F$-localization~\eqref{eq:DKW} is universal, and so works for any choice of likelihood $p(\cdot \mid \mu)$ in~\eqref{eq:EB}. 
In some settings, however, it is also possible to construct $F$-localizations that are tailored towards properties of a specific choice of likelihood $p(\cdot \mid \mu)$. We provide two such constructions in this section, the Gauss and $\chi^2-F$-localizations. Empirically we observe that our tailored $F$-localizations outperform~\eqref{eq:DKW} in terms of the length of~\eqref{eq:nbhood_worst_case} for the empirical Bayes estimands we consider in our data examples (Section~\ref{sec:applications}) and simulations (Section~\ref{sec:simulations}). It is an interesting future theoretical question to determine how one should choose the $F$-localization to direct power towards specific empirical Bayes estimands and likelihoods. However such considerations are outside the scope of this work.

\paragraph{Gauss-$F$-localization:} Our first tailored $F$-localization is applicable in the Gaussian empirical Bayes problem, i.e.,~\eqref{eq:EB} with \smash{$\Zo \cond \mu \; \sim \nn(\mu, \sigma^2)$} with known noise variance $\sigma^2>0$. In this case, the marginal density $f_{\gprior}$ is the convolution of $\gprior$ with the Gaussian density, and so is extremely smooth and can be estimated at quasi-parametric rates~\citep{kim2014minimax}. Here we build on this observation and seek to construct confidence intervals in terms of \smash{$\Norm{f}_{\infty,M} := \sup_{\zo \in [-M,M]} \abs{f(\zo)}$}, the supremum norm for the Lebesgue density of $F$ on the compact set $[-M,M]$, $M>0$. To this end, we first form point estimates of $f_{\gprior}(\cdot)$ using the kernel density estimator (KDE)
\begin{equation}
\label{eq:kde}
\hat{f}^{\text{K}}_n(\zo) = \frac{1}{n h_n} \sum_{i=1}^n K\p{\frac{\Zo_i-\zo}{h_n}},\;\;K(\zo) = \frac{\sin^2(1.1\zo/2)-\sin^2(\zo/2)}{\pi \zo^2/ 20},\;\;h_n = \frac{\sigma}{\sqrt{\log(n)}}.
\end{equation}
$K(\cdot)$ is a smoothing kernel of infinite order\footnote{$K(\cdot)$ is a superkernel~\citep{devroye1992note}, i.e., it is absolutely integrable, integrates to $1$ and its characteristic function is equal to $1$ on $[-1,1]$.} that was studied by~\citet{politis1993family}. We form confidence bands using Efron's multinomial bootstrap~\citep{efron1979bootstrap}. In the $b$-th bootstrap resample, we draw $(W_1^b, \dotsc, W_n^b) \sim \text{Multinomial}\p{n, (1/n,\dotsc,1/n)}$ and compute
\begin{equation}
\label{eq:kde_bootstrap} 
\hat{c}_n^b = \Norm{\hat{f}^{\text{K}}_n - \hat{f}^{\text{K},b}_n}_{\infty,M},  \;\; \hat{f}^{\text{K},b}_n(\zo) = \frac{1}{n h_n}\sum_{i=1}^n W_i^b K\p{\frac{\Zo_i-\zo}{h_n}}.
\end{equation}
The proposition below defines the Gauss-$F$-localization and proves its asymptotic validity.
\begin{prop}[Coverage of $\Norm{\cdot}_{\infty,M}$ localization in the Gaussian empirical Bayes problem]
\label{prop:floc_kde}
Assume~\eqref{eq:EB} holds with $p(\cdot \mid \mu) = \mathcal{N}(\mu, \sigma^2)$. Let $\hat{c}_n(\alpha)$ be the $(1-\alpha)$-quantile of $\hat{c}_n^b$ with respect to the Bootstrap distribution in~\eqref{eq:kde_bootstrap}. Then,
\begin{equation}
\label{eq:floc_kde}
\ff_n^{\text{Gauss}}(\alpha) = \cb{F \text{ distribution with Lebesgue density }f: \; \Norm{f - \hat{f}^{\text{K}}_n}_{\infty,M} \leq  \hat{c}_n(\alpha)}
\end{equation}
asymptotically covers the true distribution at level $1-\alpha$, in the sense of~\eqref{eq:dbn_nbhood}.
\end{prop}
Similarly to~\eqref{eq:DKW},~\eqref{eq:floc_kde} also enforces linear constraints on \smash{$F_{\gprior},\, \gprior \in \gcal$}. Hence, if \smash{$\gcal$} may be represented using linear constraints, then an analogous linear program to~\eqref{eq:dkw_loc_opt} can be used to compute $\htheta^+_\alpha(\zo)$.  More generally, whenever \smash{$\gcal$} and \smash{$\ff_n(\alpha)$} may be represented by linear constraints, then, 
by the \citet{charnes1962programming} transformation, \smash{$\htheta^+_\alpha(\zo)$}~\eqref{eq:nbhood_worst_case} may be computed by the following linear program,
\begin{align}
\label{eq:charnescooper}
&\htheta^+_\alpha(\zo) = \sup\cb{a_{\widetilde{\gprior}}(\zo) \mid f_{\widetilde{\gprior}}(\zo)= 1,\widetilde{G} \in \zeta \gcal(\ff_n(\alpha)),\; \zeta \geq 0},\; \text{where } \\
&\zeta \gcal(\ff_n(\alpha)) = \cb{\zeta\cdot \gprior \cond   \gprior \in \gcal(\ff_n(\alpha))},\;  f_{\widetilde{\gprior}}(\zo) = \int p(\zo \cond \mu)d\widetilde{\gprior}(\mu),\;a_{\widetilde{\gprior}}(\zo) = \int h(\mu)p(\zo \cond \mu)d\widetilde{\gprior}(\mu).    \notag
\end{align}

\paragraph{$\chi^2$-$F$-localization:} Our second construction pertains to categorical likelihoods, i.e., when $Z_i \in \zz$ and $\zz$ is a finite set with $\#\zz = N+1$, $N \in \NN$. It is based on Pearson's $\chi^2$ distance,
\begin{equation}
\label{eq:floc_chisq}
\ff_n^{\chi^2}(\alpha) = \cb{F \in \pp(\zz) \text{ with pmf }f:\; \sum_{\zo=0}^N \frac{(n \hat{f}_n(\zo) - n f(\zo))^2}{n f(\zo)} \leq \chi^2_{N,1-\alpha}},
\end{equation}
where \smash{$\hat{f}_n(\zo) = \#\cb{\Zo_i = \zo}/n$} is the empirical probability of $\zo$ and  \smash{$\chi^2_{N,1-\alpha}$} is the $1-\alpha$ quantile of the \smash{$\chi^2$} distribution with $N$ degrees of freedom. The validity of~\eqref{eq:floc_chisq} in the sense of~\eqref{eq:dbn_nbhood} follows from standard asymptotics in categorical data analysis~\citep{agresti2013categorical} and coverage will (approximately) hold in finite samples as long as $n \cdot f_{\gprior}(\zo)$ is sufficiently large for all $\zo$. 

The $\chi^2$-$F$-localization approach to inference in the empirical Bayes problem is not new. \citet{lord1975empirical} and \citet{lord1976interval} considered the Binomial problem with \smash{$Z_i \cond \mu_i \; \sim \text{Binom}(N, \mu_i)$} for $N \in \NN$ and \smash{$\gcal = \pp([0,1])$}. They suggested to form confidence intervals for the posterior mean \smash{$\theta_{\gprior}(\zo) = \EE[\gprior]{\mu \cond \Zo=\zo}$} by the $F$-localization approach~\eqref{eq:nbhood_worst_case} with \smash{$\ff_n(\alpha)$} as in~\eqref{eq:floc_chisq}.\footnote{It may seem surprising that~\citet{lord1975empirical} consider only the case of a Binomial likelihood, $\gcal = \pp([0,1])$ and posterior mean estimands. One reason is that they devise a numerical scheme for computing~\eqref{eq:nbhood_worst_case} that relies on these choices, cf. Section~\ref{subsec:lord_dataset}.}
For the $F$-localization intervals in the introductory Poisson example (Table~\ref{tab:bichsel_intro}), we used the $\chi^2-F$-localization with categories $0,\dotsc,4$ and grouping all observations $Z_i \geq 5$ as a sixth category. 

For the $\chi^2$-$F$-localization, the Charnes-Cooper transformation~\eqref{eq:charnescooper} is not directly applicable, yet the resulting optimization problem is quasi-convex and tractable; see Supplement~\ref{sec:worst_case_convex} for details.

\section{Inference for linear functionals of $\gprior$}
\label{sec:linear_functionals}

Our next goal is to develop the AMARI approach for targeted inference about $\theta_{\gprior}(\zo)$. However, as a preliminary for this task, we need to develop some general results on inference for general linear functionals $L(\gprior)$ in the empirical Bayes problem; and this will be the focus of this Section. Formally, $L(\cdot)$ is a map from $\gcal \to \RR$, that is linear in $\gprior$, i.e.,
\begin{equation}
\label{eq:lin_functional}
L\p{   \lambda \gprior + (1-\lambda) \widetilde{\gprior}} =\lambda L(\gprior) + (1-\lambda) L(\widetilde{\gprior}) \text{ for all } \gprior, \widetilde{\gprior} \in \gcal,\,\,\lambda \in [0,1].
\end{equation} 
The main reason we are interested in confidence intervals for $L(\gprior)$ is that we will use these as building blocks of the \Amari~intervals for \smash{$\theta_{\gprior}(\zo)$} in Section~\ref{sec:amari}. Nevertheless, the class~\eqref{eq:lin_functional} includes functionals that are interesting in their own right.  Some examples of linear functionals of interest include \smash{$L(\gprior) = \PP[\gprior]{\mu = 0}$} (the proportion of null effects), \smash{$L(\gprior) = \PP[\gprior]{\mu \geq 0}$} (the proportion of non-negative effects), and \smash{$L(\gprior) = \EE[\gprior]{\mu^2}$} (the second moment of the prior). Inference for \smash{$\PP[\gprior]{\mu \geq 0}$} has been considered for example by~\citet{es2005asymptotic, dattner2011deconvolution, efron2016empirical}.  \citet{greenshtein2018application} and \citet{brennan2020estimating} form confidence intervals for \smash{$L(\gprior)$} using constructions that are analogous to the $F$-localization intervals developed in this work.\footnote{Solving~\eqref{eq:nbhood_worst_case} for linear functionals is typically more straightforward compared to ratio functionals~\eqref{eq:ratio}. For example, the \citet{charnes1962programming} transformation is not required.} Such $F$-localization intervals have simultaneous coverage over all possible choices of (linear) functionals, but can be overly wide for a specific linear functional \smash{$L(\gprior)$}. Instead, the intervals we develop in this section are targeted towards a specific linear functional and so can be shorter.

\subsection{Affine minimax inference for linear functionals}
\label{subsec:linear_inference}

Our key idea is to estimate the linear functional \smash{$L(\gprior)$} of $\gprior$ with an \emph{affine} estimator, i.e., an estimator of the form \smash{$\hL = \sum Q(\Zo_i)/n$}~\eqref{eq:Q}. The class of affine estimators is convenient because it enables explicit control of the worst case bias~\eqref{eq:bias_est} and it is broad enough to include kernel density estimators as in~\eqref{eq:kde} and the Fourier estimators of~\citet{butucea2009adaptive, pensky2017minimax}. The latter provably attain minimax optimal rates for estimation of linear functionals of $\gprior$ in the empirical Bayes problem, when $p(\cdot \cond \mu)$ is a smooth location family.

 We choose $Q(\cdot)$ in a purely computational and data-driven way. To do so, we first construct a pilot $F$-localization \smash{$\ff_n = \ff_n(\alpha_n)$} with $\alpha_n \to 0$ and a pilot estimate \smash{$\barf_n(\cdot)$} of the marginal density \smash{$f_{\gprior}(\cdot)$}~\eqref{eq:marginal_density}. $\ff_n$ could be, for example, any of the $F$-localizations described in Section~\ref{sec:Flocalization}. For \smash{$\barf_n(\cdot)$} we use the Kolmogorov-Smirnov minimum distance estimator, that was studied in the empirical Bayes problem by~\citet{deely1968construction} and~\citet{heinrich2018strong}:\footnote{Case-by-case constructions would be possible here too or one could use the NPMLE.}
\begin{equation}
\label{eq:minimum_ks_fhat}
\barf_n(\zo) = f_{\hG_n}(\zo),\;\; \hG_n \in \argmin_{\gprior \in \gcal}\cb{\sup_{t \in \mathbb R}\abs{F_{\gprior}(t) - \widehat{F}_n(t)}}.
\end{equation}
$\ff_n, \barf_n$ enable us to navigate a bias-variance trade-off in choosing $Q(\cdot)$. $\barf_n$ facilitates estimating the variance of any fixed $Q(\cdot)$, through the quadratic form (in $Q(\cdot)$),
\begin{equation}
\hVar[\barf_ n]{Q} = \int Q^2(\zo) \barf_n(\zo) d\lambda(\zo) - \p{\int Q(\zo) \barf_n(\zo) d\lambda(\zo)}^2.
\end{equation}
$\ff_n$ facilitates computation of the $F$-localized worst-case bias~\eqref{eq:bias_est} of $Q(\cdot)$ among all priors $\gprior_n \in \gcal_n := \gcal(\ff_n)$~\eqref{eq:localized_gcal}. With these two ingredients, we choose $Q(\cdot) = Q_n(\cdot)$ in a data-driven way, by minimizing the localized worst-case bias, subject to controlling the estimated variance of $Q(\cdot)$:
\begin{equation}
\label{eq:minimax_problem_tractable1}
\underset{Q(\cdot) \in \RR}{\text{minimize}}\;\; \sup_{\gprior \in \gcal_n}\cb{\Bias[\gprior]{Q,L}^2} \text{ s.t. } \frac{1}{n}\hVar[\barf_ n]{Q} \leq \Gamma_n, \;  Q(\cdot)  \text{ constant in } \RR\setminus [-M,\,M].
\end{equation}
Here, \smash{$M>0$} is a (large) constant, and we restrict attention to functions \smash{$Q(\cdot)$} that are constant outside the interval \smash{$[-M, \, M]$} to avoid regularity issues at infinity and so that our inference is not unduly sensitive to outliers. $\Gamma_n >0$ is a hyperparameter that controls the bias-variance trade-off. For smaller values of $\Gamma_n$, we enforce that the $Q(\cdot)$ solving~\eqref{eq:minimax_problem_tractable1} takes on smaller values of \smash{$\hVar[\barf_ n]{Q}$}, at the cost of potentially increasing the $F$-localized worst-case bias. We explain how we choose $\Gamma_n$ below, after first outlining how we solve~\eqref{eq:minimax_problem_tractable1}.

First, to (formally) enforce $Q(\cdot)$ to be constant outside $[-M,M]$ as in~\eqref{eq:minimax_problem_tractable1}, we (formally) censor $\Zo$ in~\eqref{eq:EB} and define
\begin{equation}
\label{eq:tilde_Z}
\Zo^M = \Zo \text{ if } \Zo \in [-M,M],\; \Zo^M =  \triangleleft \text{ if } \Zo < -M,\; \Zo^M = \triangleright \text{ if } \Zo > M.
\end{equation}
In view of~\eqref{eq:tilde_Z}, we only need to define $Q(\cdot)$ on the set $\cb{\triangleleft, \triangleright}\cup [-M,M]$. $\Zo^M$ has conditional density \smash{$p^M(z \cond \mu) = p(z \cond \mu)$} for $z \in [-M,M]$, \smash{$p^M(\triangleleft \cond \mu) = \int_{(-\infty,M)} p(z \cond \mu) d\lambda(z)$} and \smash{$p^M(\triangleright \cond \mu)=\int_{(M,\infty)} p(z \cond \mu) d\lambda(z)$} with respect to the measure \smash{$\lambda^M =\delta_{\triangleleft} + \delta_{\triangleright} + \lambda$}, where $\delta_x$ is a point mass at $x$. The true marginal density \smash{$f_{\gprior}^M$} (and estimated density \smash{$\barf_n^M$}) of \smash{$Z^M$} is supported on \smash{$\cb{\triangleleft, \triangleright}\cup [-M,M]$} with \smash{$f_{\gprior}^M(\triangleleft) = \int_{(-\infty,M)}f(\zo)d\zo$}, \smash{$\barf_{n}^M(\triangleleft) = \int_{(-\infty,M)}\barf_n(\zo)d\zo$}, and similarly for $\triangleright$ and $\zo \in [-M,M]$.

To solve~\eqref{eq:minimax_problem_tractable1} we build upon a construction of~\citet{donoho1994statistical}, who formalizes a powerful heuristic due to Charles Stein on hardest one-dimensional subproblems. We refer the interested reader to Supplement~\ref{subsec:stein_heuristic} for details and proofs in the context of our application and also to~\citet{donoho1989hardest,donoho1994statistical,low1995bias,armstrong2018optimal} and references therein. The consequence of interest here is that to solve~\eqref{eq:minimax_problem_tractable1}, it suffices to solve the following surrogate optimization problem: 
\begin{equation}
\label{eq:continuous_modulus_problem}
\begin{split}
\sup\bigg\{& L(\gprior_1) - L(\gprior_{-1}) \, \cond \, \gpriorL, \gpriorR \in \gcal_n,\, \int  \frac{ \big(f^M_{\gprior_1}(\zo) - f^M_{\gprior_{-1}}(\zo)\big)^2}{\barf_n^M(\zo)}\ d\lambda^M(\zo)  \leq  \frac{\delta^2}{n} \bigg\}.
\end{split}
\end{equation}
The surrogate optimization problem is parameterized by another hyperparameter $\delta$\footnote{$\delta$ maps to the hyperparameter $\Gamma_n$ of \eqref{eq:minimax_problem_tractable1}, see below.} and it is a second order conic program (SOCP)~\citep{boyd2004convex} that is tractable by modern conic optimizers, such as MOSEK~\citep{mosek}.\footnote{See Supplement~\ref{sec:amari_computation} for implementation details including discretization considerations.} The value of the supremum in~\eqref{eq:continuous_modulus_problem} is called the modulus of continuity $\omega_n(\delta)$ at $\delta>0$. 
We say that the modulus problem~\eqref{eq:continuous_modulus_problem} is solvable at $\delta>0$ if there exist
\smash{$\gprior_1^{\delta},\gprior_{-1}^{\delta} \in \gcal_n$} such that \smash{$|L(\gprior_1^{\delta})|, |L(\gprior_{-1}^{\delta})| < \infty$} and
\begin{equation}
\label{eq:modulus_solutions}
L(\gprior_1^{\delta}) - L(\gprior_{-1}^{\delta}) \,= \, \omega_n(\delta),\;\; n\cdot \int  \p{f^M_{\gprior^{\delta}_1}(\zo) - f^M_{\gprior^{\delta}_{-1}}(\zo)}^2\Big/\barf_n^M(\zo)\ d\lambda^M(\zo) \, =\, \delta^2, 
\end{equation}
and we call \smash{$\gprior_1^{\delta},\gprior_{-1}^{\delta}$} solutions of $\omega_n(\delta)$. \smash{$\gprior_1^{\delta},\gprior_{-1}^{\delta}$} are close observationally, that is their marginal distributions have distance at most $\delta/\sqrt{n}$  in terms of the `pseudo'-$\chi^2$-distance in~\eqref{eq:continuous_modulus_problem}, and they exhibit the largest separation of the linear functional $L(\gprior)$. These two priors determine the worst-case optimal $Q(\cdot)$ in~\eqref{eq:minimax_problem_tractable1} at a specific value of $\Gamma_n$ that depends on $\delta$, i.e., $\Gamma_n=\Gamma_n(\delta)$. Let \smash{$\gprior_0^{\delta}  = (\gprior_1^{\delta} + \gprior_{-1}^{\delta})/2$} and define \smash{$Q(\cdot) = Q(\cdot\,\,; \delta) = Q(\cdot\,\,; \delta, \omega_n'(\delta), \gprior_1^{\delta}, \gprior_{-1}^{\delta})$} as,
\begin{equation}
\label{eq:optimal_Q}
\frac{n\cdot \omega'_n(\delta)}{\delta}\cb{ \frac{ f^M_{\gprior_1^{\delta}}(\cdot) - f^M_{\gprior_{-1}^{\delta}}(\cdot)}{\barf^M(\cdot)} \, - \, \displaystyle{\int} \frac{\p{f^M_{\gprior_1^{\delta}}(\zo) - f^M_{\gprior_{-1}^{\delta}}(\zo)}f^M_{\gprior_{0}^{\delta}}(\zo)}{\barf^M(\zo)}d\lambda^M(\zo)} \, + \, L(\gprior_0^{\delta}).
\end{equation}
Here $\omega'_n(\delta)$ is the derivative of $\omega_n(\cdot)$ at $\delta$, in case it exists, or an element of the superdifferential of $\omega_n(\cdot)$ at $\delta$ otherwise.\footnote{
\label{footnote:superdifferential}
That is, $\omega_n'(\delta)$  satisfies, $\omega_n(\tilde{\delta}) \leq \omega_n(\delta) \, + \, \omega_n'(\delta) (\tilde{\delta} - \delta) \;\text{ for all }\; \tilde{\delta} >0$. Such an element exists, because $\omega_n(\delta)$ is concave in $\delta>0$~\citep{rockafellar1970convex}. We provide details in Supplement~\ref{sec:modulus_properties}.}  $Q(\cdot, \delta)$ from~\eqref{eq:optimal_Q} is optimal for the min-max problem~\eqref{eq:minimax_problem_tractable1} at \smash{$\Gamma_n = \omega_n'(\delta)^2$}.  The worst-case squared-bias \smash{$\hB^2$}~\eqref{eq:bias_est} of $Q(\cdot)$ over $\gcal_n$ is equal to \smash{$(\omega_n(\delta)-\delta \omega_n'(\delta))^2/4$}. To navigate the bias-variance trade-off, we allow $\delta=\delta_n$ to vary with $n$, and we choose $\delta_n$ as the minimizer of the worst-case mean squared error, 
\begin{equation}
\label{eq:delta_MSE}
 \p{\omega_n(\delta)-\delta \omega_n'(\delta)}^2/4 \,+\, \omega_n'(\delta)^2 \;= \; \sup \cb{ \Bias[\gprior]{Q(\cdot,\delta),L}^2 : \gprior \in \gcal_n} \, + \, \Gamma_n(\delta),
\end{equation} 
among all $\delta \in \Delta\subset (0,\infty)$, for a set $\Delta$ bounded away from $0$ and $\infty$,\footnote{In our implementation, we use the concrete choice $\Delta=\Delta_{\text{grid}} := \cb{0.2, 0.4, \dotsc, 6.5, 6.7}$.} for which the modulus problem is solvable. We then construct $Q(\cdot) = Q(\cdot, \delta_n)$, which is optimal for~\eqref{eq:minimax_problem_tractable1} with \smash{$\Gamma_n = \Gamma_n(\delta_n) = \omega_n'(\delta_n)^2$} and finally we form the confidence interval for $L(\gprior)$ as described in Section~\ref{subsubsec:amari}. 

Algorithm~\ref{alg:linear_ci} summarizes our proposal for inference of linear functionals $L(\gprior)$. We prove the asymptotic coverage of the proposed confidence intervals by leveraging the representation of $Q(\cdot)$ in~\eqref{eq:optimal_Q} and by verifying a central limit theorem for \smash{$\hL$}~\eqref{eq:Q}.

\begin{algorithm}[t]
  \caption{Affine minimax confidence intervals for linear functionals $L(\gprior)$. \label{alg:linear_ci}}
  Form a pilot estimate $\barf_n(\cdot)$ of the marginal density $f_{\gprior}(\cdot)$ as in~\eqref{eq:minimum_ks_fhat} and a pilot $F$-localization~\eqref{eq:dbn_nbhood} \smash{$\ff_n = \ff_n(\alpha_n)$}.\;
  Choose $\delta_n$ as in~\eqref{eq:delta_MSE}.\;
  Solve the modulus problem~\eqref{eq:continuous_modulus_problem} at $\delta_n$ and use the solution to compute \smash{$Q(\cdot)$} as in~\eqref{eq:optimal_Q}, as well as its worst-case bias \smash{$\hB$}.\;
  Form the estimate \smash{$\hL$} of $L(\gprior)$ as in~\eqref{eq:Q} and its estimated variance \smash{$\hV$} as in~\eqref{eq:sample_var_est}.\;
  Form bias-aware confidence intervals as in~\eqref{eq:im_iw_ci}.\;
\end{algorithm}

\begin{theo}[Central limit theorem for affine minimax estimator] \label{theo:lin_functional_clt} 
Assume that for all \smash{$\widetilde{\gprior} \in \gcal$}, the linear functional \smash{$L(\widetilde{\gprior})$} is well-defined with \smash{$\sup_{\widetilde{\gprior}} |L(\widetilde{\gprior})| < \infty$} and that \smash{$f_{\widetilde{\gprior}}^M(\cdot) \in \mathcal{L}^2(\lambda^M)$}. Furthermore, assume that,
\begin{enumerate}[label=\Alph*.,leftmargin=*]
\item For each $n$, the modulus problem~\eqref{eq:continuous_modulus_problem} has solutions $\gpriorL^{\delta_n},\gpriorR^{\delta_n}$ at $\delta_n \in [\delta^{\ell}, \delta^u]$, where  $\delta^{\ell}, \delta^u \in (0, \infty)$ are fixed (i.e., do not change with $n$).
\item  $\gprior$ lies in the convex set of distributions $\gcal$ ($\gprior \in \gcal$). 
\item There exists $\eta>0$ s.t. $\inf_{\zo \in \cb{\triangleleft, \triangleright}\cup[-M,M]} \cb{ f_G^M(\zo)} > \eta$. 
\item  $\barf_n^M$ is a density ($\int \barf_n^M(\zo)d\lambda^M(\zo)=1$ and $\barf_n^M \geq 0$). It holds that $\PP[G]{A_n} \to 1$, where $A_n$ is the event on which,
\begin{equation} 
\label{eq:pilot_and_localization_quality}  
\NormInline{\barf^M_n(\cdot)-f^M_{\gprior}(\cdot)}_{\infty} \leq c_n,\;\; \NormInline{f^M_{\gpriorL^{\delta_n}}(\cdot)-f^M_{\gpriorR^{\delta_n}}(\cdot)}_{\infty} \leq c_n, \;\; F_{\gprior} \in \ff_n,
\end{equation}
for a sequence of constants $c_n \to 0$ as $n \to \infty$.
\item $\ff_n$ and $\barf_n^M$ are independent of $Z_1,\dotsc,Z_n$.
\end{enumerate}
Then, letting $Q(\cdot)=Q(\cdot;\,\, \delta_n)$~\eqref{eq:optimal_Q} and $\hL$~\eqref{eq:Q} the affine estimator of the linear functional $L(G)$, it holds that, 
\begin{equation*}
\label{eq:CLT}
\left(\hL - L(\gprior) - \Bias[\gprior]{Q,L}\right)\,\Big/\,\hV^{1/2} \; \xrightarrow[]{\mathcal{D}} \; \mathcal{N}(0,1),\;\; \mathbb P_{\gprior}[\lvert\Bias[\gprior]{Q,L}\rvert \leq \hB] \to 1 \text{ as } n\to \infty,
\end{equation*}
where \smash{$\Bias[\gprior]{Q,L}$} is defined in~\eqref{eq:bias_est}. It follows that the intervals~\eqref{eq:im_iw_ci} provide asymptotically correct coverage of the target $L(\gprior)$, i.e., $\liminf_{n \to \infty} \PP[\gprior]{ L(\gprior) \in \ii_\alpha} \geq 1-\alpha$.\footnote{Although the result stated only holds elementwise, we can obtain a uniform statement
in the sense of, e.g., \citet{robins2006adaptive} by adding slightly more constraints on the class $\gcal$.
Specifically, consider \smash{$\gcal^{\eta} = \cb{ \gprior \in \gcal \,:\, \inf_{\zo \in \cb{\triangleleft, \triangleright}\cup[-M,M]} \cb{ f_G^M(\zo)} > \eta}$}, for some $\eta >0$, i.e., qualitatively, prior distributions for which the induced marginal density cannot vanish anywhere.
The proof of Theorem \ref{theo:lin_functional_clt} implies that
then the above statements apply uniformly over $\gcal^{\eta}$, \smash{$\liminf_{n \to \infty} \inf\cb{\PP[\gprior]{ L(\gprior) \in \ii_\alpha} : \gprior \in \gcal^{\eta}} \geq 1-\alpha,$}
provided that $\PP[\gprior]{A_n} \to 1$ uniformly in $\gprior \in \gcal^{\eta}$, where $A_n$ has been defined in the statement of Theorem~\ref{theo:lin_functional_clt}.}

\end{theo}
We emphasize that $Q(\cdot)$ changes with $n$, and so, the central limit theorem above is that of a triangular array. The statistical assumption driving Theorem~\ref{theo:lin_functional_clt} is Assumption B, namely that model~\eqref{eq:EB} holds with $G \in \gcal$. 
Using a good choice of {\textprior} class $\gcal$ is critical, and we discuss this choice further in Section~\ref{sec:gcal}. The rest of the assumptions are under control of the analyst and may be verified before any data analysis is conducted. Assumption A guarantees that we can solve~\eqref{eq:minimax_problem_tractable1} by convex programming, cf. Supplement~\ref{subsec:stein_heuristic}, and Assumption C is an overlap condition. Assumption D concerns the quality of the pilot localization $\ff_n$ and pilot density estimator $\barf_n$, while Assumption E requires that both \smash{$\ff_n$} and \smash{$\barf_n$} are independent of $Z_1,\dotsc,Z_n$. Assumption E holds if we use sample-splitting. Following \citet{hajek1962asymptotically} and~\citet{bickel1982adaptive} we demonstrate that it suffices to retain an asymptotically vanishing fraction of the $Z_i$ to estimate $\ff_n,\barf_n$. 
\begin{prop}
\label{prop:applications}
Suppose we use $k=k_n$ samples from model~\eqref{eq:EB} to construct $\ff_n(\alpha_n)$, $\barf_n$ and the remaining $n-k_n$ samples for step 4 of Algorithm~\ref{alg:linear_ci}. Suppose further that $k/n \to 0$, $\alpha_n \to 0$ and $k \cdot \alpha_n \to \infty$ as $n\to \infty$. Then, Assumption D of Theorem~\ref{theo:lin_functional_clt}, is satisfied in the following cases:
\begin{enumerate}[leftmargin=*]
\item \textbf{Gaussian likelihood:} $Z \cond \mu  \sim \nn(\mu, \, \sigma^2)$, and we use the DKW~\eqref{eq:DKW}
 or the Gauss-$F$-localization~\eqref{eq:floc_kde} and $\barf_n$ is the minimum distance estimator~\eqref{eq:minimum_ks_fhat}.
\item \textbf{Binomial or Poisson likelihood:} $Z\cond \mu  \sim \text{Binom}(N, \, \mu)$ or $\sim \Poisson{\mu}$, $\gprior$ is not degenerate\footnote{That is, $\PP[\gprior]{\mu \in \cb{0,1}} < 1$ in the Binomial case, and $\PP[\gprior]{\mu = 0} < 1$ in the Poisson case.} and we use the DKW~\eqref{eq:DKW}
 or the $\chi^2$-$F$-localization~\eqref{eq:floc_chisq} (wherein, in the Poisson case, we treat all observations $> M$ as a single category) and $\barf_n$ is the minimum distance estimator~\eqref{eq:minimum_ks_fhat}.
\end{enumerate}
\end{prop}

In practice, we use the full data twice and do not sample-split; we have not observed any overfitting or loss of coverage thereby.

\section{Pointwise confidence intervals with \Amari}
\label{sec:amari}

In this section we return to our main task of forming confidence intervals for empirical Bayes estimands $\theta_{\gprior}(\zo)$ and discuss our second approach, \Amari~(Affine Minimax Anderson--Rubin Intervals). In contrast to the $F$-localization intervals, the \Amari~intervals are targeted towards a specific empirical Bayes estimand and have a pointwise (rather than simultaneous) coverage guarantee. The upshot is that \Amari~intervals can be substantially shorter.

The starting point for \Amari~is~\eqref{eq:ratio}, i.e., the fact that we can write the empirical Bayes estimand \smash{$\theta_{\gprior}(\zo)$} as a ratio of linear functionals of $G$, \smash{$a_{\gprior}(\zo)/f_{\gprior}(\zo)$}. Next, fix $c \in \RR$ and write,
\smash{$L(\gprior)\,:=\,\theta_{\gprior}^{\text{lin}}(\zo; c)\,:=\, a_{\gprior}(\zo) - c f_{\gprior}(\zo)$} as in~\eqref{eq:hypothesis_test_fractional}.
One may directly verify that $L(\cdot)$ is a linear functional of $\gprior$, as defined in~\eqref{eq:lin_functional}. 
The \emph{Affine Minimax} component of the \Amari~acronym refers to the fact that we will use the affine minimax approach of Section~\ref{subsec:linear_inference} to form confidence intervals \smash{$\ii_{\alpha}^{\text{lin}}(\zo; c)$} for \smash{$\theta_{\gprior}^{\text{lin}}(\zo; c)$},
treating the latter as a generic linear functional \smash{$L(\gprior)$}. The \emph{Anderson-Rubin} component of \Amari~enables us to construct confidence intervals for $\theta_{\gprior}(\zo)$ by lifting our intervals for $\theta_{\gprior}^{\text{lin}}(\zo; c)$ following the approach of~\citet{noack2019bias}, who in turn build upon \citet{anderson1949estimation} and \citet{fieller1940biological,fieller1954some}. 

The following Corollary captures the basic idea of our approach:
\begin{coro}
\label{coro:naive_ar}
Suppose that for each $c \in \RR$, we construct a confidence interval $\ii_{\alpha}^{\text{lin}}(\zo; c)$ for $\theta_{\gprior}^{\text{lin}}(\zo; c)$~\eqref{eq:hypothesis_test_fractional} following Algorithm~\ref{alg:linear_ci}. Suppose furthermore that the assumptions of Theorem~\ref{theo:lin_functional_clt} hold for the linear functional \smash{$L(\gprior)=\theta_{\gprior}^{\text{lin}}(\zo; c^*)$}, where \smash{$c^* = \theta_{\gprior}(\zo)$}. Then,
\begin{equation}
\label{eq:naive_ar_set}
\liminf_{n \rightarrow \infty} \PP[\gprior]{\theta_{\gprior}(\zo) \in \mathcal{S}_\alpha(\zo)} \geq 1 - \alpha, \;\;\; \mathcal{S}_{\alpha}(\zo) = \cb{c \in \RR \cond  0 \in \ii_{\alpha}^{\text{lin}}(\zo;c)}.
\end{equation}
\end{coro}
\begin{proof}
By definition of $c^*$ and $\mathcal{S}_\alpha(\zo)$, it holds that \smash{$\theta_{\gprior}(\zo) \in \mathcal{S}_\alpha(\zo) \; \Longleftrightarrow \; 0 \in \ii_{\alpha}^{\text{lin}}(\zo; c^*)$}.
On the other hand, \smash{$\theta_{\gprior}^{\text{lin}}(\zo; c^*) = a_{\gprior}(\zo) - c^* f_{\gprior}(\zo) =  a_{\gprior}(\zo) - (a_{\gprior}(\zo)/f_{\gprior}(\zo)) f_{\gprior}(\zo) = 0,$}
and therefore \smash{$\PP[\gprior]{\theta_{\gprior}(\zo) \in \mathcal{S}_\alpha(\zo)} = \PP[\gprior]{\theta_{\gprior}^{\text{lin}}(\zo; c^*) \in   \ii_{\alpha}^{\text{lin}}(\zo;c^*)}$}. We conclude by Theorem~\ref{theo:lin_functional_clt}.
\end{proof}
We provide Corollary~\ref{coro:naive_ar} for intuition. However, the confidence set $\mathcal{S}_\alpha(\zo)$ from~\eqref{eq:naive_ar_set} has some disadvantages. First, $\mathcal{S}_\alpha(\zo)$ will in general not be an interval. Second, computing the interval $\ii_{\alpha}^{\text{lin}}(\zo;c)$, even for a single $c$, is computationally demanding and requires the solution of~\eqref{eq:continuous_modulus_problem} along a grid of $\delta$ values~\eqref{eq:delta_MSE}, and so, computing $\ii_{\alpha}^{\text{lin}}(\zo;c)$ for `all' values of $c$ is not computationally tractable. Instead, in our actual implementation of \Amari, which we describe in Section~\ref{subsec:andersonrubin} below, we use an `accelerated' Anderson-Rubin procedure that is computationally streamlined (we only need to form $\ii_{\alpha}^{\text{lin}}(\zo;c)$ for two values of $c$), and leads to a confidence interval $\ii_{\alpha}(\zo)$ for $\theta_{\gprior}(\zo)$, rather than a confidence set.\footnote{The `accelerated' Anderson-Rubin  approach could be fruitful in other settings; for example it could allow replacing the local linear estimators in the fuzzy regression discontinuity approach of~\citet{noack2019bias} by the affine minimax estimators of~\citet{imbens2017optimized}.}

\subsection{Implementation of Anderson-Rubin inversion for \Amari}
\label{subsec:andersonrubin}
We now describe our actual implementation of \Amari. The key intuition is that we start with a preliminary interval such that $\theta_{\gprior}(\zo) \in [c^{\ell}, c^u]$ with high probability, and then find the affine minimax $Q^{\ell},Q^{u}$~\eqref{eq:minimax_problem_tractable1} for $\theta_{\gprior}^{\text{lin}}(\zo; c^{\ell})$, resp. $\theta_{\gprior}^{\text{lin}}(\zo; c^{u})$. Then, for any other $c$, say $c=\kappa c^{\ell} + (1-\kappa)c^u,\; \kappa \in (0,1)$, instead of resolving~\eqref{eq:minimax_problem_tractable1}, we use $Q^c = \kappa Q^{\ell} + (1-\kappa)Q^u$. The variance of $Q^c$ then can be directly computed from the covariance of $Q^{\ell},Q^u$, and their individual variances, while the worst-case bias of $Q^c$ can be upper bounded by the convex combination \smash{$\hB^{c} =  \kappa\hB^{\ell} + (1-\kappa)\hB^{u}$} of the worst-case biases of $Q^{\ell}$ and $Q^u$. Algorithm~\ref{alg:amari_ci} describes all steps of \Amari. Step 4 can be computed efficiently using grid search, since the evaluation of $\widetilde{\ii}_{\alpha}(\zo; c)$ for different values of $c \in [c^{\ell},c^u]$ is fast.

\begin{algorithm}[t]
  \caption{\Amari~confidence intervals for empirical Bayes estimands $\theta_{\gprior}(\zo)$ \label{alg:amari_ci}}
  Construct a pilot $F$-localization $\ff_n = \ff_n(\alpha_n)$ at level $\alpha_n$, as in Step 1 of Algorithm~\ref{alg:linear_ci}. Let \smash{$[c^{\ell}, c^u]$} be the $F$-localization interval~\eqref{eq:nbhood_worst_case} for \smash{$\theta_{\gprior}(\zo)$} based on $\ff_n$.\;
  Apply Algorithm~\ref{alg:linear_ci} to the linear functionals $\theta_{\gprior}^{\text{lin}}(\zo; c^{\ell})$  and $\theta_{\gprior}^{\text{lin}}(\zo; c^{u})$ defined in~\eqref{eq:hypothesis_test_fractional}. Let \smash{$Q^{\ell}, Q^{u}$} be the corresponding affine minimax kernels, \smash{$\hL^{\ell}, \hL^{u}$} the point estimates~\eqref{eq:Q}, \smash{$\hB^{\ell}, \hB^u$} the worst case biases~\eqref{eq:bias_est}, \smash{$\hV^{\ell},\hV^u$} the variances and $\hCov{\hL^{\ell}, \hL^{u}} = \frac{1}{n(n - 1)}\sqb{\sum_{i = 1}^n Q^{\ell}(\Zo_i)Q^{u}(\Zo_i) - \p{\sum_{i=1}^n Q^{\ell}(\Zo_i)}\p{\sum_{i=1}^n Q^u(\Zo_i)}\big/n}.$\;
  For $c=\kappa c^{\ell} + (1-\kappa)c^u, \; \kappa \in [0,1]$, let \smash{$\widehat{L}^{c} = \kappa \widehat{L}^{\ell} + (1-\kappa)\widehat{L}^{u}$}, \smash{$\hB^{c} =  \kappa\hB^{\ell} + (1-\kappa)\hB^{u}$}, \smash{$\hV^c = \kappa^2 \hV^{\ell} + (1-\kappa)^2\hV^{u} + 2\kappa(1-\kappa)\hCov{\hL^{\ell}, \hL^{u}}$} and $\widetilde{\ii}_{\alpha}(\zo; c) = \widehat{L}^{c} \pm t_\alpha(\hB^c, \hV^c)$ as in~\eqref{eq:im_iw_ci}.\;
  Report the interval $\ii_\alpha(\zo) = \sqb{\inf \mathcal{C},\, \sup \mathcal{C}} \cap [c^{\ell},\, c^u]$, $\,\mathcal{C} = \{c \in [c^{\ell}, c^u]:\, 0 \in \widetilde{\ii}_{\alpha}(\zo; c)\}$.
\end{algorithm}

 As a consequence of Theorem~\ref{theo:lin_functional_clt}, we can now prove, that the \Amari~confidence intervals asymptotically cover the empirical Bayes estimand \smash{$\theta_{\gprior}(\zo)$}.
\begin{theo}[Coverage of \Amari~intervals]
\label{theo:amari}
 Consider the confidence intervals constructed in Algorithm~\ref{alg:amari_ci}. Suppose the pilot $F$-localization interval endpoints $c^{\ell}, c^{u}$ are finite for all $n$ and let the assumptions of Theorem~\ref{theo:lin_functional_clt} hold for $L(\gprior)=\theta_{\gprior}^{\text{lin}}(\zo; c^{\ell})$ and $L(\gprior)=\theta_{\gprior}^{\text{lin}}(\zo; c^{u})$.\footnote{The proof of Theorem~\ref{theo:lin_functional_clt} uses triangular array asymptotics, and so, the linear functional $L(\gprior)$ may depend on $n$.} Furthermore, assume that $Q^{\ell}$ and $Q^{u}$ are not perfectly anticorrelated, i.e., there exists $\varepsilon > 0$, such that, \smash{$\mathbb P_{\gprior}[ \widehat{\operatorname{Cov}}[\hL^{\ell}, \hL^{u}] \big /  (\hV^{u} \hV^{\ell})^{1/2} \geq -1+\varepsilon] \to 1 \text{ as } n \to \infty$}. Then, \smash{$\liminf_{n \rightarrow \infty} \PP[\gprior]{\theta_{\gprior}(\zo) \in \ii_\alpha(\zo)} \geq 1 - \alpha.$}
\end{theo}
The additional assumption of Theorem~\ref{theo:amari} on the correlation between $Q^{\ell}$ and $Q^{u}$ is mild and can be verified from the data at hand; in applications we typically find a positive correlation.

\section{Empirical applications}
\label{sec:applications}
In this section we apply the $F$-localization and \Amari~intervals developed above in the context of two different applications; one in education and one in genomics. 

\subsection{Predicting student ability in psychometric tests}
\label{subsec:lord_dataset}

\begin{figure}
\centering
\begin{tabular}{@{}ll@{}}
a) & b)\\
\begin{adjustbox}{width=0.47\linewidth}\input{tikz_figures/lord_cressie_pdf.tikz}\end{adjustbox} & 
\begin{adjustbox}{width=0.48\linewidth}\input{tikz_figures/lord_cressie_posterior_mean.tikz}\end{adjustbox}
\end{tabular}
\caption{\textbf{Empirical Bayes confidence intervals in a psychometric test} with 20 questions taken by $12,990$ students \citep{lord1975empirical}. \textbf{a)} \smash{$\hat{f}_n(z)^{1/2}$} vs. $\zo$, where \smash{$\hat{f}_n(z)$} is the proportion of students that answered $z$ out of 20 questions correctly. \textbf{b)} $95\%$ confidence intervals for the posterior mean  $\theta_{\gprior}(\zo) = \EE[\gprior]{\mu \cond \Zo=z}$.}
\label{fig:lord_cressie}
\end{figure}
\citet{lord1975empirical} studied a dataset of scores by $n=12,990$ students on a psychological test with $N=20$ multiple choice questions (5 choices per question) and posited that $Z_i \cond \mu_i \sim \text{Binom}(20, \mu_i)$, where $\mu_i$ is the `true-score' of student $i$~\citep{lord1969estimating}. \citet{lord1976interval} were working for the Educational Testing Service (ETS) at the time and the following motivation for confidence intervals of the posterior mean $\theta_{\gprior}(\zo) = \EE[\gprior]{\mu \cond \Zo=\zo}$ with the property~\eqref{eq:CI} seems plausible: if the ETS were to use an estimate \smash{$\htheta(\zo)$} for student assessment, then it would be important to account for the uncertainty in estimating the regression function \smash{$\EE[\gprior]{\mu \cond \Zo}$} due to both variability (which can be large even for large sample sizes, e.g., $n=12,990$ in this example) and partial identification.\footnote{In the Binomial empirical Bayes problem (\smash{$p(\cdot \cond \mu) = \text{Binom}(N, \mu)$}, \smash{$N \in \mathbb N_{>0}$}) and without further restrictions on \smash{$\gcal$}, the posterior mean \smash{$\theta_{\gprior}(\zo) = \EE[\gprior]{\mu \cond \Zo=\zo}$} is only \emph{partially identified} and cannot be consistently estimated, even as $n \to \infty$. We discuss this issue further in Section~\ref{subsec:bernoulli_partial}.}

The empirical frequencies \smash{$\hat{f}_n(\zo) = \#\cb{\Zo_i = \zo}/n$} of test scores are shown in Figure~\ref{fig:lord_cressie}a). Panel b) shows three 95\% confidence intervals for $\theta_{\gprior}(\zo)$ that make no assumptions on $\gprior$, i.e., $\gprior \in \gcal = \pp([0,1])$~\eqref{eq:all_dbns}. The $\chi^2$-$F$-localization intervals~\eqref{eq:floc_chisq} were developed by \citet{lord1975empirical, lord1976interval}. We computed these intervals by grouping the lowest scores $\zo=0$ and $\zo=1$ together (to ensure the $\chi^2$-interval~\eqref{eq:floc_chisq} has the right coverage)
and then used the parametric convex programming approach from Supplement~\ref{sec:worst_case_convex} with the discretization $\pp([0,1]) \approx \pp(\Ksupport)$ and $\Ksupport$ an equidistant grid on $[0,1]$ with 300 points.
The intervals agree with the ones reported in~\citet{lord1975empirical}.
The latter used a numerical optimization routine due to Martha Stocking that leveraged the fact that in the Binomial empirical Bayes problem ($N=20$) and when $\theta_{\gprior}(z)$ is the posterior mean, then the worst case $\gprior$ in~\eqref{eq:nbhood_worst_case} must be discrete and supported on at most $11$ points. We also report intervals based on the DKW-$F$-localization, as well as the \Amari~intervals (with pilot $\chi^2$-$F$-localization using $\alpha_n = 0.01$).

We see that, for low scores $\zo$, there is substantial uncertainty. For students with score $\zo=0$, one could predict their true score as being almost $0$, or one could predict their true score as better than random guessing ($1/5$); and both predictions would be consistent with the data. On the other hand, for intermediate values of $\zo$ the intervals become substantially shorter. In this example we also observe that the DKW-$F$-Localization intervals are overly wide. The \Amari~intervals are shorter than the $\chi^2$-intervals for most $\zo$; at the cost of no longer being simultaneous.

\subsection{Identifying genes associated with prostate cancer}
\label{subsec:prostate}

Our next dataset is the `Prostate' dataset~\citep{efron2012large,singh2002gene}, by now a classic dataset used to illustrate empirical Bayes principles.
The dataset consists of Microarray expression levels measurements for $m = 6033$ genes of $52$ healthy men and $50$ men with prostate cancer.
For each gene, a t-statistic $T_i$ is calculated (based on a two-sample equal variance t-test) and $z$-scores are calculated as $\Zo_i = \Phi^{-1}(F_{100}(T_i))$, where $\Phi$ is the standard normal CDF and $F_{100}(\cdot)$ is the CDF of the t-distribution with $100$ degrees of freedom.
We posit \smash{$Z_i \cond \mu_i \approxdot \nn(\mu, 1)$}, where $\mu_i$ is the standardized effect size and is expected to be close to null for most genes~\citep{efron2001empirical}.
We seek to form confidence intervals for two empirical Bayes estimands.

Our first estimand of interest is the posterior mean, \smash{$\theta_{\gprior}(\zo) = \EE[\gprior]{\mu \cond \Zo = \zo}$}, which could be used to denoise the noisy measurements $\Zo_i$ by $\theta_{\gprior}(\Zo_i)$ and to provide estimates of $\mu_i$ that are (nearly) immune to selection bias~\citep{efron2011tweedie}. The standard empirical Bayes approach provides point estimates
of these oracle quantities by sharing information across genes, but the empirical Bayes estimation
error may be rather opaque and so it is not clear to what extent the estimates $\hEE{\mu_i \mid \Zo_i}$
eliminate selection bias. Our confidence intervals attach a measure of uncertainty to the estimation of $\theta_{\gprior}(\zo)$.

Second, we consider the local false sign rate  \smash{$\theta_{\gprior}(\zo) = \PP[\gprior]{\mu \zo \leq 0 \cond \Zo = \zo}$},
which measures the posterior probability that the sign of 
an observed signal $\Zo_i$ disagrees with the sign of the true effect $\mu_i$.
Local false sign rates provide a principled approach to multiple testing without assuming that the distribution
$\gprior$ of the effect sizes $\mu_i$ is spiked at 0, and form an attractive alternative to the local false
discovery rate, $\text{lfdr}(\zo) = \PP[\gprior]{\mu_i = 0 \cond \Zo_i = \zo}$, without requiring a sharp null hypothesis \citep{stephens2016false, zhu2018heavy}. Inferential emphasis is thus placed on whether we can reliably detect the direction of an effect. Below, for ease of visualization,  we report results on the (substantively equivalent)
quantity $\theta_{\gprior}(\zo)=\PP[\gprior]{\mu_i \geq 0 \cond \Zo_i=\zo}$ (instead of $\PP[\gprior]{\mu \zo \leq 0 \cond \Zo = \zo}$) so that the resulting confidence bands are monotonic in $\zo$.

While both the posterior mean and the local false sign rate are routinely reported in the analysis of genomics datasets~\citep{stephens2016false, zhu2018heavy}, the statistical difficulty of estimating them in the Gaussian empirical Bayes model is vastly different. The posterior mean can be estimated at the \emph{quasi-parametric} rate \smash{$\log(n)^{3/4}/\sqrt{n}$} over the class $\gcal$ of priors with Lebesgue density and finite first moment~\citep{matias2004minimax}. Meanwhile, minimax point estimates for the local false sign rate over Sobolev classes of priors converge at \emph{extremely slow rates}, e.g., polynomial in \smash{$1/\log(n)$}~\citep{butucea2009adaptive,pensky2017minimax}. Consequently, we expect that our confidence intervals for the posterior mean will be substantially shorter than the ones for the local false sign rate.

We form $95\%$ confidence intervals using $3 \times 2$ methods, namely the DKW~\eqref{eq:DKW} and Gauss~\eqref{eq:floc_kde}
(with $M=3$) $F$-localization methods, 
as well as \Amari~(with Gauss-$F$-localization pilot, $\alpha_n=0.01$), each applied based on two specifications for $\gcal$. 
First, we consider the Gaussian location mixture
\begin{equation}
 \label{eq:normal_mixing_class}
\law\nn(\tau^2, \Ksupport)  := \cb{G \text{ distribution}:\; \frac{dG\hfill(\mu)}{d\lambda^{\text{Leb}}} =  \int \frac{1}{\tau}\varphi\p{\frac{\mu-u}{\tau}} d\Pi(u),\;\Pi \in \pp(\Ksupport)},
\end{equation}
where $\tau>0$, $\Ksupport \subset \RR$, $\lambda^{\text{Leb}}$ is the Lebesgue measure and $\varphi$ is the standard Gaussian density. This is a natural choice of smooth priors in the Gaussian empirical Bayes problem~\citep{magder1996smooth,cordy1997deconvolution} and the noise level $\tau$  provides an interpretable way of specifying the smoothness of the priors. Here we make the concrete choice $\gcal=\law\nn(0.25^2, [-3.3])$.\footnote{We discretize it as $\law\nn(0.25^2, \Ksupport)$ with $\Ksupport$ an equidistant grid on $[-3,3]$ of step size equal to $0.05$.}

Second, we consider Gaussian scale mixtures with mode at zero, discretized as suggested by~\citet{stephens2016false}, i.e., for $0<\tau_{\ell}<\tau_{u}$ and $\eta>1$:
\begin{equation}
 \label{eq:normal_scale_class}
\mathcal{S}\nn(\tau_{\ell}, \tau_{u}, \eta) := \cb{G \text{ dbn}:\, \frac{dG\hfill(\mu)}{d\lambda^{\text{Leb}}} =  \int\varphi\p{\frac{\mu}{\tau}}  \frac{d\Pi(\tau)}{\tau},\, \Pi \in \pp(\cb{\tau_{\ell}, \eta \cdot \tau_{\ell},  \dotsc,  \tau_{u}})}.
\end{equation}
We take \smash{$\tau_{\ell}=0.1$},  \smash{$\tau_{u} = 10.7 > (\max_i \cb{ Z_i^2 - 1})^{1/2}$} and $\eta = 1.1$.\footnote{\citet{stephens2016false} uses a coarser grid with $\eta = \sqrt{2}$.} \citet{stephens2016false} argues that the unimodal Gaussian scale mixture leads to more accurate inference provided that it holds; our intervals allow a quantitative assessment of this claim for any analyzed dataset.

\begin{figure}
\centering
\begin{tabular}{@{}lll@{}}
a) & c) & e)  \\
\begin{adjustbox}{width=0.32\linewidth}\input{tikz_figures/prostate/prostate_dkw_band.tikz}\end{adjustbox} & 
\begin{adjustbox}{width=0.32\linewidth}\input{tikz_figures/prostate/prostate_locmix_postmean.tikz}\end{adjustbox} & 
\begin{adjustbox}{width=0.32\linewidth}\input{tikz_figures/prostate/prostate_locmix_lfsr.tikz}\end{adjustbox} \\
b) & d) & f) \\
\begin{adjustbox}{width=0.32\linewidth}\input{tikz_figures/prostate/prostate_kde_band.tikz}\end{adjustbox} & 
\begin{adjustbox}{width=0.32\linewidth}\input{tikz_figures/prostate/prostate_scalemix_postmean.tikz}\end{adjustbox} & 
\begin{adjustbox}{width=0.32\linewidth}\input{tikz_figures/prostate/prostate_scalemix_lfsr.tikz}\end{adjustbox} 
\end{tabular}
\caption{\textbf{Empirical Bayes inference for the Prostate dataset}~\citep{efron2012large,singh2002gene}\textbf{. a)} Empirical distribution of the $Z_i$ and DKW-$F$-localization band. \textbf{b)} Histogram of the $Z_i$ and Gauss-$F$-localization. \textbf{c)} $95\%$ confidence intervals for the posterior mean \smash{$\EE[\gprior]{\mu \cond \Zo=\zo}$} assuming $\gprior$ is a Gaussian location, resp. \textbf{d)} scale mixture. \textbf{e)} $95\%$ confidence intervals for the local false sign rate \smash{$\EE[\gprior]{\mu \geq 0 \cond \Zo=\zo}$} assuming $\gprior$ is a Gaussian location, resp. \textbf{f)} scale mixture. }
\label{fig:prostate}
\end{figure}

Figure~\ref{fig:prostate} shows the results of the analysis. For the posterior mean, we observe that all intervals suggest that many effects are close to null and so there is substantial shrinkage towards zero.
\smash{$\EE[\gprior]{\mu \cond \Zo=\zo}$} is almost flat in the interval $[-1.5,1.5]$; and we can say so with confidence.
All four $F$-localization bands are quite similar, while \Amari~leads to substantially shorter intervals, and the improvement is more noticeable for \smash{$\gcal=\law\nn(0.25^2, [-3.3])$}.
For the local false sign rate, the intervals are long when assuming $\gprior \in \law\nn(0.25^2, [-3.3])$. 
The DKW-$F$-localization intervals perform worst, while the \Amari~and Gauss-$F$-localization intervals perform comparably (with \Amari~leading to shorter intervals only for more extreme values of $\zo$).
As explained above, long confidence intervals are expected in this case.
On the other hand, if we are willing to assume that $\gprior$ is a Gaussian scale mixture with mode at $0$, then inference for the local false sign rate is much more precise, exactly as argued by~\citet{stephens2016false}.
However, the assumption is strong, and for example it implies that the local false sign rate at $0$ is equal to $1/2$; all intervals proposed here have vanishing length in that case.

\section{Simulations}
\label{sec:simulations}
The setting of our simulations is similar to the Prostate data example in Section~\ref{subsec:prostate}. We consider model~\eqref{eq:EB} with \smash{$Z \cond \mu \; \sim \nn(\mu,1)$}, $n=5000$, and two different data-generating priors,
\begin{equation}
\begin{aligned}
\label{eq:sim_priors}
&\gprior^{\text{Spiky}} \; = \; 0.4 \nn(0, 0.25^2)  + 0.2 \nn(0, 0.5^2) + 0.2 \nn(0, 1) +  0.2 \nn(0, 2^2), \\
&\gprior^{\text{NegSpiky}} \; = \; 0.8 \nn(-0.25, 0.25^2)  + 0.2 \nn(0, 1).
\end{aligned}
\end{equation}
\smash{$\gprior^{\text{Spiky}}$} was used in the simulations of~\citet{stephens2016false} and is a unimodal symmetric prior centered at $0$, 
while \smash{$\gprior^{\text{NegSpiky}}$} is a prior with strong peak just to the left of zero, reflecting many slightly negative effects. 
Figure~\ref{fig:simulation_priors} shows the Lebesgue densities of the priors and the induced marginal densities $f_{\gprior}(\zo)$.

\begin{figure}
\centering
\begin{tabular}{@{}ll@{}}
a) & b)  \\
\begin{adjustbox}{width=0.4\linewidth}\input{tikz_figures/simulations/prior_densities.tikz}\end{adjustbox} & 
\begin{adjustbox}{width=0.4\linewidth}\input{tikz_figures/simulations/marginal_densities.tikz}\end{adjustbox}
\end{tabular}
\caption{\textbf{Priors used in simulations. a)} Lebesgue density of priors and \textbf{b)} marginal density $f_{\gprior}(\zo)$.}
\label{fig:simulation_priors}
\end{figure}

We seek to form $95\%$ confidence intervals for the posterior mean and the local false sign rate using the DKW-$F$-localization, Gauss-$F$-localization ($M=4$) and \Amari~approaches  (with pilot Gauss-$F$-localization, $\alpha_n=0.01$) and $\gcal$ equal to the Gaussian location mixture class $\law\nn(0.25^2,  [-4,4])$.

We also consider an additional plug-in baseline~\citep{efron2016empirical, narasimhan2016g}. We estimate \smash{$\hG$} by (penalized) maximum likelihood over
a flexible exponential family with a natural spline (5 degrees of freedoms) as the sufficient statistic and base measure $U[-4,4]$.
We then obtain \smash{$\htheta(\zo)$}
by applying Bayes rule with prior \smash{$\hG$}. As is standard in the literature, this baseline constructs
confidence intervals for $\theta_{\gprior}(\zo)$ using the delta method, which captures the variance
of \smash{$\htheta(\zo)$} but not its bias. Such confidence intervals are only guaranteed to cover
$\theta_{\gprior}(\zo)$ in the presence of undersmoothing or if the parametric specification is correct. See Supplement~\ref{sec:exp_family} for implementation details of this plug-in baseline.

\begin{figure}
\centering
\begin{tabular}{@{}ll@{}}
  a)   Spiky $G$ & b) Spiky $G$ \\
\begin{adjustbox}{width=0.4\linewidth}\input{tikz_figures/simulations/spiky_postmean_locmix_intervals.tikz}\end{adjustbox} & 
\begin{adjustbox}{width=0.4\linewidth}\begin{tikzpicture}[/tikz/background rectangle/.style={fill={rgb,1:red,1.0;green,1.0;blue,1.0}, draw opacity={1.0}}, show background rectangle]
\begin{axis}[point meta max={nan}, point meta min={nan}, title={}, title style={at={{(0.5,1)}}, anchor={south}, font={{\fontsize{18.2 pt}{23.66 pt}\selectfont}}, color={rgb,1:red,0.0;green,0.0;blue,0.0}, draw opacity={1.0}, rotate={0.0}}, legend style={color={rgb,1:red,0.0;green,0.0;blue,0.0}, draw opacity={0.0}, line width={1.3}, solid, fill={rgb,1:red,0.0;green,0.0;blue,0.0}, fill opacity={0.0}, text opacity={1.0}, font={{\fontsize{10.4 pt}{13.520000000000001 pt}\selectfont}}, text={rgb,1:red,0.0;green,0.0;blue,0.0}, cells={anchor={west}}, at={(0.5, 0.02)}, anchor={south}}, axis background/.style={fill={rgb,1:red,1.0;green,1.0;blue,1.0}, opacity={1.0}}, anchor={north west}, xshift={1.0mm}, yshift={-1.0mm}, width={104.68mm}, height={81.82000000000001mm}, scaled x ticks={false}, xlabel={$z$}, x tick style={color={rgb,1:red,0.0;green,0.0;blue,0.0}, opacity={1.0}}, x tick label style={color={rgb,1:red,0.0;green,0.0;blue,0.0}, opacity={1.0}, rotate={0}}, xlabel style={at={(ticklabel cs:0.5)}, anchor=near ticklabel, font={{\fontsize{14.3 pt}{18.59 pt}\selectfont}}, color={rgb,1:red,0.0;green,0.0;blue,0.0}, draw opacity={1.0}, rotate={0.0}}, xmajorgrids={false}, xmin={-3.18}, xmax={3.18}, xtick={{-3.0,-2.0,-1.0,0.0,1.0,2.0,3.0}}, xticklabels={{$-3$,$-2$,$-1$,$0$,$1$,$2$,$3$}}, xtick align={inside}, xticklabel style={font={{\fontsize{10.4 pt}{13.520000000000001 pt}\selectfont}}, color={rgb,1:red,0.0;green,0.0;blue,0.0}, draw opacity={1.0}, rotate={0.0}}, x grid style={color={rgb,1:red,0.0;green,0.0;blue,0.0}, draw opacity={0.1}, line width={0.65}, solid}, x axis line style={color={rgb,1:red,0.0;green,0.0;blue,0.0}, draw opacity={1.0}, line width={1.3}, solid}, scaled y ticks={false}, ylabel={Coverage}, y tick style={color={rgb,1:red,0.0;green,0.0;blue,0.0}, opacity={1.0}}, y tick label style={color={rgb,1:red,0.0;green,0.0;blue,0.0}, opacity={1.0}, rotate={0}}, ylabel style={at={(ticklabel cs:0.5)}, anchor=near ticklabel, font={{\fontsize{14.3 pt}{18.59 pt}\selectfont}}, color={rgb,1:red,0.0;green,0.0;blue,0.0}, draw opacity={1.0}, rotate={0.0}}, ymajorgrids={false}, ymin={0.0}, ymax={1.05}, ytick={{0.0,0.25,0.5,0.75,1.0}}, yticklabels={{$0.00$,$0.25$,$0.50$,$0.75$,$1.00$}}, ytick align={inside}, yticklabel style={font={{\fontsize{10.4 pt}{13.520000000000001 pt}\selectfont}}, color={rgb,1:red,0.0;green,0.0;blue,0.0}, draw opacity={1.0}, rotate={0.0}}, y grid style={color={rgb,1:red,0.0;green,0.0;blue,0.0}, draw opacity={0.1}, line width={0.65}, solid}, y axis line style={color={rgb,1:red,0.0;green,0.0;blue,0.0}, draw opacity={1.0}, line width={1.3}, solid}, colorbar={false}]
    \addplot[color={rgb,1:red,1.0;green,0.549;blue,0.0}, name path={867846e0-ab65-4a4d-a7a6-e380ef361d79}, draw opacity={0.7}, line width={1.3}, solid]
        table[row sep={\\}]
        {
            \\
            -3.0  1.0  \\
            -2.8  1.0  \\
            -2.6  1.0  \\
            -2.4  1.0  \\
            -2.2  1.0  \\
            -2.0  1.0  \\
            -1.8  1.0  \\
            -1.6  1.0  \\
            -1.4  1.0  \\
            -1.2  0.9975  \\
            -1.0  0.9975  \\
            -0.8  0.9975  \\
            -0.6  0.9975  \\
            -0.4  1.0  \\
            -0.2  1.0  \\
            0.0  0.9975  \\
            0.2  0.9975  \\
            0.4  1.0  \\
            0.6  1.0  \\
            0.8  1.0  \\
            1.0  1.0  \\
            1.2  1.0  \\
            1.4  1.0  \\
            1.6  1.0  \\
            1.8  1.0  \\
            2.0  1.0  \\
            2.2  1.0  \\
            2.4  1.0  \\
            2.6  1.0  \\
            2.8  1.0  \\
            3.0  1.0  \\
        }
        ;
    \addlegendentry {$\textrm{Gauss-F-Loc } (\mathcal{L}\mathcal{N})$}
    \addplot[color={rgb,1:red,0.0;green,0.0;blue,0.0}, name path={eaae2609-44e6-4cd8-a7b2-3c65200b848b}, draw opacity={0.9}, line width={1.3}, dashed]
        table[row sep={\\}]
        {
            \\
            -3.0  1.0  \\
            -2.8  1.0  \\
            -2.6  1.0  \\
            -2.4  1.0  \\
            -2.2  1.0  \\
            -2.0  1.0  \\
            -1.8  1.0  \\
            -1.6  1.0  \\
            -1.4  1.0  \\
            -1.2  1.0  \\
            -1.0  1.0  \\
            -0.8  1.0  \\
            -0.6  1.0  \\
            -0.4  1.0  \\
            -0.2  1.0  \\
            0.0  1.0  \\
            0.2  1.0  \\
            0.4  1.0  \\
            0.6  1.0  \\
            0.8  1.0  \\
            1.0  1.0  \\
            1.2  1.0  \\
            1.4  1.0  \\
            1.6  1.0  \\
            1.8  1.0  \\
            2.0  1.0  \\
            2.2  1.0  \\
            2.4  1.0  \\
            2.6  1.0  \\
            2.8  1.0  \\
            3.0  1.0  \\
        }
        ;
    \addlegendentry {$\textrm{DKW-F-Loc } (\mathcal{L}\mathcal{N})$}
    \addplot[color={rgb,1:red,0.0;green,0.0;blue,1.0}, name path={a221ab65-ec01-4f2e-bccc-fb13391dca68}, draw opacity={0.4}, line width={1.3}, solid]
        table[row sep={\\}]
        {
            \\
            -3.0  0.9625  \\
            -2.8  0.95  \\
            -2.6  0.9525  \\
            -2.4  0.965  \\
            -2.2  0.9725  \\
            -2.0  0.97  \\
            -1.8  0.975  \\
            -1.6  0.9775  \\
            -1.4  0.97  \\
            -1.2  0.9775  \\
            -1.0  0.9725  \\
            -0.8  0.9725  \\
            -0.6  0.965  \\
            -0.4  0.97  \\
            -0.2  0.9725  \\
            0.0  0.9775  \\
            0.2  0.98  \\
            0.4  0.975  \\
            0.6  0.97  \\
            0.8  0.975  \\
            1.0  0.97  \\
            1.2  0.9775  \\
            1.4  0.9775  \\
            1.6  0.965  \\
            1.8  0.97  \\
            2.0  0.975  \\
            2.2  0.9725  \\
            2.4  0.97  \\
            2.6  0.9675  \\
            2.8  0.9725  \\
            3.0  0.98  \\
        }
        ;
    \addlegendentry {$\textrm{AMARI } (\mathcal{L}\mathcal{N})$}
    \addplot[color={rgb,1:red,0.5451;green,0.0;blue,0.0}, name path={0c7c8e5f-a88b-4c78-913b-be2fee6a37b3}, draw opacity={0.8}, line width={1.3}, dotted]
        table[row sep={\\}]
        {
            \\
            -3.0  0.96  \\
            -2.8  0.8525  \\
            -2.6  0.67  \\
            -2.4  0.6125  \\
            -2.2  0.645  \\
            -2.0  0.7725  \\
            -1.8  0.8775  \\
            -1.6  0.97  \\
            -1.4  0.9075  \\
            -1.2  0.71  \\
            -1.0  0.5425  \\
            -0.8  0.505  \\
            -0.6  0.5975  \\
            -0.4  0.8025  \\
            -0.2  0.905  \\
            0.0  0.94  \\
            0.2  0.9  \\
            0.4  0.7775  \\
            0.6  0.6275  \\
            0.8  0.535  \\
            1.0  0.5525  \\
            1.2  0.7275  \\
            1.4  0.9125  \\
            1.6  0.96  \\
            1.8  0.8825  \\
            2.0  0.74  \\
            2.2  0.6325  \\
            2.4  0.585  \\
            2.6  0.6575  \\
            2.8  0.805  \\
            3.0  0.95  \\
        }
        ;
    \addlegendentry {$\textrm{Logspline}$}
    \addplot[color={rgb,1:red,0.8275;green,0.8275;blue,0.8275}, name path={41f7cb7e-d7c4-47c4-b832-5500d7db4611}, draw opacity={1.0}, line width={1.3}, dotted, forget plot]
        table[row sep={\\}]
        {
            \\
            -9.540000000000001  0.95  \\
            9.540000000000001  0.95  \\
        }
        ;
\end{axis}
\end{tikzpicture}\end{adjustbox} \\
c) NegSpiky $G$ & d) NegSpiky $G$ \\
\begin{adjustbox}{width=0.4\linewidth}\input{tikz_figures/simulations/negspiky_postmean_locmix_intervals.tikz}\end{adjustbox} &
\begin{adjustbox}{width=0.4\linewidth}\begin{tikzpicture}[/tikz/background rectangle/.style={fill={rgb,1:red,1.0;green,1.0;blue,1.0}, draw opacity={1.0}}, show background rectangle]
\begin{axis}[point meta max={nan}, point meta min={nan}, title={}, title style={at={{(0.5,1)}}, anchor={south}, font={{\fontsize{18.2 pt}{23.66 pt}\selectfont}}, color={rgb,1:red,0.0;green,0.0;blue,0.0}, draw opacity={1.0}, rotate={0.0}}, legend style={color={rgb,1:red,0.0;green,0.0;blue,0.0}, draw opacity={0.0}, line width={1.3}, solid, fill={rgb,1:red,0.0;green,0.0;blue,0.0}, fill opacity={0.0}, text opacity={1.0}, font={{\fontsize{10.4 pt}{13.520000000000001 pt}\selectfont}}, text={rgb,1:red,0.0;green,0.0;blue,0.0}, cells={anchor={west}}, at={(0.5, 0.02)}, anchor={south}}, axis background/.style={fill={rgb,1:red,1.0;green,1.0;blue,1.0}, opacity={1.0}}, anchor={north west}, xshift={1.0mm}, yshift={-1.0mm}, width={104.68mm}, height={81.82000000000001mm}, scaled x ticks={false}, xlabel={$z$}, x tick style={color={rgb,1:red,0.0;green,0.0;blue,0.0}, opacity={1.0}}, x tick label style={color={rgb,1:red,0.0;green,0.0;blue,0.0}, opacity={1.0}, rotate={0}}, xlabel style={at={(ticklabel cs:0.5)}, anchor=near ticklabel, font={{\fontsize{14.3 pt}{18.59 pt}\selectfont}}, color={rgb,1:red,0.0;green,0.0;blue,0.0}, draw opacity={1.0}, rotate={0.0}}, xmajorgrids={false}, xmin={-3.18}, xmax={3.18}, xtick={{-3.0,-2.0,-1.0,0.0,1.0,2.0,3.0}}, xticklabels={{$-3$,$-2$,$-1$,$0$,$1$,$2$,$3$}}, xtick align={inside}, xticklabel style={font={{\fontsize{10.4 pt}{13.520000000000001 pt}\selectfont}}, color={rgb,1:red,0.0;green,0.0;blue,0.0}, draw opacity={1.0}, rotate={0.0}}, x grid style={color={rgb,1:red,0.0;green,0.0;blue,0.0}, draw opacity={0.1}, line width={0.65}, solid}, x axis line style={color={rgb,1:red,0.0;green,0.0;blue,0.0}, draw opacity={1.0}, line width={1.3}, solid}, scaled y ticks={false}, ylabel={Coverage}, y tick style={color={rgb,1:red,0.0;green,0.0;blue,0.0}, opacity={1.0}}, y tick label style={color={rgb,1:red,0.0;green,0.0;blue,0.0}, opacity={1.0}, rotate={0}}, ylabel style={at={(ticklabel cs:0.5)}, anchor=near ticklabel, font={{\fontsize{14.3 pt}{18.59 pt}\selectfont}}, color={rgb,1:red,0.0;green,0.0;blue,0.0}, draw opacity={1.0}, rotate={0.0}}, ymajorgrids={false}, ymin={0.0}, ymax={1.05}, ytick={{0.0,0.25,0.5,0.75,1.0}}, yticklabels={{$0.00$,$0.25$,$0.50$,$0.75$,$1.00$}}, ytick align={inside}, yticklabel style={font={{\fontsize{10.4 pt}{13.520000000000001 pt}\selectfont}}, color={rgb,1:red,0.0;green,0.0;blue,0.0}, draw opacity={1.0}, rotate={0.0}}, y grid style={color={rgb,1:red,0.0;green,0.0;blue,0.0}, draw opacity={0.1}, line width={0.65}, solid}, y axis line style={color={rgb,1:red,0.0;green,0.0;blue,0.0}, draw opacity={1.0}, line width={1.3}, solid}, colorbar={false}]
    \addplot[color={rgb,1:red,1.0;green,0.549;blue,0.0}, name path={04386c4f-27d2-4628-b4ad-fb172cf3a949}, draw opacity={0.7}, line width={1.3}, solid]
        table[row sep={\\}]
        {
            \\
            -3.0  1.0  \\
            -2.8  1.0  \\
            -2.6  1.0  \\
            -2.4  1.0  \\
            -2.2  1.0  \\
            -2.0  1.0  \\
            -1.8  1.0  \\
            -1.6  1.0  \\
            -1.4  0.995  \\
            -1.2  0.995  \\
            -1.0  0.995  \\
            -0.8  0.995  \\
            -0.6  0.9975  \\
            -0.4  1.0  \\
            -0.2  1.0  \\
            0.0  1.0  \\
            0.2  1.0  \\
            0.4  1.0  \\
            0.6  1.0  \\
            0.8  1.0  \\
            1.0  1.0  \\
            1.2  1.0  \\
            1.4  1.0  \\
            1.6  1.0  \\
            1.8  1.0  \\
            2.0  1.0  \\
            2.2  1.0  \\
            2.4  1.0  \\
            2.6  1.0  \\
            2.8  1.0  \\
            3.0  1.0  \\
        }
        ;
    \addlegendentry {$\textrm{Gauss-F-Loc } (\mathcal{L}\mathcal{N})$}
    \addplot[color={rgb,1:red,0.0;green,0.0;blue,0.0}, name path={95d8649d-817b-4174-b144-5bed6a79788b}, draw opacity={0.9}, line width={1.3}, dashed]
        table[row sep={\\}]
        {
            \\
            -3.0  1.0  \\
            -2.8  1.0  \\
            -2.6  1.0  \\
            -2.4  1.0  \\
            -2.2  1.0  \\
            -2.0  1.0  \\
            -1.8  1.0  \\
            -1.6  1.0  \\
            -1.4  1.0  \\
            -1.2  1.0  \\
            -1.0  1.0  \\
            -0.8  1.0  \\
            -0.6  1.0  \\
            -0.4  1.0  \\
            -0.2  1.0  \\
            0.0  1.0  \\
            0.2  1.0  \\
            0.4  1.0  \\
            0.6  1.0  \\
            0.8  1.0  \\
            1.0  1.0  \\
            1.2  1.0  \\
            1.4  1.0  \\
            1.6  1.0  \\
            1.8  1.0  \\
            2.0  1.0  \\
            2.2  1.0  \\
            2.4  1.0  \\
            2.6  1.0  \\
            2.8  1.0  \\
            3.0  1.0  \\
        }
        ;
    \addlegendentry {$\textrm{DKW-F-Loc } (\mathcal{L}\mathcal{N})$}
    \addplot[color={rgb,1:red,0.0;green,0.0;blue,1.0}, name path={d72f3a76-c62e-4cb3-be56-d67fb5e6e88a}, draw opacity={0.4}, line width={1.3}, solid]
        table[row sep={\\}]
        {
            \\
            -3.0  0.94  \\
            -2.8  0.9675  \\
            -2.6  0.9775  \\
            -2.4  0.98  \\
            -2.2  0.9725  \\
            -2.0  0.9575  \\
            -1.8  0.96  \\
            -1.6  0.9525  \\
            -1.4  0.955  \\
            -1.2  0.9625  \\
            -1.0  0.9675  \\
            -0.8  0.9725  \\
            -0.6  0.9725  \\
            -0.4  0.97  \\
            -0.2  0.97  \\
            0.0  0.9625  \\
            0.2  0.9575  \\
            0.4  0.955  \\
            0.6  0.96  \\
            0.8  0.9525  \\
            1.0  0.96  \\
            1.2  0.9625  \\
            1.4  0.9675  \\
            1.6  0.9775  \\
            1.8  0.965  \\
            2.0  0.965  \\
            2.2  0.9725  \\
            2.4  0.975  \\
            2.6  0.9775  \\
            2.8  0.9725  \\
            3.0  0.9625  \\
        }
        ;
    \addlegendentry {$\textrm{AMARI } (\mathcal{L}\mathcal{N})$}
    \addplot[color={rgb,1:red,0.5451;green,0.0;blue,0.0}, name path={1103e0c1-9bdf-438b-ba03-157b54347ddc}, draw opacity={0.8}, line width={1.3}, dotted]
        table[row sep={\\}]
        {
            \\
            -3.0  0.3125  \\
            -2.8  0.3875  \\
            -2.6  0.5225  \\
            -2.4  0.7125  \\
            -2.2  0.845  \\
            -2.0  0.9075  \\
            -1.8  0.9125  \\
            -1.6  0.895  \\
            -1.4  0.8425  \\
            -1.2  0.7825  \\
            -1.0  0.7225  \\
            -0.8  0.72  \\
            -0.6  0.775  \\
            -0.4  0.86  \\
            -0.2  0.92  \\
            0.0  0.95  \\
            0.2  0.9325  \\
            0.4  0.9  \\
            0.6  0.88  \\
            0.8  0.85  \\
            1.0  0.845  \\
            1.2  0.865  \\
            1.4  0.91  \\
            1.6  0.9275  \\
            1.8  0.9475  \\
            2.0  0.9475  \\
            2.2  0.93  \\
            2.4  0.915  \\
            2.6  0.9025  \\
            2.8  0.8975  \\
            3.0  0.905  \\
        }
        ;
    \addlegendentry {$\textrm{Logspline}$}
    \addplot[color={rgb,1:red,0.8275;green,0.8275;blue,0.8275}, name path={06619500-0385-47a2-9f64-2da7f2b010a0}, draw opacity={1.0}, line width={1.3}, dotted, forget plot]
        table[row sep={\\}]
        {
            \\
            -9.540000000000001  0.95  \\
            9.540000000000001  0.95  \\
        }
        ;
\end{axis}
\end{tikzpicture}\end{adjustbox} \\
\end{tabular}
\caption{\textbf{Simulation results: Inference for the posterior mean in the Gaussian empirical Bayes problem. a)} Expected confidence intervals in the simulation with the prior \smash{$\gprior^{\text{Spiky}}$}~\eqref{eq:sim_priors}. 4 different inference methods are shown, as well as the ground truth as a function of $z$. \textbf{b)} Coverage of the above confidence intervals as a function of $z$. \textbf{c, d)} Inference results in the simulation with the prior \smash{$\gprior^{\text{NegSpiky}}$~\eqref{eq:sim_priors}}.}
\label{fig:postmean_simulations}
\end{figure}

Figure~\ref{fig:postmean_simulations} shows the results of the simulations for the posterior mean, averaged over 400 Monte Carlo replicates. 
The length of the different confidence intervals is qualitatively similar to what we observed in Figure~\ref{fig:prostate}. 
The log-spline intervals are shortest; however they do not achieve nominal coverage, while all other methods do. 
The pointwise coverage of the $F$-localization intervals is close to $100\%$, while the coverage of \Amari~is closer to the nominal $95\%$. The simultaneous coverage of \Amari~for $\EE[\gprior]{\mu \mid \Zo_i=\zo}$ as $\zo$ varies in Figure~\ref{fig:postmean_simulations} is $75\%$ for $\gprior^{\text{Spiky}}$ and $70\%$ for $\gprior^{\text{NegSpiky}}$. The $F$-localization methods have simultaneous coverage above $95\%$.

\begin{figure}
\centering
\begin{tabular}{@{}ll@{}}
a)  Spiky $G$ & b) Spiky $G$ \\
\begin{adjustbox}{width=0.4\linewidth}\input{tikz_figures/simulations/spiky_lfsr_locmix_intervals.tikz}\end{adjustbox} & 
\begin{adjustbox}{width=0.4\linewidth}\begin{tikzpicture}[/tikz/background rectangle/.style={fill={rgb,1:red,1.0;green,1.0;blue,1.0}, draw opacity={1.0}}, show background rectangle]
\begin{axis}[point meta max={nan}, point meta min={nan}, title={}, title style={at={{(0.5,1)}}, anchor={south}, font={{\fontsize{18.2 pt}{23.66 pt}\selectfont}}, color={rgb,1:red,0.0;green,0.0;blue,0.0}, draw opacity={1.0}, rotate={0.0}}, legend style={color={rgb,1:red,0.0;green,0.0;blue,0.0}, draw opacity={0.0}, line width={1.3}, solid, fill={rgb,1:red,0.0;green,0.0;blue,0.0}, fill opacity={0.0}, text opacity={1.0}, font={{\fontsize{10.4 pt}{13.520000000000001 pt}\selectfont}}, text={rgb,1:red,0.0;green,0.0;blue,0.0}, cells={anchor={west}}, at={(0.5, 0.02)}, anchor={south}}, axis background/.style={fill={rgb,1:red,1.0;green,1.0;blue,1.0}, opacity={1.0}}, anchor={north west}, xshift={1.0mm}, yshift={-1.0mm}, width={104.68mm}, height={81.82000000000001mm}, scaled x ticks={false}, xlabel={$z$}, x tick style={color={rgb,1:red,0.0;green,0.0;blue,0.0}, opacity={1.0}}, x tick label style={color={rgb,1:red,0.0;green,0.0;blue,0.0}, opacity={1.0}, rotate={0}}, xlabel style={at={(ticklabel cs:0.5)}, anchor=near ticklabel, font={{\fontsize{14.3 pt}{18.59 pt}\selectfont}}, color={rgb,1:red,0.0;green,0.0;blue,0.0}, draw opacity={1.0}, rotate={0.0}}, xmajorgrids={false}, xmin={-3.18}, xmax={3.18}, xtick={{-3.0,-2.0,-1.0,0.0,1.0,2.0,3.0}}, xticklabels={{$-3$,$-2$,$-1$,$0$,$1$,$2$,$3$}}, xtick align={inside}, xticklabel style={font={{\fontsize{10.4 pt}{13.520000000000001 pt}\selectfont}}, color={rgb,1:red,0.0;green,0.0;blue,0.0}, draw opacity={1.0}, rotate={0.0}}, x grid style={color={rgb,1:red,0.0;green,0.0;blue,0.0}, draw opacity={0.1}, line width={0.65}, solid}, x axis line style={color={rgb,1:red,0.0;green,0.0;blue,0.0}, draw opacity={1.0}, line width={1.3}, solid}, scaled y ticks={false}, ylabel={Coverage}, y tick style={color={rgb,1:red,0.0;green,0.0;blue,0.0}, opacity={1.0}}, y tick label style={color={rgb,1:red,0.0;green,0.0;blue,0.0}, opacity={1.0}, rotate={0}}, ylabel style={at={(ticklabel cs:0.5)}, anchor=near ticklabel, font={{\fontsize{14.3 pt}{18.59 pt}\selectfont}}, color={rgb,1:red,0.0;green,0.0;blue,0.0}, draw opacity={1.0}, rotate={0.0}}, ymajorgrids={false}, ymin={0.0}, ymax={1.05}, ytick={{0.0,0.25,0.5,0.75,1.0}}, yticklabels={{$0.00$,$0.25$,$0.50$,$0.75$,$1.00$}}, ytick align={inside}, yticklabel style={font={{\fontsize{10.4 pt}{13.520000000000001 pt}\selectfont}}, color={rgb,1:red,0.0;green,0.0;blue,0.0}, draw opacity={1.0}, rotate={0.0}}, y grid style={color={rgb,1:red,0.0;green,0.0;blue,0.0}, draw opacity={0.1}, line width={0.65}, solid}, y axis line style={color={rgb,1:red,0.0;green,0.0;blue,0.0}, draw opacity={1.0}, line width={1.3}, solid}, colorbar={false}]
    \addplot[color={rgb,1:red,1.0;green,0.549;blue,0.0}, name path={c6e6a6b8-a164-4dc6-84f4-86b72c12fa06}, draw opacity={0.7}, line width={1.3}, solid]
        table[row sep={\\}]
        {
            \\
            -3.0  1.0  \\
            -2.8  1.0  \\
            -2.6  1.0  \\
            -2.4  1.0  \\
            -2.2  1.0  \\
            -2.0  1.0  \\
            -1.8  1.0  \\
            -1.6  1.0  \\
            -1.4  1.0  \\
            -1.2  1.0  \\
            -1.0  1.0  \\
            -0.8  1.0  \\
            -0.6  1.0  \\
            -0.4  1.0  \\
            -0.2  1.0  \\
            0.0  1.0  \\
            0.2  1.0  \\
            0.4  1.0  \\
            0.6  1.0  \\
            0.8  1.0  \\
            1.0  1.0  \\
            1.2  1.0  \\
            1.4  1.0  \\
            1.6  1.0  \\
            1.8  1.0  \\
            2.0  1.0  \\
            2.2  1.0  \\
            2.4  1.0  \\
            2.6  1.0  \\
            2.8  1.0  \\
            3.0  1.0  \\
        }
        ;
    \addlegendentry {$\textrm{Gauss-F-Loc } (\mathcal{L}\mathcal{N})$}
    \addplot[color={rgb,1:red,0.0;green,0.0;blue,0.0}, name path={6e0b0e6d-0384-43be-973c-8ce56eea9508}, draw opacity={0.9}, line width={1.3}, dashed]
        table[row sep={\\}]
        {
            \\
            -3.0  1.0  \\
            -2.8  1.0  \\
            -2.6  1.0  \\
            -2.4  1.0  \\
            -2.2  1.0  \\
            -2.0  1.0  \\
            -1.8  1.0  \\
            -1.6  1.0  \\
            -1.4  1.0  \\
            -1.2  1.0  \\
            -1.0  1.0  \\
            -0.8  1.0  \\
            -0.6  1.0  \\
            -0.4  1.0  \\
            -0.2  1.0  \\
            0.0  1.0  \\
            0.2  1.0  \\
            0.4  1.0  \\
            0.6  1.0  \\
            0.8  1.0  \\
            1.0  1.0  \\
            1.2  1.0  \\
            1.4  1.0  \\
            1.6  1.0  \\
            1.8  1.0  \\
            2.0  1.0  \\
            2.2  1.0  \\
            2.4  1.0  \\
            2.6  1.0  \\
            2.8  1.0  \\
            3.0  1.0  \\
        }
        ;
    \addlegendentry {$\textrm{DKW-F-Loc } (\mathcal{L}\mathcal{N})$}
    \addplot[color={rgb,1:red,0.0;green,0.0;blue,1.0}, name path={40948b1b-79c8-495f-9437-f5fd95dd0c44}, draw opacity={0.4}, line width={1.3}, solid]
        table[row sep={\\}]
        {
            \\
            -3.0  0.9975  \\
            -2.8  0.9975  \\
            -2.6  1.0  \\
            -2.4  1.0  \\
            -2.2  1.0  \\
            -2.0  1.0  \\
            -1.8  1.0  \\
            -1.6  1.0  \\
            -1.4  1.0  \\
            -1.2  1.0  \\
            -1.0  1.0  \\
            -0.8  1.0  \\
            -0.6  1.0  \\
            -0.4  1.0  \\
            -0.2  1.0  \\
            0.0  1.0  \\
            0.2  1.0  \\
            0.4  1.0  \\
            0.6  1.0  \\
            0.8  1.0  \\
            1.0  1.0  \\
            1.2  1.0  \\
            1.4  1.0  \\
            1.6  1.0  \\
            1.8  1.0  \\
            2.0  1.0  \\
            2.2  1.0  \\
            2.4  1.0  \\
            2.6  1.0  \\
            2.8  1.0  \\
            3.0  1.0  \\
        }
        ;
    \addlegendentry {$\textrm{AMARI } (\mathcal{L}\mathcal{N})$}
    \addplot[color={rgb,1:red,0.5451;green,0.0;blue,0.0}, name path={343f8695-81d8-42fc-99da-8332b53e1769}, draw opacity={0.8}, line width={1.3}, dotted]
        table[row sep={\\}]
        {
            \\
            -3.0  0.4175  \\
            -2.8  0.3925  \\
            -2.6  0.3275  \\
            -2.4  0.2575  \\
            -2.2  0.2225  \\
            -2.0  0.1875  \\
            -1.8  0.2  \\
            -1.6  0.2225  \\
            -1.4  0.295  \\
            -1.2  0.3925  \\
            -1.0  0.525  \\
            -0.8  0.6375  \\
            -0.6  0.78  \\
            -0.4  0.8725  \\
            -0.2  0.9325  \\
            0.0  0.93  \\
            0.2  0.93  \\
            0.4  0.87  \\
            0.6  0.7775  \\
            0.8  0.6225  \\
            1.0  0.49  \\
            1.2  0.36  \\
            1.4  0.285  \\
            1.6  0.2075  \\
            1.8  0.175  \\
            2.0  0.165  \\
            2.2  0.1925  \\
            2.4  0.27  \\
            2.6  0.3525  \\
            2.8  0.425  \\
            3.0  0.4625  \\
        }
        ;
    \addlegendentry {$\textrm{Logspline}$}
    \addplot[color={rgb,1:red,0.8275;green,0.8275;blue,0.8275}, name path={625b423f-3a24-4ce2-a495-70e29de4894a}, draw opacity={1.0}, line width={1.3}, dotted, forget plot]
        table[row sep={\\}]
        {
            \\
            -9.540000000000001  0.95  \\
            9.540000000000001  0.95  \\
        }
        ;
\end{axis}
\end{tikzpicture}\end{adjustbox}
 \\
c) NegSpiky $G$ & d) NegSpiky $G$ \\
\begin{adjustbox}{width=0.4\linewidth}\input{tikz_figures/simulations/negspiky_lfsr_locmix_intervals.tikz}\end{adjustbox} & 
\begin{adjustbox}{width=0.4\linewidth}\begin{tikzpicture}[/tikz/background rectangle/.style={fill={rgb,1:red,1.0;green,1.0;blue,1.0}, draw opacity={1.0}}, show background rectangle]
\begin{axis}[point meta max={nan}, point meta min={nan}, title={}, title style={at={{(0.5,1)}}, anchor={south}, font={{\fontsize{18.2 pt}{23.66 pt}\selectfont}}, color={rgb,1:red,0.0;green,0.0;blue,0.0}, draw opacity={1.0}, rotate={0.0}}, legend style={color={rgb,1:red,0.0;green,0.0;blue,0.0}, draw opacity={0.0}, line width={1.3}, solid, fill={rgb,1:red,0.0;green,0.0;blue,0.0}, fill opacity={0.0}, text opacity={1.0}, font={{\fontsize{10.4 pt}{13.520000000000001 pt}\selectfont}}, text={rgb,1:red,0.0;green,0.0;blue,0.0}, cells={anchor={west}}, at={(0.5, 0.02)}, anchor={south}}, axis background/.style={fill={rgb,1:red,1.0;green,1.0;blue,1.0}, opacity={1.0}}, anchor={north west}, xshift={1.0mm}, yshift={-1.0mm}, width={104.68mm}, height={81.82000000000001mm}, scaled x ticks={false}, xlabel={$z$}, x tick style={color={rgb,1:red,0.0;green,0.0;blue,0.0}, opacity={1.0}}, x tick label style={color={rgb,1:red,0.0;green,0.0;blue,0.0}, opacity={1.0}, rotate={0}}, xlabel style={at={(ticklabel cs:0.5)}, anchor=near ticklabel, font={{\fontsize{14.3 pt}{18.59 pt}\selectfont}}, color={rgb,1:red,0.0;green,0.0;blue,0.0}, draw opacity={1.0}, rotate={0.0}}, xmajorgrids={false}, xmin={-3.18}, xmax={3.18}, xtick={{-3.0,-2.0,-1.0,0.0,1.0,2.0,3.0}}, xticklabels={{$-3$,$-2$,$-1$,$0$,$1$,$2$,$3$}}, xtick align={inside}, xticklabel style={font={{\fontsize{10.4 pt}{13.520000000000001 pt}\selectfont}}, color={rgb,1:red,0.0;green,0.0;blue,0.0}, draw opacity={1.0}, rotate={0.0}}, x grid style={color={rgb,1:red,0.0;green,0.0;blue,0.0}, draw opacity={0.1}, line width={0.65}, solid}, x axis line style={color={rgb,1:red,0.0;green,0.0;blue,0.0}, draw opacity={1.0}, line width={1.3}, solid}, scaled y ticks={false}, ylabel={Coverage}, y tick style={color={rgb,1:red,0.0;green,0.0;blue,0.0}, opacity={1.0}}, y tick label style={color={rgb,1:red,0.0;green,0.0;blue,0.0}, opacity={1.0}, rotate={0}}, ylabel style={at={(ticklabel cs:0.5)}, anchor=near ticklabel, font={{\fontsize{14.3 pt}{18.59 pt}\selectfont}}, color={rgb,1:red,0.0;green,0.0;blue,0.0}, draw opacity={1.0}, rotate={0.0}}, ymajorgrids={false}, ymin={0.0}, ymax={1.05}, ytick={{0.0,0.25,0.5,0.75,1.0}}, yticklabels={{$0.00$,$0.25$,$0.50$,$0.75$,$1.00$}}, ytick align={inside}, yticklabel style={font={{\fontsize{10.4 pt}{13.520000000000001 pt}\selectfont}}, color={rgb,1:red,0.0;green,0.0;blue,0.0}, draw opacity={1.0}, rotate={0.0}}, y grid style={color={rgb,1:red,0.0;green,0.0;blue,0.0}, draw opacity={0.1}, line width={0.65}, solid}, y axis line style={color={rgb,1:red,0.0;green,0.0;blue,0.0}, draw opacity={1.0}, line width={1.3}, solid}, colorbar={false}]
    \addplot[color={rgb,1:red,1.0;green,0.549;blue,0.0}, name path={d36560d9-1dc0-4da4-9281-1fc5d2c695c1}, draw opacity={0.7}, line width={1.3}, solid]
        table[row sep={\\}]
        {
            \\
            -3.0  1.0  \\
            -2.8  1.0  \\
            -2.6  1.0  \\
            -2.4  1.0  \\
            -2.2  1.0  \\
            -2.0  1.0  \\
            -1.8  1.0  \\
            -1.6  1.0  \\
            -1.4  1.0  \\
            -1.2  1.0  \\
            -1.0  1.0  \\
            -0.8  1.0  \\
            -0.6  1.0  \\
            -0.4  1.0  \\
            -0.2  1.0  \\
            0.0  1.0  \\
            0.2  1.0  \\
            0.4  1.0  \\
            0.6  1.0  \\
            0.8  1.0  \\
            1.0  1.0  \\
            1.2  1.0  \\
            1.4  1.0  \\
            1.6  1.0  \\
            1.8  1.0  \\
            2.0  1.0  \\
            2.2  1.0  \\
            2.4  1.0  \\
            2.6  1.0  \\
            2.8  1.0  \\
            3.0  1.0  \\
        }
        ;
    \addlegendentry {$\textrm{Gauss-F-Loc } (\mathcal{L}\mathcal{N})$}
    \addplot[color={rgb,1:red,0.0;green,0.0;blue,0.0}, name path={805b3f42-4061-477c-b156-9c695b922693}, draw opacity={0.9}, line width={1.3}, dashed]
        table[row sep={\\}]
        {
            \\
            -3.0  1.0  \\
            -2.8  1.0  \\
            -2.6  1.0  \\
            -2.4  1.0  \\
            -2.2  1.0  \\
            -2.0  1.0  \\
            -1.8  1.0  \\
            -1.6  1.0  \\
            -1.4  1.0  \\
            -1.2  1.0  \\
            -1.0  1.0  \\
            -0.8  1.0  \\
            -0.6  1.0  \\
            -0.4  1.0  \\
            -0.2  1.0  \\
            0.0  1.0  \\
            0.2  1.0  \\
            0.4  1.0  \\
            0.6  1.0  \\
            0.8  1.0  \\
            1.0  1.0  \\
            1.2  1.0  \\
            1.4  1.0  \\
            1.6  1.0  \\
            1.8  1.0  \\
            2.0  1.0  \\
            2.2  1.0  \\
            2.4  1.0  \\
            2.6  1.0  \\
            2.8  1.0  \\
            3.0  1.0  \\
        }
        ;
    \addlegendentry {$\textrm{DKW-F-Loc } (\mathcal{L}\mathcal{N})$}
    \addplot[color={rgb,1:red,0.0;green,0.0;blue,1.0}, name path={eadd7a07-1e12-43ac-aa06-ff0e280467e9}, draw opacity={0.4}, line width={1.3}, solid]
        table[row sep={\\}]
        {
            \\
            -3.0  0.9975  \\
            -2.8  1.0  \\
            -2.6  1.0  \\
            -2.4  1.0  \\
            -2.2  1.0  \\
            -2.0  1.0  \\
            -1.8  1.0  \\
            -1.6  1.0  \\
            -1.4  0.9975  \\
            -1.2  0.995  \\
            -1.0  0.995  \\
            -0.8  0.995  \\
            -0.6  0.995  \\
            -0.4  0.9925  \\
            -0.2  0.99  \\
            0.0  0.9875  \\
            0.2  0.9875  \\
            0.4  0.985  \\
            0.6  0.98  \\
            0.8  0.9775  \\
            1.0  0.9775  \\
            1.2  0.9725  \\
            1.4  0.9725  \\
            1.6  0.97  \\
            1.8  0.9775  \\
            2.0  0.98  \\
            2.2  0.9825  \\
            2.4  0.98  \\
            2.6  0.98  \\
            2.8  0.985  \\
            3.0  0.9875  \\
        }
        ;
    \addlegendentry {$\textrm{AMARI } (\mathcal{L}\mathcal{N})$}
    \addplot[color={rgb,1:red,0.5451;green,0.0;blue,0.0}, name path={2334b91a-a3b1-4913-ad81-5f2e90efec1c}, draw opacity={0.8}, line width={1.3}, dotted]
        table[row sep={\\}]
        {
            \\
            -3.0  0.63  \\
            -2.8  0.6875  \\
            -2.6  0.74  \\
            -2.4  0.78  \\
            -2.2  0.7925  \\
            -2.0  0.795  \\
            -1.8  0.7725  \\
            -1.6  0.76  \\
            -1.4  0.7475  \\
            -1.2  0.72  \\
            -1.0  0.685  \\
            -0.8  0.6475  \\
            -0.6  0.61  \\
            -0.4  0.5725  \\
            -0.2  0.55  \\
            0.0  0.5375  \\
            0.2  0.5325  \\
            0.4  0.5125  \\
            0.6  0.495  \\
            0.8  0.465  \\
            1.0  0.44  \\
            1.2  0.42  \\
            1.4  0.4175  \\
            1.6  0.395  \\
            1.8  0.39  \\
            2.0  0.3875  \\
            2.2  0.4025  \\
            2.4  0.4425  \\
            2.6  0.495  \\
            2.8  0.575  \\
            3.0  0.66  \\
        }
        ;
    \addlegendentry {$\textrm{Logspline}$}
    \addplot[color={rgb,1:red,0.8275;green,0.8275;blue,0.8275}, name path={b3c70bd5-0f34-4f5f-843e-5eef8a8c0f50}, draw opacity={1.0}, line width={1.3}, dotted, forget plot]
        table[row sep={\\}]
        {
            \\
            -9.540000000000001  0.95  \\
            9.540000000000001  0.95  \\
        }
        ;
\end{axis}
\end{tikzpicture}\end{adjustbox} 
\end{tabular}
\caption{\textbf{Simulation results: Inference for the local false sign rate in the Gaussian empirical Bayes problem.} The panels are analogous to the ones of Figure~\ref{fig:postmean_simulations}.}
\label{fig:lfsr_simulations}
\end{figure}

Figure~\ref{fig:lfsr_simulations} shows the simulation results for the local false sign rate. Most conclusions are similar to the ones we made for the posterior mean. However, here the Gauss-$F$-localization leads to shorter intervals compared to the DKW-$F$-localization.  Furthermore, in this case, both $F$-localization intervals and \Amari~have pointwise coverage close to $100\%$; the reason is that the worst case bias is substantial, and so bias-aware intervals lead to conservative inference for most $\gprior \in \gcal$.  In fact, \Amari~has simultaneous coverage above $95\%$ for $\PP{\mu \geq 0 \mid \Zo=\zo}$ as $\zo$ varies in Figure~\ref{fig:lfsr_simulations}.

\paragraph{Gaussian scale mixture $\gcal$:} We next repeat our simulations with the same settings, but using a different choice of \smash{$\gcal$}, namely the Gaussian scale mixture class~\eqref{eq:normal_scale_class}
\smash{$\set\nn(0.1, 15.6, 1.1)$}.
The scale mixture class \smash{$\set\nn(0.1, 15.6, 1.1)$} is strongly misspecified for \smash{$\gprior^{\text{NegSpiky}}$}. This was detected by our proposed methods, as the intersection of \smash{$\cb{ F_{\gprior}: \gprior \in \set\nn(0.1, 15.6, 1.1)}$} and $F$-localizations $\ff_n$ was empty. 
Thus, in Figure~\ref{fig:scalemix_simulations} we report the results of our simulations only for \smash{$\gprior^{\text{Spiky}}$}. 
We observe that the assumption that $\gprior$ is a scale mixture centered at $0$, instead of a location mixture, leads to substantially more precise inference, and especially so for the local false sign rate.

\begin{figure}
\centering
\begin{tabular}{@{}ll@{}}
a)  Posterior mean & b) Posterior mean \\
\begin{adjustbox}{width=0.4\linewidth}\input{tikz_figures/simulations/spiky_postmean_scalemix_intervals.tikz}\end{adjustbox} & 
\begin{adjustbox}{width=0.4\linewidth}\begin{tikzpicture}[/tikz/background rectangle/.style={fill={rgb,1:red,1.0;green,1.0;blue,1.0}, draw opacity={1.0}}, show background rectangle]
\begin{axis}[point meta max={nan}, point meta min={nan}, title={}, title style={at={{(0.5,1)}}, anchor={south}, font={{\fontsize{18.2 pt}{23.66 pt}\selectfont}}, color={rgb,1:red,0.0;green,0.0;blue,0.0}, draw opacity={1.0}, rotate={0.0}}, legend style={color={rgb,1:red,0.0;green,0.0;blue,0.0}, draw opacity={0.0}, line width={1.3}, solid, fill={rgb,1:red,0.0;green,0.0;blue,0.0}, fill opacity={0.0}, text opacity={1.0}, font={{\fontsize{10.4 pt}{13.520000000000001 pt}\selectfont}}, text={rgb,1:red,0.0;green,0.0;blue,0.0}, cells={anchor={west}}, at={(0.5, 0.02)}, anchor={south}}, axis background/.style={fill={rgb,1:red,1.0;green,1.0;blue,1.0}, opacity={1.0}}, anchor={north west}, xshift={1.0mm}, yshift={-1.0mm}, width={104.68mm}, height={81.82000000000001mm}, scaled x ticks={false}, xlabel={$z$}, x tick style={color={rgb,1:red,0.0;green,0.0;blue,0.0}, opacity={1.0}}, x tick label style={color={rgb,1:red,0.0;green,0.0;blue,0.0}, opacity={1.0}, rotate={0}}, xlabel style={at={(ticklabel cs:0.5)}, anchor=near ticklabel, font={{\fontsize{14.3 pt}{18.59 pt}\selectfont}}, color={rgb,1:red,0.0;green,0.0;blue,0.0}, draw opacity={1.0}, rotate={0.0}}, xmajorgrids={false}, xmin={-3.18}, xmax={3.18}, xtick={{-3.0,-2.0,-1.0,0.0,1.0,2.0,3.0}}, xticklabels={{$-3$,$-2$,$-1$,$0$,$1$,$2$,$3$}}, xtick align={inside}, xticklabel style={font={{\fontsize{10.4 pt}{13.520000000000001 pt}\selectfont}}, color={rgb,1:red,0.0;green,0.0;blue,0.0}, draw opacity={1.0}, rotate={0.0}}, x grid style={color={rgb,1:red,0.0;green,0.0;blue,0.0}, draw opacity={0.1}, line width={0.65}, solid}, x axis line style={color={rgb,1:red,0.0;green,0.0;blue,0.0}, draw opacity={1.0}, line width={1.3}, solid}, scaled y ticks={false}, ylabel={Coverage}, y tick style={color={rgb,1:red,0.0;green,0.0;blue,0.0}, opacity={1.0}}, y tick label style={color={rgb,1:red,0.0;green,0.0;blue,0.0}, opacity={1.0}, rotate={0}}, ylabel style={at={(ticklabel cs:0.5)}, anchor=near ticklabel, font={{\fontsize{14.3 pt}{18.59 pt}\selectfont}}, color={rgb,1:red,0.0;green,0.0;blue,0.0}, draw opacity={1.0}, rotate={0.0}}, ymajorgrids={false}, ymin={0.0}, ymax={1.05}, ytick={{0.0,0.25,0.5,0.75,1.0}}, yticklabels={{$0.00$,$0.25$,$0.50$,$0.75$,$1.00$}}, ytick align={inside}, yticklabel style={font={{\fontsize{10.4 pt}{13.520000000000001 pt}\selectfont}}, color={rgb,1:red,0.0;green,0.0;blue,0.0}, draw opacity={1.0}, rotate={0.0}}, y grid style={color={rgb,1:red,0.0;green,0.0;blue,0.0}, draw opacity={0.1}, line width={0.65}, solid}, y axis line style={color={rgb,1:red,0.0;green,0.0;blue,0.0}, draw opacity={1.0}, line width={1.3}, solid}, colorbar={false}]
    \addplot[color={rgb,1:red,1.0;green,0.549;blue,0.0}, name path={90904bb6-6e61-48d9-baeb-3e85e5af326d}, draw opacity={0.7}, line width={1.3}, solid]
        table[row sep={\\}]
        {
            \\
            -3.0  0.9974937343358395  \\
            -2.8  0.9974937343358395  \\
            -2.6  1.0  \\
            -2.4  1.0  \\
            -2.2  1.0  \\
            -2.0  1.0  \\
            -1.8  1.0  \\
            -1.6  0.9974937343358395  \\
            -1.4  0.9974937343358395  \\
            -1.2  0.9974937343358395  \\
            -1.0  0.9974937343358395  \\
            -0.8  0.9974937343358395  \\
            -0.6  0.9974937343358395  \\
            -0.4  0.9974937343358395  \\
            -0.2  0.9974937343358395  \\
            0.0  1.0  \\
            0.2  0.9974937343358395  \\
            0.4  0.9974937343358395  \\
            0.6  0.9974937343358395  \\
            0.8  0.9974937343358395  \\
            1.0  0.9974937343358395  \\
            1.2  0.9974937343358395  \\
            1.4  0.9974937343358395  \\
            1.6  0.9974937343358395  \\
            1.8  1.0  \\
            2.0  1.0  \\
            2.2  1.0  \\
            2.4  1.0  \\
            2.6  1.0  \\
            2.8  0.9974937343358395  \\
            3.0  0.9974937343358395  \\
        }
        ;
    \addlegendentry {$\textrm{Gauss-F-Loc } (\mathcal{S}\mathcal{N})$}
    \addplot[color={rgb,1:red,0.0;green,0.0;blue,0.0}, name path={eece17bf-fc0e-484e-9731-0f229c036530}, draw opacity={0.9}, line width={1.3}, dashed]
        table[row sep={\\}]
        {
            \\
            -3.0  1.0  \\
            -2.8  0.9974619289340102  \\
            -2.6  0.9974619289340102  \\
            -2.4  1.0  \\
            -2.2  1.0  \\
            -2.0  1.0  \\
            -1.8  0.9974619289340102  \\
            -1.6  1.0  \\
            -1.4  1.0  \\
            -1.2  1.0  \\
            -1.0  1.0  \\
            -0.8  0.9974619289340102  \\
            -0.6  0.9974619289340102  \\
            -0.4  0.9974619289340102  \\
            -0.2  0.9974619289340102  \\
            0.0  1.0  \\
            0.2  0.9974619289340102  \\
            0.4  0.9974619289340102  \\
            0.6  0.9974619289340102  \\
            0.8  0.9974619289340102  \\
            1.0  1.0  \\
            1.2  1.0  \\
            1.4  1.0  \\
            1.6  1.0  \\
            1.8  0.9974619289340102  \\
            2.0  1.0  \\
            2.2  1.0  \\
            2.4  1.0  \\
            2.6  0.9974619289340102  \\
            2.8  0.9974619289340102  \\
            3.0  1.0  \\
        }
        ;
    \addlegendentry {$\textrm{DKW-F-Loc } (\mathcal{S}\mathcal{N})$}
    \addplot[color={rgb,1:red,0.0;green,0.0;blue,1.0}, name path={61a4598c-7270-4c7f-ab95-37cdbe5ad443}, draw opacity={0.4}, line width={1.3}, solid]
        table[row sep={\\}]
        {
            \\
            -3.0  0.9473684210526315  \\
            -2.8  0.9548872180451128  \\
            -2.6  0.9724310776942355  \\
            -2.4  0.9849624060150376  \\
            -2.2  0.9774436090225563  \\
            -2.0  0.9774436090225563  \\
            -1.8  0.9724310776942355  \\
            -1.6  0.9598997493734336  \\
            -1.4  0.9598997493734336  \\
            -1.2  0.974937343358396  \\
            -1.0  0.9799498746867168  \\
            -0.8  0.9849624060150376  \\
            -0.6  0.9849624060150376  \\
            -0.4  0.9874686716791979  \\
            -0.2  0.9874686716791979  \\
            0.0  1.0  \\
            0.2  0.9874686716791979  \\
            0.4  0.9874686716791979  \\
            0.6  0.9849624060150376  \\
            0.8  0.9849624060150376  \\
            1.0  0.9799498746867168  \\
            1.2  0.974937343358396  \\
            1.4  0.9598997493734336  \\
            1.6  0.9598997493734336  \\
            1.8  0.9724310776942355  \\
            2.0  0.9774436090225563  \\
            2.2  0.9774436090225563  \\
            2.4  0.9849624060150376  \\
            2.6  0.9724310776942355  \\
            2.8  0.9548872180451128  \\
            3.0  0.9473684210526315  \\
        }
        ;
    \addlegendentry {$\textrm{AMARI } (\mathcal{S}\mathcal{N})$}
    \addplot[color={rgb,1:red,0.8275;green,0.8275;blue,0.8275}, name path={317f3e52-5487-4632-a951-e61f017d7d96}, draw opacity={1.0}, line width={1.3}, dotted, forget plot]
        table[row sep={\\}]
        {
            \\
            -9.540000000000001  0.95  \\
            9.540000000000001  0.95  \\
        }
        ;
\end{axis}
\end{tikzpicture}\end{adjustbox}
 \\
c) Local false sign rate & d) Local false sign rate \\
\begin{adjustbox}{width=0.4\linewidth}\input{tikz_figures/simulations/spiky_lfsr_scalemix_intervals.tikz}\end{adjustbox} & 
\begin{adjustbox}{width=0.4\linewidth}\begin{tikzpicture}[/tikz/background rectangle/.style={fill={rgb,1:red,1.0;green,1.0;blue,1.0}, draw opacity={1.0}}, show background rectangle]
\begin{axis}[point meta max={nan}, point meta min={nan}, title={}, title style={at={{(0.5,1)}}, anchor={south}, font={{\fontsize{18.2 pt}{23.66 pt}\selectfont}}, color={rgb,1:red,0.0;green,0.0;blue,0.0}, draw opacity={1.0}, rotate={0.0}}, legend style={color={rgb,1:red,0.0;green,0.0;blue,0.0}, draw opacity={0.0}, line width={1.3}, solid, fill={rgb,1:red,0.0;green,0.0;blue,0.0}, fill opacity={0.0}, text opacity={1.0}, font={{\fontsize{10.4 pt}{13.520000000000001 pt}\selectfont}}, text={rgb,1:red,0.0;green,0.0;blue,0.0}, cells={anchor={west}}, at={(0.5, 0.02)}, anchor={south}}, axis background/.style={fill={rgb,1:red,1.0;green,1.0;blue,1.0}, opacity={1.0}}, anchor={north west}, xshift={1.0mm}, yshift={-1.0mm}, width={104.68mm}, height={81.82000000000001mm}, scaled x ticks={false}, xlabel={$z$}, x tick style={color={rgb,1:red,0.0;green,0.0;blue,0.0}, opacity={1.0}}, x tick label style={color={rgb,1:red,0.0;green,0.0;blue,0.0}, opacity={1.0}, rotate={0}}, xlabel style={at={(ticklabel cs:0.5)}, anchor=near ticklabel, font={{\fontsize{14.3 pt}{18.59 pt}\selectfont}}, color={rgb,1:red,0.0;green,0.0;blue,0.0}, draw opacity={1.0}, rotate={0.0}}, xmajorgrids={false}, xmin={-3.18}, xmax={3.18}, xtick={{-3.0,-2.0,-1.0,0.0,1.0,2.0,3.0}}, xticklabels={{$-3$,$-2$,$-1$,$0$,$1$,$2$,$3$}}, xtick align={inside}, xticklabel style={font={{\fontsize{10.4 pt}{13.520000000000001 pt}\selectfont}}, color={rgb,1:red,0.0;green,0.0;blue,0.0}, draw opacity={1.0}, rotate={0.0}}, x grid style={color={rgb,1:red,0.0;green,0.0;blue,0.0}, draw opacity={0.1}, line width={0.65}, solid}, x axis line style={color={rgb,1:red,0.0;green,0.0;blue,0.0}, draw opacity={1.0}, line width={1.3}, solid}, scaled y ticks={false}, ylabel={Coverage}, y tick style={color={rgb,1:red,0.0;green,0.0;blue,0.0}, opacity={1.0}}, y tick label style={color={rgb,1:red,0.0;green,0.0;blue,0.0}, opacity={1.0}, rotate={0}}, ylabel style={at={(ticklabel cs:0.5)}, anchor=near ticklabel, font={{\fontsize{14.3 pt}{18.59 pt}\selectfont}}, color={rgb,1:red,0.0;green,0.0;blue,0.0}, draw opacity={1.0}, rotate={0.0}}, ymajorgrids={false}, ymin={0.0}, ymax={1.05}, ytick={{0.0,0.25,0.5,0.75,1.0}}, yticklabels={{$0.00$,$0.25$,$0.50$,$0.75$,$1.00$}}, ytick align={inside}, yticklabel style={font={{\fontsize{10.4 pt}{13.520000000000001 pt}\selectfont}}, color={rgb,1:red,0.0;green,0.0;blue,0.0}, draw opacity={1.0}, rotate={0.0}}, y grid style={color={rgb,1:red,0.0;green,0.0;blue,0.0}, draw opacity={0.1}, line width={0.65}, solid}, y axis line style={color={rgb,1:red,0.0;green,0.0;blue,0.0}, draw opacity={1.0}, line width={1.3}, solid}, colorbar={false}]
    \addplot[color={rgb,1:red,1.0;green,0.549;blue,0.0}, name path={87b84207-27c2-4e28-9a3e-7a9a8e3659fa}, draw opacity={0.7}, line width={1.3}, solid]
        table[row sep={\\}]
        {
            \\
            -3.0  1.0  \\
            -2.8  0.9974937343358395  \\
            -2.6  0.9974937343358395  \\
            -2.4  0.9974937343358395  \\
            -2.2  0.9974937343358395  \\
            -2.0  0.9974937343358395  \\
            -1.8  0.9974937343358395  \\
            -1.6  0.9974937343358395  \\
            -1.4  0.9974937343358395  \\
            -1.2  0.9974937343358395  \\
            -1.0  0.9974937343358395  \\
            -0.8  0.9974937343358395  \\
            -0.6  0.9974937343358395  \\
            -0.4  0.9974937343358395  \\
            -0.2  0.9974937343358395  \\
            0.0  1.0  \\
            0.2  0.9974937343358395  \\
            0.4  0.9974937343358395  \\
            0.6  0.9974937343358395  \\
            0.8  0.9974937343358395  \\
            1.0  0.9974937343358395  \\
            1.2  0.9974937343358395  \\
            1.4  0.9974937343358395  \\
            1.6  0.9974937343358395  \\
            1.8  0.9974937343358395  \\
            2.0  0.9974937343358395  \\
            2.2  0.9974937343358395  \\
            2.4  0.9974937343358395  \\
            2.6  0.9974937343358395  \\
            2.8  0.9974937343358395  \\
            3.0  1.0  \\
        }
        ;
    \addlegendentry {$\textrm{Gauss-F-Loc } (\mathcal{S}\mathcal{N})$}
    \addplot[color={rgb,1:red,0.0;green,0.0;blue,0.0}, name path={a1473e22-386c-4bdb-8688-53166d323df3}, draw opacity={0.9}, line width={1.3}, dashed]
        table[row sep={\\}]
        {
            \\
            -3.0  0.9974619289340102  \\
            -2.8  0.9974619289340102  \\
            -2.6  1.0  \\
            -2.4  1.0  \\
            -2.2  1.0  \\
            -2.0  1.0  \\
            -1.8  1.0  \\
            -1.6  1.0  \\
            -1.4  1.0  \\
            -1.2  1.0  \\
            -1.0  1.0  \\
            -0.8  1.0  \\
            -0.6  1.0  \\
            -0.4  1.0  \\
            -0.2  1.0  \\
            0.0  1.0  \\
            0.2  1.0  \\
            0.4  1.0  \\
            0.6  1.0  \\
            0.8  1.0  \\
            1.0  1.0  \\
            1.2  1.0  \\
            1.4  1.0  \\
            1.6  1.0  \\
            1.8  1.0  \\
            2.0  1.0  \\
            2.2  1.0  \\
            2.4  1.0  \\
            2.6  1.0  \\
            2.8  0.9974619289340102  \\
            3.0  0.9974619289340102  \\
        }
        ;
    \addlegendentry {$\textrm{DKW-F-Loc } (\mathcal{S}\mathcal{N})$}
    \addplot[color={rgb,1:red,0.0;green,0.0;blue,1.0}, name path={5e36a4f3-8f33-4a2a-84af-aab30e923210}, draw opacity={0.4}, line width={1.3}, solid]
        table[row sep={\\}]
        {
            \\
            -3.0  0.9949874686716792  \\
            -2.8  0.9924812030075187  \\
            -2.6  0.9874686716791979  \\
            -2.4  0.9874686716791979  \\
            -2.2  0.9874686716791979  \\
            -2.0  0.9874686716791979  \\
            -1.8  0.9899749373433584  \\
            -1.6  0.9899749373433584  \\
            -1.4  0.9899749373433584  \\
            -1.2  0.9899749373433584  \\
            -1.0  0.9899749373433584  \\
            -0.8  0.9949874686716792  \\
            -0.6  0.9949874686716792  \\
            -0.4  0.9949874686716792  \\
            -0.2  0.9949874686716792  \\
            0.0  1.0  \\
            0.2  0.9949874686716792  \\
            0.4  0.9949874686716792  \\
            0.6  0.9949874686716792  \\
            0.8  0.9949874686716792  \\
            1.0  0.9899749373433584  \\
            1.2  0.9899749373433584  \\
            1.4  0.9899749373433584  \\
            1.6  0.9899749373433584  \\
            1.8  0.9899749373433584  \\
            2.0  0.9874686716791979  \\
            2.2  0.9874686716791979  \\
            2.4  0.9874686716791979  \\
            2.6  0.9874686716791979  \\
            2.8  0.9924812030075187  \\
            3.0  0.9949874686716792  \\
        }
        ;
    \addlegendentry {$\textrm{AMARI } (\mathcal{S}\mathcal{N})$}
    \addplot[color={rgb,1:red,0.8275;green,0.8275;blue,0.8275}, name path={0109d143-27fd-4ab9-a1bd-d3fcdfb85fb5}, draw opacity={1.0}, line width={1.3}, dotted, forget plot]
        table[row sep={\\}]
        {
            \\
            -9.540000000000001  0.95  \\
            9.540000000000001  0.95  \\
        }
        ;
\end{axis}
\end{tikzpicture}\end{adjustbox} 
\end{tabular}
\caption{\textbf{Simulation results: Inference in the Gaussian empirical Bayes problem with $\gcal= \set\nn$ a Gaussian scale mixture.} \textbf{a,b)} Inference for the posterior mean (same simulation setting as Figure~\ref{fig:postmean_simulations}a,b). \textbf{c,d)} Inference for the local false sign rate (same simulation setting as Figure~\ref{fig:lfsr_simulations}a,b).}
\label{fig:scalemix_simulations}
\end{figure}

\paragraph{Degrees of freedom for the logspline approach:} One might at this point wonder whether one can reduce the bias of the plug-in logspline approach and achieve nominal coverage by increasing the degrees of freedom of the spline; we explore this in Figure~\ref{fig:gmodel_varying_dof} for the above simulation with the prior $G^{\text{NegSpiky}}$. In general, coverage indeed improves as the degrees of freedom increase; however, 
with many degrees of freedom, the variance can be so large that the resulting confidence intervals
are longer than the intervals proposed in this work. More importantly it is not clear a-priori, i.e., without knowing the ground truth, how to properly undersmooth the plug-in estimation and choose a number of degrees of freedom that provides good coverage. \citet{efron2016empirical} does not suggest undersmoothing, and instead, acknowledges that using a low-dimensional parametric family induces `definitional bias' in point estimates, `the pay-off being reduced variability'. On the other hand, as this example highlights, if we want confidence intervals that cover the true local false sign rate, it is important to explicitly account for bias.

\begin{figure}
\centering
\begin{adjustbox}{width=0.6\linewidth}\input{tikz_figures/simulations/varying_dof_plot.tikz}\end{adjustbox}
\caption{\textbf{Coverage versus expected length of confidence intervals:} Here the simulation setting is the same as that of Figure~\ref{fig:lfsr_simulations}, panels c,d) but we only consider inference at $\zo=2$, i.e., for \smash{$\theta_{\gprior}(2) = \PP[\gprior]{\mu \geq 0 \cond \Zo=2}$}. We apply the exponential family plug-in estimator for a range of degrees of freedom (from $2$ to $12$ shown by the number as well as progressively darker blue color), while in Figure~\ref{fig:lfsr_simulations} only the estimator with $5$ degrees is considered.}
\label{fig:gmodel_varying_dof}
\end{figure}

\section{On the asymptotic power of $F$-Localization and \Amari}
\label{sec:power}

Our goal in this work is to provide a unified approach for constructing intervals with the coverage property~\eqref{eq:CI} that lead to useful confidence statements in applied situations (cf. Sections~\ref{sec:applications} and~\ref{sec:simulations}). Given the generality of~\eqref{eq:EB}, we suspect it may be difficult to develop a unified theory of optimality. Nevertheless in this section we consider the issue of optimality and asymptotic relative efficiency in two concrete settings to provide the following conceptual insights. First, we describe a situation in which \Amari~is asymptotically efficient and outperforms the $F$-localization approach. Second, we illustrate the form of the $Q(\cdot)$ that solves the worst-case bias-variance problem~\eqref{eq:minimax_problem_tractable1} in a familiar context. Third, we elaborate on the issue of partial identification.

\subsection{Asymptotic relative efficiency in the Poisson model}
\label{subsec:power_poisson}

Consider the Poisson model in which we seek to conduct inference for the posterior mean \smash{$\theta_{\gprior}(\zo) = \EE[\gprior]{\mu \mid \Zo=z}$}. In view of Robbins' formula~\eqref{eq:robbins_poisson}, $\theta_{\gprior}(\zo)$ can be estimated at the \emph{parametric} \smash{$1/\sqrt{n}$} rate and so we can compare confidence intervals and estimators in terms of their asymptotic relative efficiency. Before studying \smash{$\theta_{\gprior}(\zo)$}, we first discuss inference for the linear functional \smash{$L(\gprior) = f_{\gprior}(\zo) = \PP[\gprior]{\Zo = \zo}$} for a fixed $\zo$. A consistent estimator in this case is given by the sample proportion \smash{$\hat{f}(\zo) = \#\cb{\Zo_i = \zo}/n$}, which has the limiting distribution,
\begin{equation}
\label{eq:sample_freq_clt}
\sqrt{n}\p{ \hat{f}(\zo) - f_{\gprior}(\zo)}  \xrightarrow[]{\mathcal{D}} \nn(0, f_{\gprior}(\zo)(1-f_{\gprior}(\zo)),
\end{equation}
and so we can build an asymptotic $1-\alpha$ confidence interval with asymptotic length equal to \smash{$2q_{1-\alpha/2}\sqrt{f_{\gprior}(\zo)(1-f_{\gprior}(\zo))}/\sqrt{n}$}, where \smash{$q_{1-\alpha/2}$} is the $1-\alpha/2$ quantile of the standard normal distribution. \citet{tierney1984asymptotic} prove that among a class of regular estimators, the estimator in~\eqref{eq:sample_freq_clt} is asymptotically efficient and so, asymptotically, the information that $F_{\gprior}$ is a Poisson mixture, is not helpful for inference of $L(\gprior)$. In this setting, it turns out that \Amari\footnote{We slightly abuse terminology in this section, and use the term \Amari~also for Algorithm~\ref{alg:linear_ci}, i.e., for our inference approach for linear functionals. The pilot estimates for \Amari~are chosen as in Proposition~\ref{prop:applications}.} matches the efficient confidence interval based on~\eqref{eq:sample_freq_clt} and is shorter than the DKW-$F$-Localization interval.
\begin{prop}
\label{prop:asymptotics_poisson_marginal_prob}
We consider inference for \smash{$L(\gprior) = f_{\gprior}(\zo)$}, in the Poisson model at level $\alpha \in (0,1)$ for a fixed \smash{$\zo \in \NN_{\geq 1}$}. We treat the observations as right-censored for $\Zo_i > M$ as in~\eqref{eq:tilde_Z}, for fixed $M \geq \zo+1$, and consider the prior class \smash{$\gcal = \pp([a, b])$}, $0 \leq a < b< \infty$. If \smash{$\gprior \in \gcal$} and $\gprior$ is supported on at least $M+2$ points, then as \smash{$n\to \infty$} it holds that: \Amari~has asymptotically the same length as the confidence interval constructed using~\eqref{eq:sample_freq_clt}, namely \smash{$2q_{1-\alpha/2}\sqrt{f_{\gprior}(\zo)(1-f_{\gprior}(\zo))/n}(1+o_{\mathbb P_G}(1))$}. The DKW-$F$-Localization interval has asymptotic length \smash{$2\sqrt{2 \log(2/\alpha)/n}(1+o_{\mathbb P_G}(1))$}.
\end{prop}
The key argument in the proof of the above proposition is that, in this setting, 
the optimal \smash{$Q(\cdot)$} solving~\eqref{eq:minimax_problem_tractable1} is with high probability equal to \smash{$\ind(\cdot = \zo)$} for $n$ large enough, in which case, \smash{$\hL = \#\cb{\Zo_i = \zo}/n$}. In finite samples, however, the optimal $Q(\cdot)$ takes the form of a kernel smoother that upweights \smash{$\Zo_i$} in a neighborhood of $\zo$. To illustrate, we simulate from the Poisson empirical Bayes model~\eqref{eq:EB} with \smash{$\gprior = U[0,\,2]$} and let $n$ vary. We specify \smash{$\gcal = \mathcal{P}([0,4])$}. The optimal \smash{$Q(\cdot)$} of \Amari~for different values of $n$ is shown in Figure~\ref{fig:poisson_rel_efficiency}a). Figure~\ref{fig:poisson_rel_efficiency}b) shows the expected length of the \Amari~and DKW-$F$-Localization confidence intervals, as well as the asymptotic lengths from Proposition~\ref{prop:asymptotics_poisson_marginal_prob}. As expected, \Amari~has shorter length than the DKW-$F$-localization intervals. The information that \smash{$F_{\gprior}$} is a Poisson mixture is not helpful to both approaches for $n$ large, however, both \Amari~and DKW-$F$-Localization can use this information for smaller $n$ to provide sharper inference.

\begin{figure}
\centering
\begin{tabular}{@{}lll@{}}
a) & b) & c)\\
\begin{adjustbox}{width=0.308\linewidth}\input{tikz_figures/relative_efficiency/Q_plot.tikz}\end{adjustbox} &
\begin{adjustbox}{width=0.33\linewidth}\begin{tikzpicture}[/tikz/background rectangle/.style={fill={rgb,1:red,1.0;green,1.0;blue,1.0}, draw opacity={1.0}}, show background rectangle]
\begin{axis}[point meta max={nan}, point meta min={nan}, title={}, title style={at={{(0.5,1)}}, anchor={south}, font={{\fontsize{18.2 pt}{23.66 pt}\selectfont}}, color={rgb,1:red,0.0;green,0.0;blue,0.0}, draw opacity={1.0}, rotate={0.0}}, legend style={color={rgb,1:red,0.0;green,0.0;blue,0.0}, draw opacity={0.0}, line width={1.3}, solid, fill={rgb,1:red,0.0;green,0.0;blue,0.0}, fill opacity={0.0}, text opacity={1.0}, font={{\fontsize{10.4 pt}{13.520000000000001 pt}\selectfont}}, text={rgb,1:red,0.0;green,0.0;blue,0.0}, cells={anchor={west}}, at={(0.98, 0.98)}, anchor={north east}}, axis background/.style={fill={rgb,1:red,1.0;green,1.0;blue,1.0}, opacity={1.0}}, anchor={north west}, xshift={1.0mm}, yshift={-1.0mm}, width={104.68mm}, height={74.2mm}, scaled x ticks={false}, xlabel={$n$}, x tick style={color={rgb,1:red,0.0;green,0.0;blue,0.0}, opacity={1.0}}, x tick label style={color={rgb,1:red,0.0;green,0.0;blue,0.0}, opacity={1.0}, rotate={0}}, xlabel style={at={(ticklabel cs:0.5)}, anchor=near ticklabel, font={{\fontsize{14.3 pt}{18.59 pt}\selectfont}}, color={rgb,1:red,0.0;green,0.0;blue,0.0}, draw opacity={1.0}, rotate={0.0}}, xmode={log}, log basis x={10}, xmajorgrids={false}, xmin={66.06934480075961}, xmax={1.5135612484362072e8}, xtick={{100.0,1000.0,10000.0,100000.0,1.0e6,1.0e7,1.0e8}}, xticklabels={{$10^{2}$,$10^{3}$,$10^{4}$,$10^{5}$,$10^{6}$,$10^{7}$,$10^{8}$}}, xtick align={inside}, xticklabel style={font={{\fontsize{10.4 pt}{13.520000000000001 pt}\selectfont}}, color={rgb,1:red,0.0;green,0.0;blue,0.0}, draw opacity={1.0}, rotate={0.0}}, x grid style={color={rgb,1:red,0.0;green,0.0;blue,0.0}, draw opacity={0.1}, line width={0.65}, solid}, x axis line style={color={rgb,1:red,0.0;green,0.0;blue,0.0}, draw opacity={1.0}, line width={1.3}, solid}, scaled y ticks={false}, ylabel={Expected CI length}, y tick style={color={rgb,1:red,0.0;green,0.0;blue,0.0}, opacity={1.0}}, y tick label style={color={rgb,1:red,0.0;green,0.0;blue,0.0}, opacity={1.0}, rotate={0}}, ylabel style={at={(ticklabel cs:0.5)}, anchor=near ticklabel, font={{\fontsize{14.3 pt}{18.59 pt}\selectfont}}, color={rgb,1:red,0.0;green,0.0;blue,0.0}, draw opacity={1.0}, rotate={0.0}}, ymode={log}, log basis y={10}, ymajorgrids={false}, ymin={7.802184304047143e-5}, ymax={0.7029391637557397}, ytick={{0.0001,0.001,0.01,0.1}}, yticklabels={{$10^{-4}$,$10^{-3}$,$10^{-2}$,$10^{-1}$}}, ytick align={inside}, yticklabel style={font={{\fontsize{10.4 pt}{13.520000000000001 pt}\selectfont}}, color={rgb,1:red,0.0;green,0.0;blue,0.0}, draw opacity={1.0}, rotate={0.0}}, y grid style={color={rgb,1:red,0.0;green,0.0;blue,0.0}, draw opacity={0.1}, line width={0.65}, solid}, y axis line style={color={rgb,1:red,0.0;green,0.0;blue,0.0}, draw opacity={1.0}, line width={1.3}, solid}, colorbar={false}]
    \addplot[color={rgb,1:red,0.0;green,0.0;blue,1.0}, name path={a9648438-6e45-43d2-9229-87234d54182a}, draw opacity={1.0}, line width={1.3}, solid]
        table[row sep={\\}]
        {
            \\
            100.0  0.058242255436050244  \\
            316.0  0.03455333731347423  \\
            1000.0  0.020858018287577062  \\
            3162.0  0.012514681428636232  \\
            10000.0  0.007675150662951855  \\
            31622.0  0.004631472776171761  \\
            100000.0  0.002739992332364611  \\
            316227.0  0.0016081449545843817  \\
            1.0e6  0.0009429678561132582  \\
            3.162277e6  0.0005591069810104321  \\
            1.0e7  0.000319293075911421  \\
            3.1622776e7  0.0001795304522215613  \\
            1.0e8  0.00010095822820660405  \\
        }
        ;
    \addlegendentry {$\textrm{AMARI}$}
    \addplot[color={rgb,1:red,0.0;green,0.0;blue,1.0}, name path={7c7aa1ed-00f6-447d-a304-e35b0aac8807}, draw opacity={1.0}, line width={1.3}, dotted]
        table[row sep={\\}]
        {
            \\
            100.0  0.10095990672870495  \\
            316.0  0.056794384769542905  \\
            1000.0  0.03192632576208669  \\
            3162.0  0.017954280582336  \\
            10000.0  0.010095990672870495  \\
            31622.0  0.005677462488937356  \\
            100000.0  0.0031926325762086693  \\
            316227.0  0.0017953514082442417  \\
            1.0e6  0.0010095990672870495  \\
            3.162277e6  0.0005677393366707888  \\
            1.0e7  0.0003192632576208669  \\
            3.1622776e7  0.00017953492508310373  \\
            1.0e8  0.00010095990672870496  \\
        }
        ;
    \addlegendentry {$\textrm{AMARI asymp.}$}
    \addplot[color={rgb,1:red,1.0;green,0.549;blue,0.0}, name path={e67c4039-5e1a-4d0f-839c-22d06d89b244}, draw opacity={1.0}, line width={1.3}, solid]
        table[row sep={\\}]
        {
            \\
            100.0  0.10072548400512861  \\
            316.0  0.06833405494673572  \\
            1000.0  0.04876458481930885  \\
            3162.0  0.034409778118529605  \\
            10000.0  0.023510215249875552  \\
            31622.0  0.015804584265707602  \\
            100000.0  0.010210546606749627  \\
            316227.0  0.006751374149450681  \\
            1.0e6  0.00458485210787465  \\
            3.162277e6  0.0029762951430524864  \\
            1.0e7  0.0017178776333870232  \\
            3.1622776e7  0.0009660335940639852  \\
            1.0e8  0.0005432406062961437  \\
        }
        ;
    \addlegendentry {$\textrm{DKW-F-Loc}$}
    \addplot[color={rgb,1:red,1.0;green,0.549;blue,0.0}, name path={4d5be5ff-d074-4a11-a3bd-71861b8c6526}, draw opacity={1.0}, line width={1.3}, dotted]
        table[row sep={\\}]
        {
            \\
            100.0  0.5432406062962478  \\
            316.0  0.305596716717813  \\
            1000.0  0.17178776333869503  \\
            3162.0  0.09660759984030413  \\
            10000.0  0.05432406062962478  \\
            31622.0  0.030549039362747596  \\
            100000.0  0.0171787763338695  \\
            316227.0  0.009660347549153634  \\
            1.0e6  0.005432406062962478  \\
            3.162277e6  0.0030548667432908697  \\
            1.0e7  0.0017178776333869501  \\
            3.1622776e7  0.0009660335940639965  \\
            1.0e8  0.0005432406062962478  \\
        }
        ;
    \addlegendentry {$\textrm{DKW-F-Loc asymp.}$}
\end{axis}
\end{tikzpicture}\end{adjustbox} & 
\begin{adjustbox}{width=0.33\linewidth}\begin{tikzpicture}[/tikz/background rectangle/.style={fill={rgb,1:red,1.0;green,1.0;blue,1.0}, draw opacity={1.0}}, show background rectangle]
\begin{axis}[point meta max={nan}, point meta min={nan}, title={}, title style={at={{(0.5,1)}}, anchor={south}, font={{\fontsize{18.2 pt}{23.66 pt}\selectfont}}, color={rgb,1:red,0.0;green,0.0;blue,0.0}, draw opacity={1.0}, rotate={0.0}}, legend style={color={rgb,1:red,0.0;green,0.0;blue,0.0}, draw opacity={0.0}, line width={1.3}, solid, fill={rgb,1:red,0.0;green,0.0;blue,0.0}, fill opacity={0.0}, text opacity={1.0}, font={{\fontsize{10.4 pt}{13.520000000000001 pt}\selectfont}}, text={rgb,1:red,0.0;green,0.0;blue,0.0}, cells={anchor={west}}, at={(0.98, 0.98)}, anchor={north east}}, axis background/.style={fill={rgb,1:red,1.0;green,1.0;blue,1.0}, opacity={1.0}}, anchor={north west}, xshift={1.0mm}, yshift={-1.0mm}, width={104.68mm}, height={74.2mm}, scaled x ticks={false}, xlabel={$n$}, x tick style={color={rgb,1:red,0.0;green,0.0;blue,0.0}, opacity={1.0}}, x tick label style={color={rgb,1:red,0.0;green,0.0;blue,0.0}, opacity={1.0}, rotate={0}}, xlabel style={at={(ticklabel cs:0.5)}, anchor=near ticklabel, font={{\fontsize{14.3 pt}{18.59 pt}\selectfont}}, color={rgb,1:red,0.0;green,0.0;blue,0.0}, draw opacity={1.0}, rotate={0.0}}, xmode={log}, log basis x={10}, xmajorgrids={false}, xmin={66.06934480075961}, xmax={1.5135612484362072e8}, xtick={{100.0,1000.0,10000.0,100000.0,1.0e6,1.0e7,1.0e8}}, xticklabels={{$10^{2}$,$10^{3}$,$10^{4}$,$10^{5}$,$10^{6}$,$10^{7}$,$10^{8}$}}, xtick align={inside}, xticklabel style={font={{\fontsize{10.4 pt}{13.520000000000001 pt}\selectfont}}, color={rgb,1:red,0.0;green,0.0;blue,0.0}, draw opacity={1.0}, rotate={0.0}}, x grid style={color={rgb,1:red,0.0;green,0.0;blue,0.0}, draw opacity={0.1}, line width={0.65}, solid}, x axis line style={color={rgb,1:red,0.0;green,0.0;blue,0.0}, draw opacity={1.0}, line width={1.3}, solid}, scaled y ticks={false}, ylabel={Expected CI length}, y tick style={color={rgb,1:red,0.0;green,0.0;blue,0.0}, opacity={1.0}}, y tick label style={color={rgb,1:red,0.0;green,0.0;blue,0.0}, opacity={1.0}, rotate={0}}, ylabel style={at={(ticklabel cs:0.5)}, anchor=near ticklabel, font={{\fontsize{14.3 pt}{18.59 pt}\selectfont}}, color={rgb,1:red,0.0;green,0.0;blue,0.0}, draw opacity={1.0}, rotate={0.0}}, ymode={log}, log basis y={10}, ymajorgrids={false}, ymin={0.0031603867609636227}, ymax={54.873426474522326}, ytick={{0.01,0.1,1.0,10.0}}, yticklabels={{$10^{-2}$,$10^{-1}$,$10^{0}$,$10^{1}$}}, ytick align={inside}, yticklabel style={font={{\fontsize{10.4 pt}{13.520000000000001 pt}\selectfont}}, color={rgb,1:red,0.0;green,0.0;blue,0.0}, draw opacity={1.0}, rotate={0.0}}, y grid style={color={rgb,1:red,0.0;green,0.0;blue,0.0}, draw opacity={0.1}, line width={0.65}, solid}, y axis line style={color={rgb,1:red,0.0;green,0.0;blue,0.0}, draw opacity={1.0}, line width={1.3}, solid}, colorbar={false}]
    \addplot[color={rgb,1:red,0.0;green,0.0;blue,1.0}, name path={b3863b10-2fbb-4923-84cf-e331cd9901ae}, draw opacity={1.0}, line width={1.3}, solid]
        table[row sep={\\}]
        {
            \\
            100.0  1.8459948244092508  \\
            316.0  1.005120690988898  \\
            1000.0  0.5653849197275671  \\
            3162.0  0.34989602894629657  \\
            10000.0  0.2064815494180251  \\
            31622.0  0.12433069732260586  \\
            100000.0  0.07674535269289942  \\
            316227.0  0.04924578772948537  \\
            1.0e6  0.03222068685690074  \\
            3.162277e6  0.02060934688814652  \\
            1.0e7  0.012538215511988731  \\
            3.1622776e7  0.007408849977324161  \\
            1.0e8  0.004166411676640141  \\
        }
        ;
    \addlegendentry {$\textrm{AMARI}$}
    \addplot[color={rgb,1:red,0.0;green,0.0;blue,1.0}, name path={e2d8d1eb-07d8-40c3-823a-1937f7373541}, draw opacity={1.0}, line width={1.3}, dotted]
        table[row sep={\\}]
        {
            \\
            100.0  4.166098476176573  \\
            316.0  2.3436134948063034  \\
            1000.0  1.3174360141274704  \\
            3162.0  0.7408812408663771  \\
            10000.0  0.41660984761765724  \\
            31622.0  0.2342798105714432  \\
            100000.0  0.13174360141274702  \\
            316227.0  0.07408496113399435  \\
            1.0e6  0.041660984761765725  \\
            3.162277e6  0.023427695825091136  \\
            1.0e7  0.013174360141274702  \\
            3.1622776e7  0.007408487210859648  \\
            1.0e8  0.004166098476176572  \\
        }
        ;
    \addlegendentry {$\textrm{AMARI asymp.}$}
    \addplot[color={rgb,1:red,1.0;green,0.549;blue,0.0}, name path={eb2e31ee-80c5-445b-be71-18606967e28c}, draw opacity={1.0}, line width={1.3}, solid]
        table[row sep={\\}]
        {
            \\
            100.0  3.2314837546207436  \\
            316.0  2.554972178101104  \\
            1000.0  1.6976303936205601  \\
            3162.0  1.1191965174618954  \\
            10000.0  0.7214689642633503  \\
            31622.0  0.47599249611354977  \\
            100000.0  0.3059030626490075  \\
            316227.0  0.20160311245545248  \\
            1.0e6  0.13697700845303373  \\
            3.162277e6  0.09501342471131992  \\
            1.0e7  0.06291605729992926  \\
            3.1622776e7  0.04331838845902233  \\
            1.0e8  0.029942914580171957  \\
        }
        ;
    \addlegendentry {$\textrm{DKW-F-Loc}$}
    \addplot[color={rgb,1:red,1.0;green,0.549;blue,0.0}, name path={a12ed922-db04-493d-8431-9db1fe94faa8}, draw opacity={1.0}, line width={1.3}, dotted]
        table[row sep={\\}]
        {
            \\
            100.0  41.62677659937315  \\
            316.0  23.416891353252634  \\
            1000.0  13.163542570501756  \\
            3162.0  7.402728974499625  \\
            10000.0  4.162677659937315  \\
            31622.0  2.340874415755791  \\
            100000.0  1.3163542570501754  \\
            316227.0  0.7402412938947817  \\
            1.0e6  0.41626776599373144  \\
            3.162277e6  0.23408459159711004  \\
            1.0e7  0.13163542570501754  \\
            3.1622776e7  0.07402404043717978  \\
            1.0e8  0.04162677659937314  \\
        }
        ;
    \addlegendentry {$\textrm{DKW-F-Loc asymp.}$}
\end{axis}
\end{tikzpicture}\end{adjustbox}
\end{tabular}
\caption{\textbf{Relative efficiency in the Poisson empirical Bayes problem.} We draw $n$ samples \smash{$\mu_i \sim U[0,2],\; \Zo_i \mid \mu_i \sim \Poisson{\mu_i}$}. \textbf{a)} Optimal \smash{$Q(\cdot)$}~\eqref{eq:minimax_problem_tractable1} of \Amari~for estimating \smash{$L(\gprior)= f_{\gprior}(3)$}, when $\gcal = \mathcal{P}([0,4])$. Each $Q(\cdot)$ corresponds to a single simulation and a different value of $n$. For example, for \smash{$n=10^7$}, \smash{$Q(\cdot)$} is the indicator \smash{$\ind(\cdot = 3)$}. All \smash{$Q(\cdot)$} are constant for \smash{$\zo \in \cb{6,7,\dotsc}$} by construction~\eqref{eq:minimax_problem_tractable1}. \textbf{b)} Confidence interval length of \Amari~and DKW-$F$-Localization in the same setting as panel a), averaged over 50 Monte Carlo replicates. The panel also shows the asymptotic interval length for the two methods as per Proposition~\ref{prop:asymptotics_poisson_marginal_prob}. \textbf{c)} Analogous to panel b) for inference of the posterior mean \smash{$\theta_{\gprior}(3) = \EE[\gprior]{\mu \cond \Zo=3}$}. Asymptotic lengths are derived in Proposition~\ref{prob:asymptotics_poisson_robbins}.} 
\label{fig:poisson_rel_efficiency}
\end{figure}

The next proposition and  Figure~\ref{fig:poisson_rel_efficiency}c) pertain to the posterior mean $\theta_{\gprior}(\zo) = \EE[\gprior]{\mu \mid \Zo=z}$ and are analogous to Proposition~\ref{prop:asymptotics_poisson_marginal_prob} and  Figure~\ref{fig:poisson_rel_efficiency}b). Our findings are similar; \Amari~outperforms the DKW-$F$-localization intervals and for small $n$, both methods perform better than predicted by the asymptotic limit.

\begin{prop}
\label{prob:asymptotics_poisson_robbins}
We consider inference for $\theta_{\gprior}(\zo) = \EE[\gprior]{\mu \mid \Zo=z}$ in the setting of Proposition~\ref{prop:asymptotics_poisson_marginal_prob}. Then:
$$
\begin{aligned}
\sqrt{n}\abs{\ii_{\alpha}^{\text{AMARI}}(\zo)} &\xrightarrow{\mathbb P_{\gprior}} 2(\zo+1) q_{1-\alpha/2}\sqrt{f_{\gprior}(\zo+1) \p{1 + f_{\gprior}(\zo+1)/ f_{\gprior}(\zo)}}\, \big/\, f_{\gprior}(\zo)\,,\\
\sqrt{n}\abs{\ii_{\alpha}^{\text{DKW-F-Loc}}(\zo)} &\xrightarrow{\mathbb P_{\gprior}} 2(\zo+1)\sqrt{2 \log(2/\alpha)} \cdot \p{1 + f_{\gprior}(\zo+1)  / f_{\gprior}(\zo)}\, \big / \, f_{\gprior}(\zo).
\end{aligned}
$$
\end{prop}
In this case, \Amari~is asymptotically equivalent to the intervals constructed by~\citet{robbins1980empirical} and~\citet{karlis2018confidence}. The confidence intervals of~\citet{robbins1980empirical} are based on the joint central limit theorem for $\#\cb{Z_i =z}/n$ and $\#\cb{Z_i =z+1}/n$, the delta method and his formula~\eqref{eq:robbins_poisson}.

\subsection{Sharp partial identification in the Bernoulli model}
\label{subsec:bernoulli_partial}

Above we compared methods in a problem in which parametric rates are attainable. Here we compare methods under partial identification, i.e., when confidence intervals will not shrink to a point mass, even as $n\to \infty$. We consider model~\eqref{eq:EB} with $\Zo_i \cond \mu_i \; \sim \; \text{Bernoulli}(\mu_i)$, i.e., the Binomial model with a single ($N=1$) trial.  Furthermore, we do not impose additional structure on $\gcal$, i.e., we assume that $\gprior \in \gcal = \pp([0,1])$~\eqref{eq:all_dbns}. $\Zo_i$ is supported on $\cb{0,1}$ and we can take $\lambda = \delta_0 + \delta_1$ to be the counting measure on $\cb{0,1}$ and $p(\zo \cond \mu) = \mu^{\zo}(1-\mu)^{1-\zo}$. The marginal distribution $F_{\gprior}$ is fully determined by $f_{\gprior}(1)$.

We first consider inference for the second moment $L(\gprior) = \int \mu^2\, d\gprior(\mu)$, which is a linear functional of $\gprior$. The distribution of $\Zo_i$, however, does not point identify $L(\gprior)$, unless \smash{$f_{\gprior}(1) \in \cb{0,1}$}, and the partial identification interval for $L(\gprior)$ is equal to \smash{$[f_{\gprior}(1)^2,\, f_{\gprior}(1)]$}.\footnote{ The right bound is attained by the prior on $[0,1]$ with $\PP[\gprior]{\cb{0}} = f_{\gprior}(0), \PP[\gprior]{\cb{1}} = f_{\gprior}(1)$, and the left bound by the point mass prior with $\PP[\gprior]{\cb{f_{\gprior}(1)}}=1$. Both of these priors induce the same marginal distribution $F_{\gprior}$.} The posterior mean, $\theta_{\gprior}(1) = \EE[\gprior]{\mu \mid \Zo=1} = \int \mu^2 \, dG(\mu) \big / f_{\gprior}(1)$ is also not identified and its partial identification interval is equal to $[f_{\gprior}(1), 1]$. The next proposition shows that, as $n\to \infty$, both \Amari~and $F$-localization intervals converge to the corresponding partial identification intervals, and so all proposed intervals have the same asymptotic length (which is the best possible).

\begin{prop}
\label{prop:bernoulli_partial}
Consider inference for $L(\gprior) = \int \mu^2\, dG(\mu)$ in the Bernoulli model with the DKW-$F$-localization~\eqref{eq:floc_chisq}, the $\chi^2$-$F$-localization~\eqref{eq:floc_chisq}, or \Amari~ (using any of the above as a pilot $F$-localization). We use $\gcal = \pp([0,1])$ and suppose that $0<f_{\gprior}(1) <1$. The length of all these confidence intervals asymptotically converges to the length of the partial identification interval, i.e., \smash{$\abs{\mathcal{I}_{\alpha}}\big / (f_{\gprior}(1)(1-f_{\gprior}(1))) \stackrel{\mathbb P_{\gprior}}{\to} 1$}. Similarly, all of the above confidence intervals for $\theta_{\gprior}(1) = \EE[\gprior]{\mu \mid \Zo=1}$ asymptotically match the partial identification intervals, i.e.,  \smash{$\abs{\mathcal{I}_{\alpha}(\zo)}\big /\p{1-f_{\gprior}(1)}\stackrel{\mathbb P_{\gprior}}{\to} 1$}.
\end{prop}

\section{On the choice of $\gcal$}
\label{sec:gcal}
Throughout this paper, we have taken the choice of $\gcal$ for granted and suggested some choices such as~\eqref{eq:all_dbns},~\eqref{eq:normal_mixing_class} and \eqref{eq:normal_scale_class} in our numerical examples. There are two difficulties regarding this choice; first, $\gcal$ needs to capture the true $\gprior$ and second, if $\gcal$ is infinite-dimensional, then it has to be suitably discretized to numerically solve optimization problems such as~\eqref{eq:nbhood_worst_case} or~\eqref{eq:continuous_modulus_problem}\footnote{We provide guidance for the numerical discretization of infinite-dimensional $\gcal$ in Supplement~\ref{sec:gcal_discretization}.}. Recent successful applications of empirical Bayes for point estimation use a discretized convex class $\gcal$. For example, \citet{koenker2014convex} and \citet{koenker2017rebayes} use the nonparametric maximum likelihood estimator (NPMLE) for a plethora of different likelihoods in~\eqref{eq:EB} with \smash{$\hG \in \pp(\Ksupport)$} and \smash{$\Ksupport$} a finite set (an equidistant discretization of a compact interval). The above classes \smash{$\gcal$} are typically discretized so densely that a nonparametric approach to forming confidence intervals, as pursued in this work, is warranted. In cases where a choice of \smash{$\gcal$} has been used for estimation, we suggest that the point estimates be accompanied by confidence intervals using the same choice of prior class. 

\subsection{Sensitivity analysis for $\gcal$}
\label{subsec:sensitivity}

The choice of $\gcal$, is not innocuous and the sensitivity of our intervals to the non-parametric specification of $\gcal$ is an important consideration for their practical adoption. One could hope to choose $\gcal$ on the basis of goodness-of-fit testing. However, goodness-of-fit tests are only able to rule out $\gcal$ that are inconsistent with the data, but there may be many choices of $\gcal$ that are consistent with the data, and for each of these, depending on the target of inference, the length of the confidence intervals may vary substantially or remain relatively stable. To illustrate these ideas further, we suppose that $\gprior$ is location mixture of Gaussians as in~\eqref{eq:normal_mixing_class}. These classes are nested as \smash{$\law\nn(\tilde{\tau}^2, \RR)  \supset \law\nn(\tau^2, \RR)$} for \smash{$\tilde{\tau} < \tau$}.\footnote{Suppose $\gprior \in \law\nn(\tau^2, \RR)$, i.e., $\gprior= \nn(0, \tau^2) \star H$ for $\tau >0 $ and a distribution $H$, where $\star$ denotes convolution. For any $0<\tilde{\tau}<\tau$, we can write $G = \nn(0, \tilde{\tau}^2) \star \tilde{H}$ with $\tilde{H} = \nn(0, \tau^2 - \tilde{\tau}^2) \star H$, and so $G \in \law \nn(\tilde{\tau}^2, \RR)$. } If we define,
$$ \tau^*(G) = \sup\cb{\tau >0 \mid G \in \law\nn(\tau^2, \RR)},$$
then inference using our methods will be valid with $\gcal =  \law\nn(\tau^2, \RR)$ for any $\tau \leq \tau^*(G)$, and will be more conservative, the smaller $\tau$ is. So ideally we would like to use $\tau=\tau^*(G)$, however, $\tau^*(G)$ is a one-sided discontinuous statistical functional in the sense of~\citet{donoho1988one}, so that it is impossible to derive a non-trivial lower bound on it in a data-driven way; see also~\citet{donoho2013achieving}. On the other hand, it is possible to derive upper bounds on $\tau^*(G)$.\footnote{For example, note that $\tau^*(G)^2 \leq \Var[G]{Z} - \sigma^2$.} Hence, a goodness-of-fit test may be able to reject values of $\tau$ that are too large, however it cannot disambiguate between choices of small $\tau$. 

Thus, the only way to obtain practically meaningful results is by the analyst choosing a range of $\tau$'s that appear to be plausible. The analyst can further interpret the results and evaluate how pessimistic a choice of $\tau$ (or more generally, of $\gcal$) may be by inspecting the worst-case priors that determine the confidence interval for a given estimand (the worst case priors in~\eqref{eq:nbhood_worst_case} for $F$-Localization, and the worst cases priors in~\eqref{eq:continuous_modulus_problem} for the affine minimax approach). In Supplement~\ref{sec:supplement_sensitivity}, we explore these issues in the context of the Prostate data analysis of Section~\ref{subsec:prostate} using the split-likelihood-ratio of \citet{wasserman2020universal} for goodness-of-fit testing.

While conducting the suggested sensitivity analysis, it is important to recall that the minimax estimation error for $\theta_{\gprior}(z)$ decays extremely slowly (often poly-logarithmically)
with sample size \citep{butucea2009adaptive,pensky2017minimax} for some of the problems we consider (e.g., local false sign rate in the Gaussian empirical Bayes problem). In this case, unlike in classical estimation
problems, we cannot expect to make our confidence intervals meaningfully shorter by, say, collecting
100 times more data than we have now. From this perspective, the amount of assumptions (smoothness, unimodality and so forth) we are willing to impose on $\gcal$ determines the accuracy with which we can ever hope to learn $\theta_{\gprior}(\zo)$, and
the sensitivity analysis discussed above is closely aligned with recommendations for applications
with partially identified parameters~\citep{armstrong2018optimal,imbens2017optimized,rosenbaum2002observational}.

\section{Discussion}
\label{sec:discussion}

We have presented two general approaches towards building confidence intervals for empirical Bayes estimands in model~\eqref{eq:EB} that work for any choice of $h(\cdot)$, convex class of {\textprior}s $\gcal$ and likelihood $p(\cdot \cond \mu_i)$. Our methods are computationally intensive and require repeatedly solving non-trivial convex optimization problems. Nevertheless, in light of an available software implementation, our confidence intervals are practical, and can accompany applied work using nonparametric empirical Bayes point estimates. As in ~\citet{koenker2014convex}, our implementation is facilitated by recent advances in convex optimization.

Here, we focused on inference for empirical Bayes estimands of the form $\EE[\gprior]{h(\mu) \mid \Zo = \zo}$ in model~\eqref{eq:EB}; our approach, however, can also handle other empirical Bayes estimands. As explained in Section~\ref{sec:linear_functionals}, simpler versions of our methods can be used to form confidence intervals for linear functionals of $\gprior$ such as $\PP[\gprior]{\mu \geq 0}$. Our methods are also directly applicable to tail (rather than local) empirical Bayes quantities, such as the tail (marginal) false sign rate $\PP[\gprior]{\mu \cdot \Zo \leq 0 \mid \Zo \geq \zo}$ as considered in, e.g., \citet{yu2019adaptive}.  Another important class of estimands consists of posterior quantiles
$\theta_{\gprior}^{\text{perc}}(\zo; p) = \inf\cb{t : \PP[\gprior]{\mu \leq t \cond \Zo = \zo} \geq p}$ for $p\in(0,1)$. $F$-Localization can be used to conduct inference for posterior quantiles by inverting simultaneous confidence intervals for $\PP[\gprior]{\mu \leq t \cond \Zo = \zo}$, $t\in \mathbb R$. However, it seems more challenging to generalize \Amari~to posterior quantiles.

A further important direction for future work is to handle generalizations of model~\eqref{eq:EB}. In some applications, it would be important to allow for unknown structural parameters or global parameters, such as the variance parameter $\sigma^2$ in the Gaussian model. \eqref{eq:EB} could also be extended to higher-dimensions, e.g., $\mu_i \in \RR^d$, and to heteroskedastic problems in which the likelihood $p_i(\cdot \cond \mu_i)$ can vary across $i$, for example, the Gaussian location model with per-observation noise standard deviation $\sigma_i$, so that $p_i(\cdot \cond \mu_i) = \nn\p{\mu_i, \sigma_i^2}$~\citep{gu2017unobserved, weinstein2018group}.

\subsection*{Software} We provide reproducible code for all numerical results in the following Github repository:
\url{https://github.com/nignatiadis/empirical-bayes-confidence-intervals-paper}.\\
A package implementing the method is available at \url{https://github.com/nignatiadis/Empirikos.jl}. The package has been implemented in the Julia programming language~\citep{bezanson2017julia} and depends, among others, on the packages JuMP.jl~\citep{DunningHuchetteLubin2017} and Distributions.jl~\citep{besanccon2019distributions}.

\subsection*{Acknowledgments}
This paper was first presented on May 24th, 2018 at a workshop in honor of Bradley Efron's 80th birthday. 
We are grateful to Timothy Armstrong, Bradley Efron, Jiaying Gu, Guido Imbens, Panagiotis Lolas, Michail Savvas, Paris Syminelakis, Han Wu and seminar participants at several venues for helpful feedback and discussions. We thank Jiaying Gu for suggesting the Anderson-Rubin construction. Some of the computing for this project was performed on the Sherlock cluster. We would like to thank Stanford University and the Stanford Research Computing Center for providing computational resources and support that contributed to these research results. We acknowledge support from a Ric Weiland Graduate Fellowship, a gift from Google, and
National Science Foundation grant DMS-1916163.

\bibliographystyle{plainnat}

\bibliography{lfsr}

\newpage    
\setcounter{page}{1}
\renewcommand{\thepage}{S\arabic{page}} 
\setcounter{table}{0}
\renewcommand{\thetable}{S\arabic{table}}

\setcounter{figure}{0}
\renewcommand{\thefigure}{S\arabic{figure}}

\renewcommand{\theequation}{S\arabic{equation}}

\renewcommand{\therema}{S\arabic{rema}}
\renewcommand{\theprop}{S\arabic{prop}}
\renewcommand{\thefootnote}{S\arabic{footnote}}
\begin{appendix}

\section{Gaussian $F$-localization: Proof of Proposition~\ref{prop:floc_kde}}
\label{sec:marginal_nbhood}

Throughout this section we assume that model~\eqref{eq:EB} holds with $p(\cdot \cond \mu) = \nn(\mu,\, \sigma^2)$. Furthermore, without loss of generality, we assume that $\sigma^2=1$. The key idea of the proof is the following: we first use the smoothness of $f_G(\zo)$ in the Gaussian empirical Bayes problem to verify that \smash{$\hat{f}(\zo)=\hat{f}^K_n(\zo)$}~\eqref{eq:kde} has bias of order $O(1/\sqrt{n \log(n)})$. Thus the dominant error in \smash{$\sup_{\zo \in [-M,M]}|\hat{f}(\zo) - f_G(\zo)|$} is stochastic and equal to \smash{$\sup_{\zo \in [-M,M]}|\hat{f}(\zo) - \mathbb E_{G}[\hat{f}(\zo)]|$}. We then study the variance of the stochastic term (pointwise) and use results of~\citetsupplement{chernozhukov2014gaussian, chernozhukov2016empirical} to verify the accuracy of the bootstrap approximation. 

Notation: We often omit the dependence on $n$, for example, we write $h$ for $h_n = 1/\sqrt{\log(n)}$. All integrals in this section are computed with respect to the Lebesgue measure.

\subsection{Bias of KDE}
For a function $\psi:\RR \to \RR$, we write $\psi^*(t)$ for its Fourier transform, i.e., $\psi^*(t) = \int \exp(itx)\psi(x)dx$, assuming it exists. The crucial property of the kernel $K(\cdot)$ in~\eqref{eq:kde} that we will use to control bias, is that $K^*$ is equal to $1$ on $[-1,1]$~\citep{politis1993family}:
\begin{equation}
K^*(t)= \begin{cases} 1, & \text{ if } t\in[-1,1] \\ 0, &\text{ if } |t| \geq 1.1 \\ 11-10\cdot|t|,& \text{ if } |t| \in [1,1.1] \end{cases}.
\end{equation}
We are ready to state our result on the bias.
\begin{prop}
\label{prop:kde_bias}
Consider estimating the marginal density $f_{\gprior}$ (for some effect size distribution $\gprior$) with the KDE~\eqref{eq:kde}. Then, for some constant $C$ it holds that,
$$\Bias[\gprior]{\hf(\zo),f_{\gprior}(\zo)}^2 = \p{  \EE[\gprior]{\hf(\zo)} - f_{\gprior}(\zo)}^2 \leq  C \frac{1}{n\log(n)} \; \text{for all } \zo \in \RR.$$
\end{prop}

\begin{proof}
$$
\begin{aligned}
\abs{\Bias[\gprior]{\hf(\zo), f_{\gprior}(\zo)}} &= \abs{f_{\gprior}(\zo) - \frac{1}{h}\EE[\gprior]{K\p{\frac{\Zo_i-\zo}{h}}}} \\
 &= \abs{f_{\gprior}(\zo) - \int\frac{1}{h}{K\p{\frac{u-\zo}{h}}}f_{\gprior}(u)du} \\
 &= \frac{1}{2\pi} \abs{\int\exp(-it\zo)f_{\gprior}^*(t)dt - \int f_{\gprior}^*(t)\widebar{\p{\frac{1}{h}{K\p{\frac{\cdot-\zo}{h}}}}^*(t)}dt}\\
 &= \frac{1}{2\pi} \abs{\int\exp(-it\zo)f_{\gprior}^*(t)dt - \int f_{\gprior}^*(t)\exp(-it\zo)K^*(th)dt} \\
 &\leq \frac{1}{2\pi} \int_{\cb{|t| \geq \frac{1}{h}}}|f_{\gprior}^*(t)dt| \\
 &\leq \frac{1}{2\pi} \int_{\cb{|t| \geq \frac{1}{h}}}\exp\p{-t^2/2} \\
 & \leq \frac{1}{\pi} h \exp\p{-\frac{1}{2 h^2}}\\
 & = \frac{1}{\pi} \frac{1}{\sqrt{n\log(n)}}
\end{aligned}
$$
In the 3rd line we used the Fourier inversion formula, as well as the Plancherel isometry~\citepsupplement[Theorem A.4]{meister2009deconvolution}. We also used the facts that $K$ is square integrable, $K$ is even, $K^*(t) = 1$ on $[-1,1]$, $|K^*(t)| \leq 1$ outside $[-1,1]$ and that $|f_{\gprior}^*(t)| = |\int \exp(i\mu t -t^2/2) \dgprior| \leq \exp(-t^2/2)$. Finally, we used the Gaussian tail inequality.
\end{proof}
We note that~\citetsupplement[Section 3.2]{taupin2001semi} also sketches the above argument.

\subsection{Variance of KDE}

We next study the variance term.
\begin{prop}
\label{prop:kde_var}
There exist constants $c, C > 0$ and $n_0 \in \mathbb N$, such that, 
$$\frac{c}{n h} \leq \Var[\gprior]{\hf(\zo)} \leq \frac{C}{n h} \text{ for all } \zo\in [-M,M],\, n \geq n_0$$
\end{prop}

\begin{proof}
A consequence of Proposition~\ref{prop:kde_bias} is that,
$$ \EE[\gprior]{K\p{\frac{\Zo_i-\zo}{h}}} =  h \p{ f_{\gprior}(\zo) \, + \, o(1)}.$$
For the second moment, we get:
$$
\begin{aligned}
\EE[\gprior]{K^2\p{\frac{\Zo_i-\zo}{h}}} &= \int K^2\p{\frac{u-\zo}{h}}f_{\gprior}(u)du \\
&= h  \int K^2(u)f_{\gprior}(uh + \zo)du \\
&= h\p{f_{\gprior}(\zo)\int K^2(u)du \,+\, o(1)}.
\end{aligned}
$$
Combining these two results, we find that,
$$\Var[\gprior]{\frac{1}{h}K\p{\frac{\Zo_i-\zo}{h}}} = \frac{f_{\gprior}(\zo)\int K^2(u)du \, + \, o(1)}{h} - \p{ f_{\gprior}(\zo) \, + \, o(1)}^2.$$
We conclude after noting that the $o(1)$ terms are uniform in $\zo \in [-M,M]$ and that 
$$0 < \inf_{\zo \in [-M,M]}f_{\gprior}(\zo) \leq  \sup_{\zo \in [-M,M]}f_{\gprior}(\zo) < \infty.$$
\end{proof}

\subsection{Validity of bootstrap approximation}

Let us define the following suprema,
\begin{align}
&\widehat{W} = \sqrt{nh} \sup_{\zo \in [-M,M]} \abs{ f(\zo) - \hf(\zo)} \label{eq:hatw} \\
&W =  \sqrt{nh} \sup_{\zo \in [-M,M]} \abs{ \EE{\hf(\zo)} - \hf(\zo)}  \label{eq:supw}\\
&\beta =  \sqrt{nh} \sup_{\zo \in [-M,M]}\abs{  \EE{\hf(\zo)} -  f(\zo)} \\
&W^* = \sqrt{nh} \sup_{\zo \in [-M,M]} \abs{ \hf^*(\zo) - \hf(\zo)} \label{eq:bootw},
\end{align} 
where $\hf^*(\zo)$ is a single bootstrap evaluation of the KDE as in~\eqref{eq:kde_bootstrap}. Our high-level strategy is to argue that the distribution of $W^*$ conditionally on the data $\mathbf{Z} = (Z_1,\dotsc,Z_n)$ is close to the unconditional distribution of $W$ and that \smash{$\widehat{W}$} and $W$ are essentially indistinguishable (compared to the fluctuations of the above suprema), because $\beta$, i.e. the worst case bias over $\zo$, is small (Proposition~\ref{prop:kde_bias}).

We first record the following fact: Define the class $\mathcal{H}$ of all functions that are dilations and translations of the kernel $K(\cdot)$, i.e.,
\begin{equation}
\mathcal{H} = \cb{ K(a\cdot + b) \cond a>0,\, b \in \RR}.
\end{equation}
This class has envelope function $\Norm{K(\cdot)}_{\infty}$, it is pointwise measurable and is of VC type, i.e., there exist constants $A \geq e, \; v \geq 1$ such that for any finitely discrete probability measures $Q$ and any \smash{$\varepsilon \in (0, \Norm{K(\cdot)}_{\infty}]$}, it holds that
\begin{equation}
\label{eq:vctype}
N\p{ \mathcal{H},\,\mathcal{L}^2(Q), \, \varepsilon} \leq \p{\frac{A}{\varepsilon}}^v,
\end{equation}
where \smash{$N\p{ \mathcal{H},\,\mathcal{L}^2(Q),\,  \varepsilon}$} is the $\varepsilon$-covering number of $\mathcal{H}$ with respect to the $\mathcal{L}^2(Q)$ norm. This follows directly from~\citetsupplement[Proposition 3.6.12]{gine2016mathematical}, since the kernel $K(\cdot)$ is of bounded variation.

The fact that $\mathcal{H}$ is VC type will allow us to construct a coupling $W$ and \smash{$\widetilde{W}$}, where \smash{$\widetilde{W}$} is the supremum of a Gaussian process. Concretely, let $\GG$ be a Gaussian process indexed by $[-M,M]$ with mean $0$ and covariance,
\begin{equation}
\label{eq:gp_covar}
\Cov{\GG(s), \GG(t)} =  \frac{1}{h} \cdot \Cov{ K\p{\frac{\Zo-t}{h}}, \, K\p{\frac{\Zo-s}{h}}}.
\end{equation}
 Then the following holds,
\begin{prop}
\label{prop:gaussian_approx}
$\GG$ is a tight Gaussian process on $\ell^{\infty}([-M,M])$. Furthermore, there exists a coupling $\tW,\; W$ (with $W$ defined in~\eqref{eq:supw}) and $r_1,\; r_2$ such that 
$$ \tW \, \stackrel{\mathcal{D}}{=} \, \sup_{\zo \in [-M,M]}  \GG(t)$$
and,
$$\PP{ \abs{ W - \tW} >   r_1 } \leq r_2,\;\; r_1 = O\p{ (nh)^{-1/6}\log(n)},\;\; r_2 = O(1/\log(n)). $$
\end{prop}
\noindent Similarly, we can construct a conditional coupling of the bootstrap statistic $W^*$ and $\widetilde{W}^*$, where $\tW^*$ conditionally on $\mathbf{Z}$ has the same distribution as (the unconditional law) of $\tW$.

\begin{prop}
\label{prop:bootstrap_coupling}
There exists a coupling $\tW^*,\; W^*$ (with $W^*$ defined in~\eqref{eq:bootw}), such that 
$$ (\tW^* \cond \mathbf{Z}) \, \, \stackrel{\mathcal{D}}{=} \, \sup_{\zo \in [-M,M]}  \abs{\GG(\zo)}$$
and such that there exists an event $\mathcal{E}$ with \smash{$\PP{  \mathcal{E}^c} = O(1/\sqrt{\log(n)})$} on which
$$\PP{ \abs{ W^* - \widetilde{W}^*} >  r_1^* \cond \mathbf{Z} } \leq  r_2^*,\;\, r_1^* = O\p{ (nh)^{-1/6}\log(n)},\;\, r_2^* = O(1/\sqrt{\log(n)}).$$
\end{prop}
Before proceeding with the proof of Proposition~\ref{prop:floc_kde}, we need one final ingredient. We define Levy's function for $\tW$ as
\begin{equation}
\label{eq:levy}
\levy{r} = \sup_{t \in \RR} \PP{ \abs{\tW - t} \leq r}.
\end{equation}
The following Proposition holds as a consequence of~\citetsupplement[Lemma A.1]{chernozhukov2014gaussian}.
\begin{prop} 
\label{prop:levy}
The Levy concentration function~\eqref{eq:levy} satisfies:
$$r \cdot \sqrt{\log(\log(n))} \to 0 \text{ as } n \to \infty  \Longrightarrow \levy{r} \to 0 \text{ as } n \to \infty.$$
\end{prop}
\noindent We postpone the proof of the above three Propositions to the end of this section and proceed with the main argument.

\begin{proof}[Proof of Proposition~\ref{prop:floc_kde}:]
Write $c^* = \widehat{c}_n(\alpha) \cdot \sqrt{nh}$ for the $1-\alpha$ conditional quantile of $W^*$, i.e., 
$$ c^* = \inf\cb{ t\,:\, \PP{  W^* \leq t \cond \mathbf{Z}}  \geq 1-\alpha}.$$
It holds that,
\begin{equation*}
\begin{aligned}
\PP{ F \in \ff_n } &= \PP{ \widehat{W} \leq c^* } \\
				   &\stackrel{(i)}{\geq} \PP{ W \leq c^* - \beta} \\
				   &\stackrel{(ii)}{\geq} \PP{ \tW \leq c^* - \beta} - \levy{r_1} - r_2 \\
				   &\stackrel{(iii)}{\geq} \PP{ \tW \leq c^* } - \levy{\beta} - \levy{r_1} - r_2 \\
				   &\stackrel{(iv)}{\geq} 1-\alpha - \PP{\mathcal{E}^c} - r_2^* - \levy{r_1^*} - \levy{r_1} - \levy{\beta}  - r_2\\
				   &\stackrel{(v)}{=} \, 1-\alpha - o(1) 
\end{aligned}
\end{equation*}

We justify the individual steps $(i)-(v)$. \\
\textbf{(i)} follows by the triangle inequality, since $\widehat{W} \leq W + \beta$.\\
\textbf{(ii)} follows from Proposition~\ref{prop:gaussian_approx}, along with the definition of the Levy function~\eqref{eq:levy}. In more detail:
$$
\begin{aligned}
\PP{ \tW \leq c^* - \beta} & \leq \PP{ \tW \leq c^* - \beta,\; \abs{\tW - W} \leq r_1} \, + \, \PP{ \abs{\tW - W} > r_1} \\
& \leq \PP{ \tW \leq c^* - \beta,\; \abs{\tW - W} \leq r_1, \; W \leq c^*-\beta} \,+\, \PP{0\leq W - c^*  + \beta \leq r_1 } \, +  \, r_2 \\
& \leq \PP{W \leq c^*-\beta}  \,+\, \levy{r_1} \,+\, r_2.  
\end{aligned}
$$
\\
\textbf{(iii)} follows from the definition of the Levy function~\eqref{eq:levy}.\\
\textbf{(iv)} follows by properties of the Bootstrap approximation. First, note that by definition of $c^*$, it holds 
$$ \PP{ W^* \leq c^* \cond \mathbf{Z}} \geq 1-\alpha$$
On the other hand, consider the event $\mathcal{E}$. On that event, 
$$
\begin{aligned}
\PP{\tW^* \leq c^* \cond \mathbf{Z}}  &\geq \PP{ W^* \leq c^* \cond \mathbf{Z}} - \levy{r_1^*} - r_2^*  \\
&\geq 1-\alpha - \levy{r_1^*} - r_2^*,
\end{aligned}
$$
where the first inequality follows from Proposition~\ref{prop:bootstrap_coupling} and the definition of the Levy function~\eqref{eq:levy} and the second inequality follows by definition of $c^*$. Hence,
$$ \PP{\tW^* \leq c^*} \geq \EE{ \PP{\tW^* \leq c^* \cond \mathbf{Z}};\, \mathcal{E}} \geq 1-\alpha - \levy{r_1^*} - r_2^* - \PP{\mathcal{E}^c}.$$ 
\textbf{(v)} From Propositions~\ref{prop:gaussian_approx} and~\ref{prop:bootstrap_coupling} it follows that \smash{$\PP{\mathcal{E}^c}, r_2,r_2^* = o(1)$}, so it remains to argue about the Levy terms. From the same propositions, it also holds that \smash{$r_1,r_1^* = O(1/\sqrt{\log(n)})$} and \smash{$\beta = O(1/\log(n)^{3/4})$} (Proposition~\ref{prop:kde_bias}). Thus, applying Proposition~\ref{prop:levy}, it also follows that the terms $\levy{r_1^*}$, $\levy{r_1}$ and $\levy{\beta}$ are $o(1)$.
\end{proof}

\subsection{Proofs of intermediate results}

\begin{proof}[Proof of Proposition~\ref{prop:gaussian_approx}]
This follows directly from~\citetsupplement[Proposition 3.1.]{chernozhukov2014gaussian} by noting that the first term, i.e., $O\p{ (nh)^{-1/6}\log(n)}$ is dominant for the choice $h = h_n = 1/\sqrt{\log(n)}$. Assumptions (B1), (B4), (B5) of~\citetsupplement{chernozhukov2014gaussian} hold trivially, (B2) holds by~\eqref{eq:vctype} and (B3), i.e., that $Z$ has a bounded Lebesgue density on $\RR$ follows from the fact that $f_{\gprior}(\zo) \leq 1/\sqrt{2\pi}$ for all $\zo$, since $f_{\gprior}$ is the convolution of $\gprior$ with a standard Gaussian pdf. Finally, we note that~\citetsupplement[Proposition 3.1.]{chernozhukov2014gaussian} is stated in a slightly different form than Proposition~\ref{prop:gaussian_approx}, however the result directly follows by inspecting the proof, which is based on~\citetsupplement[Corollary 2.2]{chernozhukov2014gaussian}.
\end{proof}

\begin{proof}[Proof of Proposition~\ref{prop:bootstrap_coupling}]
We seek to apply~\citetsupplement[Theorem 2.3]{chernozhukov2016empirical}. To this end, we will study the rescaled supremum $U^* = \sqrt{h} \cdot W^*$ and rescaled Gaussian process $\BB = \sqrt{h} \cdot \GG$. Then, note that as in the proof of Proposition~\ref{prop:gaussian_approx}, Assumptions (A)-(C) of~\citetsupplement[Theorem 2.3]{chernozhukov2016empirical} are satisfied. Furthermore, recalling~\eqref{eq:vctype}, in the notation of that paper, we can apply the result for $\eta \to 0$, $K_n = O(\log(n))$, $q=4$, $b$ bounded, $\sigma^2 = O(h)$ and $\gamma = 1/\log(n)$. It then follows that there exists a coupling 
$$ (\widetilde{U}^* \cond \mathbf{Z}) \, \, \stackrel{\mathcal{D}}{=} \, \sup_{\zo \in [-M,M]}  \abs{\UU(\zo)},$$
such that
$$\PP{  \abs{U^* - \widetilde{U}^* } \geq \tilde{r}_1^*} \leq \tilde{r}_2^*,$$ 
and $\tilde{r}_2^* = O(1/\log(n))$, while 
$$\tilde{r}_1^* = O\p{ \frac{\log(n)^{9/4}}{n^{1/4}} \, + \, \frac{\log(n)h^{1/3}}{n^{1/6}} \, + \, \frac{\log(n)^2 h^{1/4}}{n^{1/4}}} \, = \, O\p{\frac{\log(n)h^{1/3}}{n^{1/6}}}.$$
By applying Markov's inequality, we get that with probability at least $1-1/\sqrt{\log(n)}$, it holds that,
$$\PP{  \abs{U^* - \widetilde{U}^* } \geq \tilde{r}_1^* \cond \mathbf{Z}} \leq \tilde{r}_2^* \cdot \sqrt{\log(n)}.$$
This is the event $\mathcal{E}$ in the statement of Proposition~\ref{prop:bootstrap_coupling} and we can take $r_2^* = \tilde{r}_2^* \cdot \sqrt{\log(n)} = 1/\sqrt{\log(n)}$. Recalling that $W^*= U^*/\sqrt{h}$ and defining $\widetilde{W}^* = \widetilde{U}^*/\sqrt{h}$, we get,
$$ r_1^* = \frac{\tilde{r}_1^*}{\sqrt{h}}  = O\p{\frac{\log(n)}{(nh)^{1/6}}}.$$
\end{proof}

\begin{proof}[Proof of Proposition~\ref{prop:levy}]
First note that by Proposition~\ref{prop:kde_var}, we have that there exist $\ubar{\sigma}, \bar{\sigma}, n_0$ such that the following holds for the Gaussian process $\GG$ of Proposition~\ref{prop:gaussian_approx} (with covariance~\eqref{eq:gp_covar}):
$$\ubar{\sigma}^2 \, \leq \Var{\GG(\zo)} \leq \, \bar{\sigma}^2, \; \text{ for all } \zo \in [-M,M],\, n\geq n_0.$$
Hence, by~\citetsupplement[Lemma A.1.]{chernozhukov2014gaussian}, we have for a constant $C$ (that depends on $\ubar{\sigma}, \bar{\sigma}$) that:
\begin{equation}
\label{eq:levy_bd}
\levy{r} \leq C \cdot r \cdot \cb{ \EE{\tW} \, + \, \sqrt{\max\cb{1,\, \log(\ubar{\sigma}/r)}}}.
\end{equation}
We next bound, \smash{$\mathbb E[\tW] = \mathbb E[ \sup_{\zo \in [-M,M]} \abs{\GG(\zo)}]$}. To this end, we make the following observations: first, \smash{$\GG$} is a Gaussian process, and so in particular it is a sub-Gaussian process with respect to \smash{$d^2(\zo, \zo') = \mathbb E[\p{\GG(\zo) - \GG(\zo')}^2]$}. Applying Proposition~\ref{prop:kde_var} again, we find that \smash{$\sup_{\zo, \zo' \in [-M,M]} d(\zo, \zo')\leq D < \infty$} is finite (and $D$ can be chosen the same for all $n$). In addition, by~\eqref{eq:vctype}, we find that for some $A', v' \geq e$, 
$$N( [-M,M],\, d,\, \varepsilon) \leq \p{\frac{A'}{\varepsilon \cdot \sqrt{h}}}^{v'}.$$
and so,
$$ \int_{0}^D \sqrt{\log\p{N( [-M,M],\, d,\, \varepsilon) }}d\epsilon = O\p{ \sqrt{\log(1/h)}}.$$
By Dudley's Entropy integral~\citepsupplement[Corollary 2.2.8]{van1996weak}, we thus also get that,
$$ \EE{\tW} = O\p{ \sqrt{\log(1/h)}}.$$
Combing the above with~\eqref{eq:levy_bd}, we find that $\levy{r} \to 0$ for $r = o(1/\sqrt{\log(\log(n))})$.
\end{proof}

\section{Proofs for \Amari~inference}

\paragraph{Notation:} Throughout this supplement we omit the $M$ superscript, e.g., we write $f_{\gprior}$ instead of $f_{\gprior}^M$, $\lambda$ instead of $\lambda^M$ and so forth. 

\subsection{Properties of the modulus of continuity}
\label{sec:modulus_properties}

\begin{prop}
\label{prop:modulus_properties}
 Assume $\mathcal{G}_n$ is convex, $ \inf_{\zo} \barf(\zo) >0$ and $\sup_{\gprior \in \mathcal{G}_n} \abs{L(\gprior)} < \infty$. Then, the modulus $\omega_n(\cdot)$ defined in~\eqref{eq:continuous_modulus_problem}, as a function of $\delta>0$, has the following properties:

\begin{enumerate}[label=(\alph*)]
\item It is non-decreasing.
\item It is bounded and nonnegative.
\item It is concave.
\item For $\delta >0$, there exists an element $\omega_n'(\delta) \geq 0$ in the superdifferential of $\omega_n(\cdot)$ at $\delta$, i.e., $\omega_n'(\delta)$ satisfies the property defined in Footnote~\ref{footnote:superdifferential}. It holds that $\omega_n(\delta) \geq \delta \cdot \omega_n'(\delta)$.
\end{enumerate}
\end{prop}

\begin{proof}
(a) and (b) follow directly by the definition of $\omega_n(\cdot)$. For (c), let us take $\delta_a, \delta_b > 0$, $\lambda \in (0,1)$ and let \smash{$(\gprior_1^{\delta_a}, \gprior_{-1}^{\delta_a}), (\gprior_1^{\delta_b}, \gprior_{-1}^{\delta_b})$} solve the corresponding modulus problems. If solutions for either of these do not exist, we may take an approximate minimizer and use standard approximation arguments. Now for, $\lambda \in (0,1)$ and $\delta(\lambda) = \lambda \delta_a + (1-\lambda)\delta_b$, consider \smash{$\gprior_i^{\delta(\lambda)} = \lambda \gprior_i^{\delta_a} + (1-\lambda) \gprior_i^{\delta_b}$} with $i\in \cb{-1,1}$. Then \smash{$\gprior_i^{\delta(\lambda)} \in \mathcal{G}_n$} by convexity of \smash{$\mathcal{G}_n$} and furthermore by the triangle inequality,
$$  \cb{n \cdot \int \p{f_{\gprior_1^{\delta(\lambda)}}(\zo) - f_{\gprior_{-1}^{\delta(\lambda)}}(\zo)}^2/\barf(\zo)d\lambda(\zo)}^{1/2} \leq \lambda \delta_a + (1-\lambda) \delta_b = \delta(\lambda).$$
Hence:
$$ \omega_n(\delta(\lambda)) \geq L(\gprior_1^{\delta(\lambda)})-L(\gprior_{-1}^{\delta(\lambda)}) = \lambda \omega_n(\delta_a) + (1-\lambda) \omega_n(\delta_b).$$
To check (d), we note that the existence of $\omega_n'(\delta)$ follows from (b,c) and results from convex analysis~\citep{rockafellar1970convex}. $\omega_n'(\delta)$  satisfies the property defined in Footnote~\ref{footnote:superdifferential}, or equivalently,
\begin{equation}
\label{eq:superdifferential_rearranged}
\omega_n(\delta) -\omega_n(\tilde{\delta})  \geq \omega_n'(\delta)(\delta - \tilde{\delta}) \text{ for all }\tilde{\delta} >0.
\end{equation}
Suppose $\omega_n'(\delta) < 0$, then e.g., letting \smash{$\tilde{\delta} =2\delta$} in~\eqref{eq:superdifferential_rearranged}, it would follow that $\omega_n(\delta) -\omega_n(2\delta) > 0$, which would be a contradiction to part (a). Thus $\omega_n'(\delta) \geq 0$. Finally, by nonnegativity of $\omega_n(\cdot)$, it follows that 
$\omega_n(\delta) \geq  \omega_n(\delta) -\omega_n(\tilde{\delta})$, and taking \smash{$\tilde{\delta} \to 0$} in~\eqref{eq:superdifferential_rearranged}, we deduce that $\omega_n(\delta) \geq \delta \cdot \omega_n'(\delta)$.
\end{proof}

\subsection{Stein's heuristic}
\label{subsec:stein_heuristic}
In this Section we provide more details regarding optimization problem~\eqref{eq:minimax_problem_tractable1} and the modulus of continuity problem~\eqref{eq:continuous_modulus_problem} and provide rigorous arguments for the ideas sketched at the beginning of Section~\ref{subsec:linear_inference}

As already mentioned, at first sight, it is not obvious how to solve optimization problem~\eqref{eq:minimax_problem_tractable1}, since the problem is not concave in $\gprior$, hence standard min-max results for convex-concave problems are not applicable. Nevertheless, \citet{donoho1994statistical} provides a solution to this optimization problem by formalizing a powerful heuristic that goes back to Charles Stein. The key steps are as follows:
\begin{enumerate}
\item We search for the hardest 1-dimensional subfamily, i.e., we find $\gprior_1,\gprior_{-1} \in \gcal_n$, such that solving problem~\eqref{eq:minimax_problem_tractable1} over $\text{ConvexHull}(\gprior_1, \gprior_{-1})$ (instead of over all of $\gcal_n$) is as hard as possible. The precise definition of ``hardest'' is given by the modulus problem~\eqref{eq:continuous_modulus_problem}.
\item We find the minimax optimal estimator of problem~\eqref{eq:minimax_problem_tractable1} over the hardest 1-dimensional subfamily.
\item We then find that this solution is in fact optimal over all of $\gcal_n$.
\end{enumerate}

To implement Step 1 of the heuristic, we solve the modulus problem~\eqref{eq:continuous_modulus_problem} at $\delta$ (and assume it is solvable) and let $\gprior_1^{\delta}$, $\gprior_{-1}^{\delta}$ be solutions and $\omega_n'(\delta)$ an element of the superdifferential. Then $Q$ defined in~\eqref{eq:optimal_Q} solves the minimax problem~\eqref{eq:minimax_problem_tractable1} over \smash{$\text{ConvexHull}(\gprior_{1}^{\delta},\gprior_{-1}^{\delta})$} for $\Gamma_n = \Gamma_n(\delta) = \omega_n'(\delta)^2$ (Step 2). In fact, it solves this minimax problem over all of $\gcal_n$ (Step 3), as can be verified by the proposition below, and so~\eqref{eq:minimax_problem_tractable1} can be computed by solving the modulus problem~\eqref{eq:continuous_modulus_problem} (we postpone compoutational details to Supplement~\ref{sec:amari_computation}).

\begin{prop}[Properties of $Q$ in~\eqref{eq:optimal_Q}]
\label{prop:minimax_estimator}
Assume $\gcal_n$ is convex, \smash{$\sup_{\widetilde{\gprior} \in \gcal_n} |L(\widetilde{\gprior})| < \infty$} and that \smash{$f_{\widetilde{\gprior}}(\cdot) \in \mathcal{L}^2(\lambda)$} for all \smash{$\widetilde{\gprior} \in \gcal_n$}. Furthermore, assume that there exist $\gprior_1^{\delta},\;\gprior_{-1}^{\delta} \in \gcal_n$ that solve the modulus problem at $\delta>0$, i.e., are such that~\eqref{eq:modulus_solutions} holds and that $\inf_{\zo}\barf(\zo) >0$. Then:
\begin{enumerate}[label=(\alph*),leftmargin=*]
\item $Q$ defined by~\eqref{eq:optimal_Q}, achieves its worst case positive bias over $\gcal_n$ for estimating $L(\gprior)$ at $\gprior_{-1}^{\delta}$ and negative bias at $\gprior_1^{\delta}$, i.e., letting $\Bias[\gprior]{Q,L} = \int Q(\zo) f_{\gprior}(\zo) d\lambda(\zo) - L(\gprior)$, it holds that,
\begin{equation}
\begin{aligned}
\sup_{\gprior \in \gcal_n}\Bias[\gprior]{Q,L} &= \phantom{-} \Bias[\gprior_{-1}^{\delta}]{Q,L} \\
&= -\Bias[\gprior_{1}^{\delta}]{Q,L}\;\;\;  = - \inf_{\gprior \in \gcal_n}\Bias[\gprior]{Q,L}.
\end{aligned}
\end{equation}
\item  For $Q:\zz \to \RR$, write $\hVar[\barf]{Q} = \int Q^2(\zo) \barf(\zo) d\lambda(\zo) - \p{\int Q(\zo) \barf(\zo) d\lambda(\zo)}^2$. Let $\Gamma_n = \hVar[\barf]{Q}/n$, then for any other function $\tQ$ with $\hVar[\barf]{\tQ} \leq \Gamma_n \cdot n$, it holds that:
\begin{equation}
\sup_{\gprior \in \gcal_n} \Bias[\gprior]{\tQ,L}^2 \geq \sup_{\gprior \in \gcal_n}\Bias[\gprior]{Q,L}^2.
\end{equation}
\item  $\Gamma_n$ and the worst case bias have explicit expressions in terms of the modulus $\omega_n(\delta)$ and its superdifferential $\omega_n'(\delta)$:
\begin{align}
&\sup_{\gprior \in \gcal_n}\Bias[\gprior]{Q,L} = \frac{1}{2}\sqb{ \omega_n(\delta) - \delta \omega_n'(\delta)}, \label{eq:worst_case_bias_modulus_formula} \\
&\;\;\;\;\;\Gamma_n = \omega_n'(\delta)^2.
\end{align}
\end{enumerate}
\end{prop}

\begin{proof}
The arguments in this proof are well-known and appear in different forms for example in \citep{donoho1994statistical, low1995bias, armstrong2018optimal}. However, the statements there are provided in the context of Gaussian mean estimation and therefore we give a simplified, self-contained exposition.

\textbf{(a)} Below for notational convenience we will write $\gprior_1, \gprior_{-1}$ for $\gprior_1^{\delta}$ and $\gprior_{-1}^{\delta}$.  First let us check what the bias is at $\gprior_1$:
$$
\begin{aligned}
\Bias[\gprior_1]{Q,L} &=  \int Q(\zo) f_{\gprior_1}(\zo) d\lambda(\zo) - L(\gprior_1) \\
&=  -\frac{1}{2}(L(\gprior_{1})-L(\gprior_{-1})) \\
& \quad\;+\; \frac{n\omega'_n(\delta)}{\delta}\cb{ \displaystyle{\int} \frac{ f_{\gprior_1}(\zo) - f_{\gprior_{-1}}(\zo)}{\barf(\zo)}f_{\gprior_1}(\cdot)d\lambda(\zo) \, - \, \displaystyle{\int} \frac{\p{f_{\gprior_1}(\zo) - f_{\gprior_{-1}}(\zo)}f_{\gprior_{0}}(\zo)}{\barf(\zo)}d\lambda(\zo)} \\
&= -\frac{1}{2}\omega_n(\delta) + \frac{n\omega'_n(\delta)}{2\delta}  \displaystyle{\int} \frac{ (f_{\gprior_1}(\zo) - f_{\gprior_{-1}}(\zo))^2}{\barf(\zo)}d\lambda(\zo) \\
&= -\frac{1}{2}\omega_n(\delta) + \frac{\omega'_n(\delta)}{2\delta}\delta^2\\
&= -\frac{1}{2}\sqb{ \omega_n(\delta) - \delta \omega_n'(\delta)}
\end{aligned}
$$
Similarly, we get that: $\; \Bias[\gprior_{-1}]{Q,L} = \frac{1}{2}\sqb{ \omega_n(\delta) - \delta \omega_n'(\delta)}$. Let us now show that the worst case positive bias over $\gcal_n$ is indeed obtained at $\gprior_{-1}$. To this end take any other $\gprior \in \mathcal{G}_n$, $\gprior \neq \gprior_1, \gprior_{-1}$ and define for $\lambda \in [0,1]$:
$$
\begin{aligned}
&\Delta(\lambda) = \p{n\cdot \int \frac{\sqb{f_{\gprior_1}(\zo) - ((1-\lambda)f_{\gprior_{-1}}(\zo) + \lambda f_\gprior(\zo))}^2}{\barf(\zo)}\, d\lambda(\zo)}^{1/2} \\
&J(\lambda) = L(\gprior_{1}) - L((1-\lambda)\gprior_{-1} + \lambda \gprior) - \omega_n'(\delta)\Delta(\lambda)
\end{aligned}
$$
Observe that for any $\lambda \geq 0$:
$$
\begin{aligned}
J(\lambda) &\leq \omega_n(\Delta(\lambda)) - \omega_n'(\delta)\Delta(\lambda)\\
& \stackrel{(i)}{\leq} \sup_{\tilde{\delta} \geq 0}\cb{ \omega_n(\tilde{\delta}) - \omega_n'(\delta)\tilde{\delta}}\\
& \stackrel{(ii)}{\leq}  \omega_n(\delta) - \omega_n'(\delta)\delta\\
& = J(0)
\end{aligned}
$$
$(i)$ follows by definition of the modulus $\omega_n$ and $(ii)$ by noting that $\tilde{\delta} \mapsto \omega_n(\tilde{\delta}) - \omega_n'(\delta)\tilde{\delta}$ is concave and its superdifferential at $\delta$ includes the element $\omega_n'(\delta) - \omega_n'(\delta)=0$ (also compare to~\eqref{eq:superdifferential_rearranged}). The last equality holds by definition of $J(\lambda)$.\\
Continuing, by the chain rule and dominated convergence, it holds that $J(\lambda)$ is differentiable at $0$ and so $J'(0) \leq 0$. Furthermore,
$$ J'(0) = L(\gprior_{-1}) - L(\gprior) \, + \, \frac{n \cdot \omega_n'(\delta)}{\Delta(0)}\int\p{\frac{ f_{\gprior_1}(\zo) - f_{\gprior_{-1}}(\zo)}{\barf(\zo)}(f_{\gprior}(\zo)-f_{\gprior_{-1}}(\zo))}\, d\lambda(\zo)$$
And now also note that $\Delta(0)=\delta$ and:
$$
\begin{aligned}
\Bias[\gprior]{Q,L} - \Bias[\gprior_{-1}]{Q,L}  &= \p{\int Q(\zo) f_{\gprior}(\zo) d\lambda(\zo) - L(\gprior)} \, - \p{\int Q(\zo) f_{\gprior_{-1}}(\zo) d\lambda(\zo) - L(\gprior_{-1})} \, \\
&= L(\gprior_{-1}) -L(\gprior)  +  \frac{n \cdot \omega_n'(\delta)}{\delta}\int\p{\frac{ f_{\gprior_1}(\zo) - f_{\gprior_{-1}}(\zo)}{\barf(\zo)}(f_{\gprior}(\zo)-f_{\gprior_{-1}}(\zo))}\, d\lambda(\zo)\\
& = J'(0) \\
&\leq 0.
\end{aligned}
$$
By the above we conclude that:
$$ \Bias[\gprior]{Q,L} \leq \Bias[\gprior_{-1}]{Q,L}.$$
Finally, by repeating the same argument
$$ \Bias[\gprior]{Q,L} \geq \Bias[\gprior_{1}]{Q,L}.$$
\textbf{(b)} First let us write $\Gamma_n$ in terms of $\omega_n'(\delta)$. Note that \smash{$\int Q(\zo)\barf(\zo)\lambda(\zo)=0$}, since
$$ \int \frac{ f_{\gprior_1}(\zo) - f_{\gprior_{-1}}(\zo)}{\barf(\zo)} \barf(\zo) d\lambda(\zo) =  \int \p{ f_{\gprior_1}(\zo) - f_{\gprior_{-1}}(\zo)} \ d\lambda(\zo) = 1 -1 = 0,$$
and so,
$$
\begin{aligned}
\hVar[\barf]{Q} &= \frac{n^2 \cdot \omega_n'(\delta)^2}{\delta^2}\cdot \int \p{\frac{f_{\gprior_1}(\zo) - f_{\gprior_{-1}}(\zo)}{\barf(\zo)}}^2 \barf(\zo) d\lambda(\zo) \\
&=  \frac{n^2 \cdot \omega_n'(\delta)^2}{\delta^2} \cdot \int \frac{\p{f_{\gprior_1}(\zo) - f_{\gprior_{-1}}(\zo)}^2}{\barf(\zo)} d\lambda(\zo) \\
&=  \frac{n^2 \cdot \omega_n'(\delta)^2}{\delta^2} \cdot \frac{\delta^2}{n} \\
&= n \cdot \omega_n'(\delta)^2.
\end{aligned}
$$
Thus, $\Gamma_n = \omega_n'(\delta)^2$.

Now take any other function $\tQ(\cdot)$ and decompose it as, $\tQ(\cdot) = \tQ_0 + \tQ_1(\cdot)$, where \smash{$\tQ_0 = \int \tQ(\zo) \barf(\zo) d\lambda(\zo)$},
so that,
$$ \hVar[\barf]{\tQ} = \int \tQ_1(\zo)^2 \barf(\zo) d\lambda(\zo) \leq \Gamma_n \cdot n = n \cdot \omega_n'(\delta)^2.$$
Then:
$$
\begin{aligned}
&\Bias[\gprior_{-1}]{\tQ,L} - \Bias[\gprior_{1}]{\tQ,L} \\
=\;\; &  \p{\int \tQ_1(\zo) f_{\gprior_{-1}}(\zo) d\lambda(\zo) - L(\gprior_{-1})} \, - \p{\int \tQ_1(\zo) f_{\gprior_{1}}(\zo) d\lambda(\zo) - L(\gprior_{1})}\\
=\;\; &  L(\gprior_{1}) -  L(\gprior_{-1}) \,+ \, \int  \tQ_1(\zo)\p{ f_{\gprior_{-1}}(\zo) - f_{\gprior_{1}}(\zo)}d\lambda(\zo)  \\
=\;\; & \omega_n(\delta)  \,+ \, \int  \tQ_1(\zo)\barf(\zo)^{1/2} \frac{\p{ f_{\gprior_{-1}}(\zo) - f_{\gprior_{1}}(\zo)}}{\barf(\zo)^{1/2} } d\lambda(\zo) \\
\geq \;\; & \omega_n(\delta)  \,-\, \p{\int  \tQ_1(\zo)^2\barf(\zo)d\lambda(\zo)}^{1/2}\p{ \int \p{\frac{f_{\gprior_1}(\zo) - f_{\gprior_{-1}}(\zo)}{\barf(\zo)}}^2 \barf(\zo) d\lambda(\zo)}^{1/2}\\
\geq \;\; & \omega_n(\delta)  \,-\, \p{n \cdot \omega_n'(\delta)^2}^{1/2} \p{\delta^2 /n}^{1/2} \\
= \;\; & \omega_n(\delta)  \,-\,  \omega_n'(\delta)\delta \\
 =\;\; &2\sup_{\gprior \in \mathcal{G}_n}\cb{\abs{\Bias[\gprior]{Q, L}}}.
\end{aligned}
$$
Above we used properties of $\omega_n(\cdot)$ shown in Proposition~\ref{prop:modulus_properties}. Next,
$$
\begin{aligned}
\sup_{\gprior \in \gcal_n}\cb{\abs{\Bias[\gprior]{\tQ,L}}} &\geq \max\cb{\abs{\Bias[\gprior_{-1}]{\tQ,L}}, \abs{\Bias[\gprior_{1}]{\tQ,L}}} \\
&\geq \p{\abs{\Bias[\gprior_{-1}]{\tQ,L}} + \abs{\Bias[\gprior_{1}]{\tQ,L}}}\big/2 \\
&\geq \sup_{\gprior \in \gcal_n}\cb{\abs{\Bias[\gprior]{Q,L}}}.
\end{aligned}
$$
\textbf{(c)} We already proved these statements as intermediate steps while proving (a) and (b).

\end{proof}

\subsection{Proof of Theorem~\ref{theo:lin_functional_clt}}
\label{subsec:pf_theo_lin_clt}

\begin{proof}

A word on notation: We drop the dependence on $n$, $M$ and $\delta_n$, whenever this does not cause confusion, for example we write $\gprior_{1}$ instead of $\gprior_1^{\delta_n}$ and so forth. Furthermore, we write $\aEE{\cdot}$ for conditional expectations with respect to $\barf, \ff_n$, and $\aPP{\cdot}$,$\aVar{\cdot}$ for conditional probabilities, resp. variances.

Before embarking on the formal argument, we briefly sketch our proof strategy. Our proof makes heavy use of the representation of $Q(\cdot)$ in~\eqref{eq:optimal_Q}.  As a consequence of~\eqref{eq:optimal_Q}, it suffices to verify a central limit theorem for $\sum_{i=1}^n \tQ(\Zo_i)$, where,
\begin{equation}
\tQ(\cdot) = \frac{ f_{\gprior_1}(\cdot) - f_{\gprior_{-1}}(\cdot)}{\barf(\cdot)}.
\end{equation}
In other words, we drop the additive and multiplicative constants in front of $\tQ(\cdot)$ that appear in the expression of $Q(\cdot)$ in~\eqref{eq:optimal_Q}. A central limit theorem (CLT) for $\tQ$ directly implies a CLT for $Q$ and thus also for $\hL$. To prove the CLT for the sum of the $\tQ(\Zo_i)$, we note that $\tQ(\Zo_1), \dotsc, \tQ(\Zo_n)$ are i.i.d. conditionally on $\barf, \ff_n$, and so it suffices to verify Lindeberg's condition conditionally.

All our calculations of conditional expectations happen on the event $A_n$, which has asymptotic probability equal to $1$. By definition of the event $A_n$ in the statement of the theorem, there furthermore exists deterministic $n_0$ such that for all $n \geq n_0$, we also have that $4c_n \leq \eta$ and that $\cb{\inf_{\zo} \barf(\zo) > \eta/2, \; f_{\gprior}(\zo)/\barf(\zo)\in [1/2,2]} \subset A_n$. We assume $n \geq n_0$ henceforth.

We start by studying the (conditional) moments of $\tQ(\Zo_i)$. For the first moment, we want to argue that its square is negligible compared to the second moment. Our argument crucially depends on the following cancellation:
$$
\int \frac{ f_{\gprior_1}(\zo) - f_{\gprior_{-1}}(\zo)}{\barf(\zo)} \barf(\zo) d\lambda(\zo) = \int f_{\gprior_1}(\zo) d\lambda(\zo) \, - \, \int f_{\gprior_{-1}}(\zo) d\lambda(\zo)  = 1 -1 = 0.
$$
Using this cancellation, we get:
$$
\begin{aligned}
\abs{\aEE[\gprior]{\tQ(\Zo_i)}} &= \abs{ \int \frac{ f_{\gprior_1}(\zo) - f_{\gprior_{-1}}(\zo)}{\barf(\zo)} f_{\gprior}(\zo) d\lambda(\zo)} \\ 
&=  \abs{ \int \frac{\p{f_{\gprior_1}(\zo) - f_{\gprior_{-1}}(\zo)}}{\barf(\zo)}\p{f_{\gprior}(\zo) - \barf(\zo)}  d\lambda(\zo)} \ \\
&=  \abs{ \int \frac{\p{f_{\gprior_1}(\zo) - f_{\gprior_{-1}}(\zo)}}{\barf(\zo)}\frac{\p{f_{\gprior}(\zo) - \barf(\zo)}}{f_{\gprior}(\zo)} f_{\gprior}(\zo) \,d\lambda(\zo)} \\
& \leq \int \frac{\abs{f_{\gprior_1}(\zo) - f_{\gprior_{-1}}(\zo)}}{\barf(\zo)}\frac{\abs{f_{\gprior}(\zo) - \barf(\zo)}}{f_{\gprior}(\zo)} f_{\gprior}(\zo) \,d\lambda(\zo) \\
& \leq \frac{c_n}{\eta}  \int \frac{\abs{f_{\gprior_1}(\zo) - f_{\gprior_{-1}}(\zo)}}{\barf(\zo)}f_{\gprior}(\zo)d\lambda(\zo).
\end{aligned}
$$
We next turn to lower bound the second moment. Observe that almost surely, by Jensen's inequality:
$$
\aEE[\gprior]{\tQ(\Zo_i)^2} \geq  \aEE[\gprior]{\abs{\tQ(\Zo_i)}}^2 = \p{ \int \frac{\abs{f_{\gprior_1}(\zo) - f_{\gprior_{-1}}(\zo)}}{\barf(\zo)}f_{\gprior}(\zo)d\lambda(\zo)}^2.
$$
These two displays together imply that:
\begin{equation*}
\label{eq:first_moment_negligible}
\aEE[\gprior]{\tQ(\Zo_i)}^2 \leq \frac{c_n^2}{\eta^2}\aEE[\gprior]{\tQ(\Zo_i)^2} \leq \frac{1}{16}\aEE[\gprior]{\tQ(\Zo_i)^2}.
\end{equation*}
Next,
$$
\label{eq:second_moment_lower_bd}
\aEE[\gprior]{\tQ(\Zo_i)^2} =  \int \frac{\p{f_{\gprior_1}(\zo) - f_{\gprior_{-1}}(\zo)}^2}{\barf(\zo)} \frac{f_{\gprior}(\zo)}{\barf(\zo)}  \,d\lambda(\zo) \geq  \frac{1}{2}\int \frac{\p{f_{\gprior_1}(\zo) - f_{\gprior_{-1}}(\zo)}^2}{\barf(\zo)}  \,d\lambda(\zo) = \frac{\delta_n^2}{2n}.
$$
Hence,
$$\aVar[\gprior]{\tQ(\Zo_i)} = \aEE[\gprior]{\tQ(\Zo_i)^2} - \aEE[\gprior]{\tQ(\Zo_i)}^2 \geq \frac{15}{16} \aEE[\gprior]{\tQ(\Zo_i)^2} \geq \frac{15}{32}\frac{\delta_n^2}{n} \geq \frac{\delta_n^2}{4n}.$$
Furthermore,
$$
\begin{aligned}
\Norm{\tQ(\cdot) - \aEE[\gprior]{\tQ(\Zo_i)}}_{\infty} &\leq 2\Norm{\tQ(\cdot)}_{\infty} \\
& \leq \frac{2 \Norm{f_{\gpriorL}(\zo) - f_{\gpriorR}(\zo)}_{\infty}}{ \inf \barf(\zo)} \\
&\leq \frac{4 c_n}{\eta}.
\end{aligned}
$$
So:
\begin{equation}
\frac{\aEE[\gprior]{\abs{ \tQ(\Zo_i) - \aEE[\gprior]{\tQ(\Zo_i)}}^3}}{\aVar[\gprior]{\tQ(\Zo_i)}^{3/2}n^{1/2}} \leq \frac{\Norm{\tQ(\cdot) - \aEE[\gprior]{\tQ(\Zo_i)}}_{\infty}}{\aVar[\gprior]{\tQ(\Zo_i)}^{1/2}n^{1/2} } \leq \frac{8}{\eta}\frac{c_n}{n^{1/2}(\delta_n/n^{1/2})} \leq \frac{8c_n}{\eta \delta^{\ell}}.
\end{equation}
As argued in the beginning of the proof, this also implies the same bound for $Q(\cdot)$,
\begin{equation}
\label{eq:conditional_lyapunov}
\frac{\aEE[\gprior]{\abs{ Q(\Zo_i) - \aEE[\gprior]{Q(\Zo_i)}}^3}}{\aVar[\gprior]{Q(\Zo_i)}^{3/2}n^{1/2}}\leq \frac{8c_n}{\eta \delta^{\ell}}.
\end{equation}
We could in principle conclude now by applying the Lyapunov/Lindeberg CLT conditionally on $\ff_n, \barf$ along with Slutsky. To explain why the coverage of our intervals is uniform (under the conditions stated in the footnote of the theorem), we instead apply the Berry-Esseen bound (conditionally on $\ff_n, \barf)$ for Student's statistic~\citepsupplement[Theorem 1.1.]{bentkus1996berry}.  Recall the definition of $\hL$ in~\eqref{eq:Q}, and define
$$T = \frac{1}{n}\sum_{i=1}^n(Q(\Zo_i) - \aEE[\gprior]{Q(\Zo_i)})/\hV^{1/2} =  \left(\hL - L(\gprior) - \Bias[\gprior]{Q,L}\right)/\hV^{1/2},$$
and let $\Phi$ be the standard Normal CDF. Then, there exists a constant $C>0$, such that on the event $A_n$ and for $n$ sufficiently large,\footnote{Note that~\citetsupplement{bentkus1996berry}, do not apply the $(n-1)$ correction to the sample variance, in contrast to the definition of $\hV$ in~\eqref{eq:sample_var_est}. The additional error introduced due to this discrepancy is negligible and may be absorbed into $C$.} 
$$ \sup_{t \in \RR} \abs{ \aPP[\gprior]{T \leq t} - \Phi(t)} \leq \min\cb{1,\frac{C}{\eta}\frac{c_n}{\delta^{\ell}}}.$$
It follows (unconditionally) that,
$$
\begin{aligned}
\sup_{t \in \RR} \abs{ \PP[\gprior]{ T \leq t } - \Phi(t)} &= \sup_{t \in \RR} \abs{\EE[\gprior]{\aPP[\gprior]{ T \leq t } - \Phi(t)}} \\
&\leq  \sup_{t \in \RR} \abs{\EE[\gprior]{\ind(A_n) \cdot \p{\aPP[\gprior]{ T \leq t } - \Phi(t)}}} \, + \, \PP[\gprior]{A_n^c}\\
&\leq \frac{C}{\eta}\frac{c_n}{\delta^{\ell}} \, + \, \PP[\gprior]{A_n^c}
\end{aligned}
$$
The first part of the Theorem follows, since $c_n \to 0$ and $\PP[\gprior]{A_n^c} \to 0$ as $n \to \infty$, and so,
$$\p{\hat{L} - L(\gprior) - \Bias[\gprior]{Q,L}}/\sqrt{\hV} \xrightarrow[]{\mathcal{D}} \mathcal{N}(0,1).$$
From Proposition~\ref{prop:minimax_estimator}, we know that $\sup_{\tilde{\gprior} \in \gcal_n} \abs{\Bias[\tilde{\gprior}]{Q,L}} = \hB$, and so, on the event $\cb{\gprior \in \mathcal{G}_n} \subset A_n$ it also holds that $\abs{\Bias[\gprior]{Q,L}} \leq \hB$, i.e.,  
$$\PP[\gprior]{\abs{\Bias[\gprior]{Q,L}} \leq \hB} \geq \PP[\gprior]{A_n} = 1 - o(1).$$
It remains to prove coverage. Let $\tilde{t}_{\alpha} = t_{\alpha}(\hB/\hV^{1/2},1)$ and $\tilde{b}=\Bias[\gprior]{Q,L}/\hV^{1/2}$, then,
$$
\begin{aligned}
\PP[\gprior]{L(\gprior) \in \ii_{\alpha}} &= \PP[\gprior]{|\hat{L}-L(\gprior)| \leq t_{\alpha}(\hB,\hV)}  \\
&= \PP[\gprior]{\abs{\p{\hat{L}-L(\gprior)-\Bias[\gprior]{Q,L}}/\hV^{1/2} + \tilde{b}}  \leq \tilde{t}} \\
&= \PP[\gprior]{-\tilde{t} - \tilde{b} \leq \p{\hat{L}-L(\gprior)-\Bias[\gprior]{Q,L}}/\hV^{1/2} \leq \tilde{t} - \tilde{b}}\\
&=\EE[\gprior]{ \Phi(\tilde{t}-\tilde{b}) - \Phi(-\tilde{t}-\tilde{b})} -  o(1) \\
&\geq \EE[\gprior]{\p{\Phi(\tilde{t}-\tilde{b}) - \Phi(-\tilde{t}-\tilde{b})} \cdot  \ind\p{|\tilde{b}| \leq \hB/\hV^{1/2}}} - o(1)\\
&\stackrel{(i)}{\geq} (1-\alpha)\PP[\gprior]{|\Bias[\gprior]{Q,L}| \leq \hB} - o(1)\\
&\stackrel{(ii)}= 1-\alpha -o(1)
\end{aligned}
$$
In $(i)$ we used the definition of $t_{\alpha}(\cdot,\cdot)$ from~\eqref{eq:im_iw_ci} and in $(ii)$ we used the bound on the bias we derived above. 
\end{proof}

\subsection{Proof of Proposition~\ref{prop:applications}}

\begin{proof}
Throughout the proof we take $c_n = k^{-1/4} \log(k)$ with $k=k_n$. There are three things we need to check (for each case). \\

\noindent\text{(i)} First we check the quality of the pilot $\barf$. The key to our argument here is that
$$\sup_{t \in \RR} \abs{ \widehat{F}_n(t) - F_{\gprior}(t)} = O_{\mathbb P}(1/\sqrt{k}),$$
see e.g., below~\eqref{eq:DKW} for the justification (and note that here we use $k$ instead of $n$ samples). Hence,
$$
\begin{aligned}
t_n &:= \sup_{t \in \RR} \abs{ F_{\hG_n}(t)    - F_{\gprior}(t)} \\
	&\leq  \sup_{t \in \RR} \abs{ F_{\hG_n}(t)    -  \widehat{F}_n(t)}  \, + \, \sup_{t \in \RR} \abs{ \widehat{F}_n(t) - F_{\gprior}(t)} \\
	&\stackrel{(*)}{\leq} 2\sup_{t \in \RR} \abs{ \widehat{F}_n(t) - F_{\gprior}(t)} \\
	&= O_{\mathbb P}(1/\sqrt{k}).
\end{aligned}
$$
We note that $(*)$ follows by the definition of \smash{$\hG_n$} as the minimum distance estimator~\eqref{eq:minimum_ks_fhat}. In the remainder of the proof we seek to bound $\sup_{\zo}\abs{ f_{\gprior}(\zo) - \barf(\zo)}$ in terms of $t_n$.

For the discrete examples from part (b), we only use the fact that $\lambda$ is the counting measure on (a subset of) $\mathbb N_{\geq 0}$, so that e.g, $f_{\gprior}(0) = F_{\gprior}(0)$ and $f_{\gprior}(\zo) =  F_{\gprior}(\zo) -  F_{\gprior}(\zo-1)$ for $\zo > 0$. Hence, recalling that  \smash{$\barf(\zo) = f_{\hG_n}(\zo)$}, and by the triangle inequality, we conclude that,
$$\sup_{\zo}\abs{ f_{\gprior}(\zo) - \bar{f}(\zo)} \leq 2 \sup_{t \in \RR} \abs{ F_{\hG_n}(t)    - F_{\gprior}(t)} = 2t_n =  O_{\mathbb P}(1/\sqrt{k}).$$
Let us now turn to the Gaussian example from part (a), say with $\sigma=1$ (without loss of generality). We handle the density at $\zo=\triangleleft$, resp.$ \zo=\triangleright$ as in the discrete examples and focus now on the Lebesgue density at $\zo \in [-M,M]$.
We record the following fact. We have that,
$$ f_{\gprior}(\zo) = \EE[\gprior]{ \varphi(z-\mu)},$$
where $\varphi$ is the standard Gaussian pdf. Hence,
\begin{equation}
\label{eq:gaussian_convolution_derivative}
f_{\gprior}'(\zo) = \EE[\gprior]{ \varphi'(z-\mu)}.
\end{equation}
Consequently $f_{\gprior}'(\zo)$ is bounded, uniformly over all priors $\gprior$ and all $\zo$. Thus, by Taylor's theorem, there exists a constant $C>0$ such that for all $\gprior$, $\zo$ and $h>0$:
$$\abs{ f_{\gprior}(\zo) -   \frac{1}{h}\p{F_{\gprior}(\zo+h) - F_{\gprior}(\zo)}} \, \leq \, C \cdot h.$$
Arguing by the triangle inequality, we have that
$$\sup_{\zo}\abs{ f_{\gprior}(\zo) - \bar{f}(\zo)} \leq \frac{2t_n}{h} + 2Ch,$$
and by choosing $h = \sqrt{t_n}$, we conclude that:
$$\sup_{\zo}\abs{ f_{\gprior}(\zo) - \bar{f}(\zo)} \leq O_{\mathbb P}(\sqrt{t_n}) =  O_{\mathbb P}(k^{-1/4}).$$

\noindent \textbf{(ii)} Second we check that the localizations indeed include $F_{\gprior}$, i.e., that $\PP{ F_{\gprior} \in \ff_n} \to 1$. Here we use the fact, that all $F$-localizations considered in this proposition are nested in $\alpha$, i.e., $\ff_n(\alpha) \subset \ff_n(\alpha')$ for $\alpha' \leq \alpha$.

Fix $\varepsilon > 0$. Let $n_0$ be such that $\alpha_{n} < \varepsilon/2$ for all $n \geq n_0$ (recall that $\alpha_n \to 0$) and $n_1$ be such that,
$$ \PP[\gprior]{F_{\gprior} \in \ff_n(\varepsilon/2)} \geq 1 -\varepsilon/2 - \varepsilon/2 = 1-\varepsilon \text{ for all } n \geq n_1.$$
Such $n_1$ exists since all $F$-localizations are asymptotically valid (at a fixed confidence level) in the sense of~\eqref{eq:dbn_nbhood}. Thus, using the nestedness property, for $n \geq \max\cb{n_0, n_1}$, we have that,
$$ \PP[\gprior]{F_{\gprior} \in \ff_n(\alpha_n)} \geq \PP[\gprior]{F_{\gprior} \in \ff_n(\varepsilon/2)} \geq 1-\varepsilon.$$
Since $\varepsilon > 0$ was arbitrary, we conclude.\\

\noindent \textbf{(iii)} It remains to check that $ \Norm{f_{\gpriorL}(\cdot)-f_{\gpriorR}(\cdot)}_{\infty} \leq c_n$  with probability tending to $1$, where $\gpriorL,\gpriorR$ are the solutions to the modulus problem. Note that by definition, $F_{\gpriorL},F_{\gpriorR} \in \ff_n(\alpha_n)$. We consider each $F$-localization separately. \\

\noindent\textbf{DKW-$F$-localization:} By~\eqref{eq:DKW} (with the sample size $n$ replaced by $k$),
$$
\begin{aligned}
\sup_{t}\abs{ F_{\gpriorL}(t) - F_{\gpriorR}(t)} &\leq  \sup_{t}\abs{ F_{\gpriorL}(t) - \widehat{F}_n(t)} + \sup_{t}\abs{ F_{\gpriorR}(t) - \widehat{F}_n(t)} \\
&\leq 2\sqrt{\log\p{2/\alpha_n}\big/(2k)}.
\end{aligned}
$$
Then, we can argue as in part (i) of this proof that this implies bounds on  $ \Norm{f_{\gpriorL}(\cdot)-f_{\gpriorR}(\cdot)}_{\infty}$ in all cases (note that in the Gaussian case we use the bounded second derivative argument for $\zo \in [-M,M]$ and handle $\triangleleft,\triangleright$ as discrete).\\
\newline
\noindent\textbf{$\chi^2$-$F$-localization:}  Consider the event $C_n = \cb{\inf_{\zo}\barf(\zo) > \eta/2}$, where $\eta >0$ is such that  $\inf_{\zo} f_{\gprior}(\zo) > \eta$. $C_n$ has probability tending to $1$ as $n\to \infty$ (as follows from the proof of (i) above). Note that, on the event $C_n$, for any $\zo'$,
$$\p{f_{\gpriorR}(\zo') -  f_{\gpriorR}(\zo')}^2 \leq \frac{2}{\eta}\sum_{i} \frac{\p{f_{\gpriorR}(\zo) -  f_{\gpriorR}(\zo)}^2}{\barf(\zo)} \leq \frac{2}{\eta} \cdot \frac{\p{\delta^u}^2}{n}.$$
Thus,
$$\Norm{f_{\gpriorL}(\cdot)-f_{\gpriorR}(\cdot)}_{\infty} = O_{\mathbb P}(1/\sqrt{n}) = o_{\mathbb P}(1/\sqrt{k}).$$

\noindent\textbf{Gauss-$F$-localization:} Again all our calculations assume that the event $\cb{\inf_{\zo}\barf(\zo) > \eta/2}$ has occurred. Let us first treat $\triangleleft$ and $\triangleright$ separately. Arguing as in the case for the $\chi^2$-$F$-localization, we find that,
$$\p{f_{\gpriorR}(\triangleleft) -  f_{\gpriorR}(\triangleleft)}^2 \leq \frac{2}{\eta} \cdot \frac{\p{\delta^u}^2}{n},$$
and similarly for  $\triangleright$. Next, for $\zo \in [-M,M]$, we use the smoothness of the convolved densities. Namely, suppose,
$$\abs{f_{\gpriorL}(\zo) - f_{\gpriorR}(\zo)} =: \varepsilon > 0.$$
Then, $\abs{f_{\gpriorL}(\zo) - f_{\gpriorR}(\zo)} > \varepsilon/2$ in an interval of length $c \cdot \varepsilon$, for a small constant $c >0$, as follows by the boundedness of the derivative~\eqref{eq:gaussian_convolution_derivative}. This means that,
$$ \frac{\p{\delta^{u}}^2}{n} \geq \int_{[-M,M]}\frac{\p{f_{\gpriorL}(\zo) - f_{\gpriorR}(\zo)}^2}{\barf(\zo)}d\lambda^{\text{Leb}}(\zo) \geq c'\varepsilon^3,$$
for another constant $c'$. Note that we also used the fact that $\barf(\zo)$ is uniformly bounded. By rearranging, we find that,

$$\abs{f_{\gpriorL}(\zo) - f_{\gpriorR}(\zo)} = O_{\mathbb P}(n^{-1/3})=o_{\mathbb P}(k^{-1/3}).$$

\end{proof}

\subsection{Proof of Theorem~\ref{theo:amari}}

\begin{proof}
This proof is a continuation of the proof of Theorem~\ref{theo:lin_functional_clt}, and so we also use the notation used therein. In particular, all calculations take place on the event $A_n$ (which has probability tending to $1$ as $n\to\infty$. Furthermore, we write $\aCov{\cdot}$ for the covariance conditionally on $\ff_n, \barf$. Let \smash{$c^* = \theta_{\gprior}(\zo)$} and \smash{$\kappa^*$} be such that \smash{$c^* = \kappa^* c^{\ell} + (1-\kappa^*)c^{u}$} and \smash{$Q^{c^*} = \kappa^* Q^{\ell} + (1-\kappa^*)Q^{u}$}. Note that $\kappa^* \in [0,1]$, since on the event $A_n$, $F_{\gprior} \in \ff_n$, and so $c^{\ell} \leq c^* \leq c^u$.

We first note that our algorithm estimates the bias conservatively. To see, this, note that:
$$
\begin{aligned}
\Bias[\gprior]{Q^{c^*} ,L} &= \int Q^{c^*}(\zo) f_{\gprior}(\zo) d\lambda(\zo) - L(\gprior) \\
&=  \int \p{ \kappa^* Q^{\ell}(\zo) + (1-\kappa^*)Q^{u}(\zo)} f_{\gprior}(\zo) d\lambda(\zo) - L(\gprior) \\
&= \kappa^*\Bias[\gprior]{Q^{\ell} ,L} + (1-\kappa^*)\Bias[\gprior]{Q^{u} ,L}.
\end{aligned}
$$
Thus,
$$\sup_{\gprior}\abs{\Bias[\gprior]{Q^{c^*} ,L}} \leq \kappa^*\sup_{\gprior}\abs{\Bias[\gprior]{Q^{\ell} ,L}} + (1-\kappa^*)\sup_{\gprior}\abs{\Bias[\gprior]{Q^{u} ,L}},$$
and the RHS is precisely our bound for the worst-case bias.

We next seek to prove that the Lyapunov/Lindeberg bound~\eqref{eq:conditional_lyapunov} that holds for \smash{$Q^{\ell}, Q^u$} on the event $A_n$ also applies to \smash{$Q^{c^*}$} (with a larger constant) on a  smaller event, that however also has asymptotic probability equal to $1$ (just as $A_n$ does). The result will follow, using the Berry-Esseen bound of~\citetsupplement{bentkus1996berry} (as in the Proof of Theorem~\ref{theo:lin_functional_clt}) and the argument in the proof of Corollary~\ref{coro:naive_ar}.

All our conditional calculations occur on the event $A_n$ of Theorem~\ref{theo:lin_functional_clt}, and on the event, 
$$ \tilde{A}_n = \cb{ \aCov[\gprior]{Q^{\ell}(\Zo_i),\, Q^{u}(\Zo_i)}  \geq (-1+\varepsilon/2) \cdot \aVar[\gprior]{Q^{\ell}(\Zo_i)}^{1/2}\aVar[\gprior]{Q^{u}(\Zo_i)}^{1/2}}.$$
We will show below that $\mathbb P_{\gprior}[\tilde{A}_n] \to 1$ and so also $\mathbb P_{\gprior}[\tilde{A}_n \cap A_n] \to 1$ as $n \to \infty$. For now we seek to provide a Lyapunov/Lindeberg bound~\eqref{eq:conditional_lyapunov} for \smash{$Q^{c^*}$}. To this end, first note that on $\tilde{A}_n \cap A_n$,
$$
\begin{aligned}
&\aVar[\gprior]{Q^{c^*}(\Zo_i)} \\
=\;\;& \aVar[\gprior]{ \kappa^* Q^{\ell}(\Zo_i) + (1-\kappa^*)Q^{u}(\Zo_i)} \\
=\;\;& \p{\kappa^*}^2 \aVar[\gprior]{Q^{\ell}(\Zo_i)} +  \p{1-\kappa^*}^2 \aVar[\gprior]{Q^{u}(\Zo_i)}  + 2\kappa^*(1-\kappa^*) \aCov[\gprior]{Q^{\ell}(\Zo_i),\, Q^{u}(\Zo_i)} \\
\geq\;\;& \p{\kappa^*}^2 \aVar[\gprior]{Q^{\ell}(\Zo_i)} +  \p{1-\kappa^*}^2 \aVar[\gprior]{Q^{u}(\Zo_i)}  - 2\kappa^*(1-\kappa^*)(1-\varepsilon/2) \cdot \aVar[\gprior]{Q^{\ell}(\Zo_i)}^{1/2}\aVar[\gprior]{Q^{u}(\Zo_i)}^{1/2}\\
\geq \;\;& \p{\p{\kappa^*}^2 \aVar[\gprior]{Q^{\ell}(\Zo_i)} +  \p{1-\kappa^*}^2 \aVar[\gprior]{Q^{u}(\Zo_i)}}\cdot \varepsilon/2.
\end{aligned}
$$
In the last step we used the inequality $2ab \leq a^2 + b^2$. On the other hand,
$$
\begin{aligned}
&\aEE[\gprior]{\abs{ Q^{c^*}(\Zo_i) - \aEE[\gprior]{Q^{c^*}(\Zo_i)}}^3} \\
=\;\;& \aEE[\gprior]{\abs{ \kappa^*\p{Q^{\ell}(\Zo_i) - \aEE[\gprior]{Q^{\ell}(\Zo_i)}} + (1-\kappa^*)\p{Q^{u}(\Zo_i) - \aEE[\gprior]{Q^{u}(\Zo_i)}}}^3} \\
\leq\;\;& 8\p{\kappa^*}^3 \aEE[\gprior]{\abs{Q^{\ell}(\Zo_i) - \aEE[\gprior]{Q^{\ell}(\Zo_i)}}^3} + 8\p{1-\kappa^*}^3 \aEE[\gprior]{\abs{Q^{\ell}(\Zo_i) - \aEE[\gprior]{Q^{\ell}(\Zo_i)}}^3}. 
\end{aligned}
$$
Thus we now combine the two aforementioned inequalities,
$$
\begin{aligned}
&\frac{\aEE[\gprior]{\abs{ Q^{c^*}(\Zo_i) - \aEE[\gprior]{Q^{c^*}(\Zo_i)}}^3}}{\aVar[\gprior]{Q^{c^*}(\Zo_i)}^{3/2}n^{1/2}}\\
\leq \;\; &  \frac{8\p{\kappa^*}^3 \aEE[\gprior]{\abs{Q^{\ell}(\Zo_i) - \aEE[\gprior]{Q^{\ell}(\Zo_i)}}^3} + 8\p{1-\kappa^*}^3 \aEE[\gprior]{\abs{Q^{\ell}(\Zo_i) - \aEE[\gprior]{Q^{\ell}(\Zo_i)}}^3}}{\p{\p{\kappa^*}^2 \aVar[\gprior]{Q^{\ell}(\Zo_i)} +  \p{1-\kappa^*}^2 \aVar[\gprior]{Q^{u}(\Zo_i)}}^{3/2}\cdot (\varepsilon/2)^{3/2} n^{1/2}}\\
\leq \;\; &\frac{C}{\varepsilon^{3/2}}\cb{\frac{\aEE[\gprior]{\abs{ Q^{\ell}(\Zo_i) - \aEE[\gprior]{Q^{\ell}(\Zo_i)}}^3}}{\aVar[\gprior]{Q^{\ell}(\Zo_i)}^{3/2}n^{1/2}} \, + \, \frac{\aEE[\gprior]{\abs{ Q^{u}(\Zo_i) - \aEE[\gprior]{Q^{u}(\Zo_i)}}^3}}{\aVar[\gprior]{Q^{u}(\Zo_i)}^{3/2}n^{1/2}}        } \\
\leq \;\; & C'\frac{c_n}{\varepsilon^{3/2} \eta \delta^{\ell}}.
\end{aligned}
$$
Here $C,C'>0$ are some constants. The last step follows by applying~\eqref{eq:conditional_lyapunov} that holds for \smash{$Q^{\ell}, Q^u$} by the proof of Theorem~\ref{theo:lin_functional_clt}. It remains to prove that $\tilde{A}_n$ has asymptotic probability tending to $1$. To this end, let $\tilde{\varepsilon} >0$, and also let,
$$ W_n = \frac{1}{n}\sum_{i=1}^n \p{Q^{\ell}(\Zo_i)-\aEE[\gprior]{Q^{\ell}(\Zo_i)}}\p{Q^{u}(\Zo_i) - \aEE[\gprior]{Q^{u}(\Zo_i)}}.$$
Then:
$$
\begin{aligned}
&\aPP[\gprior]{ W_n - \aCov[\gprior]{Q^{\ell}(\Zo_i),\, Q^{u}(\Zo_i)} \geq \tilde{\varepsilon}\, \aVar[\gprior]{ Q^{\ell}(\Zo_i)}^{1/2}\aVar[\gprior]{Q^u(\Zo_i)}^{1/2}} \\
\leq \;\; & \aEE[\gprior]{ \p{Q^{\ell}(\Zo_i)-\aEE[\gprior]{Q^{\ell}(\Zo_i)}}^2  \p{Q^{u}(\Zo_i)-\aEE[\gprior]{Q^{u}(\Zo_i)}}^2}\Big/\p{n \tilde{\varepsilon}^2 \aVar[\gprior]{ Q^{\ell}(\Zo_i)}\aVar[\gprior]{Q^u(\Zo_i)}} \\
\leq \;\; &  \frac{\aEE[\gprior]{ \p{Q^{\ell}(\Zo_i)-\aEE[\gprior]{Q^{\ell}(\Zo_i)}}^4}^{1/2} \aEE[\gprior]{\p{Q^{u}(\Zo_i)-\aEE[\gprior]{Q^{u}(\Zo_i)}}^4}^{1/2}}{n \tilde{\varepsilon}^2 \aVar[\gprior]{ Q^{\ell}(\Zo_i)}\aVar[\gprior]{Q^u(\Zo_i)}}  \\
\leq \;\; &   \frac{C c_n^2}{\p{\tilde{\varepsilon} \eta \delta^{\ell}}^2},
\end{aligned}
$$
for another constant $C>0$. The argument for the last line is analogous to the argument that led up to~\eqref{eq:conditional_lyapunov}. This also means that unconditionally, 
$$
\begin{aligned}
&\PP[\gprior]{ W_n - \aCov[\gprior]{Q^{\ell}(\Zo_i),\,Q^{u}(\Zo_i)} \geq \tilde{\varepsilon} \,\aVar[\gprior]{ Q^{\ell}(\Zo_i)}^{1/2}\aVar[\gprior]{Q^u(\Zo_i)}^{1/2}} \\
\leq \;\; &  \frac{C c_n^2}{\p{\tilde{\varepsilon} \eta \delta^{\ell}}^2} \, + \, \PP[\gprior]{A_n^c} \to 0 \text{ as }  n \to \infty.
\end{aligned}
$$
By a similar argument we can prove that $n\hV^{\ell}/\aVar[\gprior]{Q^{\ell}(\Zo_i)} = 1+o_{\mathbb P}(1)$, $n\hV^{u}/\aVar[\gprior]{Q^{u}(\Zo_i)} = 1+o_{\mathbb P}(1)$ and that
$$\frac{1}{n}\sum_{i=1}^n\p{ Q^{\ell}(\Zo_i) - \aEE[\gprior]{Q^{\ell}(\Zo_i)}} \Big/\aVar[\gprior]{Q^{\ell}(\Zo_i)}^{1/2} = o_{\mathbb P}(1),$$
and similarly for $Q^{u}$. Combining the above results, with the assumption of the Theorem, it follows that $\mathbb P_{\gprior}[\tilde{A}_n \cap A_n] \to 1$.
\end{proof}

\section{Proofs for Section~\ref{sec:power} on asymptotic power}

A word on notation: We drop the dependence on $n$ and $M$, whenever this does not cause confusion. For example we may write $F_{\gprior}$ instead of $F_{\gprior}^M$. For the results for AMARI, we follow Proposition~\ref{prop:applications} and assume that the pilot quantities $\barf$ and $\ff_n$ are constructed based on $k$ samples with $k\to \infty$, $k/n \to 0$ and $k \cdot \alpha_n \to \infty$ as $n \to \infty$. To keep the notation lighter we suppose that we compute $\barf$ and $\ff_n$ based on $k$ fresh samples from model~\eqref{eq:EB} that are independent from $Z_1, \dotsc, Z_n$. The asymptotic confidence interval lengths remain the same as under the sample-splitting of Proposition~\ref{prop:applications}, because $(n-k)/n \to 1$ as $n \to \infty$.

\subsection{Poisson model (Section~\ref{subsec:power_poisson})}

\subsubsection{Moment space calculations}
We start with some preliminary definitions and a lemma that will be needed for the proofs of the theoretical results of Section~\ref{subsec:power_poisson}. For any measure $H$ supported on $[a,b]$, we write:

\begin{equation}
\label{eq:moment_def}
m_k(H) := \int_a^b \mu^k dH(\mu),\; k=0,\dotsc,M,
\end{equation}
for its moments.\footnote{$m_0(H)$ need not be $1$, since we do not only consider probability measures.} We define the moment space:
\begin{equation}
\label{eq:momentspace}
\mathcal{M} := \cb{ \p{ m_0(H),\dotsc, m_{M}(H)} \in \RR^{M+1} \cond H \text{ measure on } [a,b],\; \int \exp(\mu)dH(\mu) = 1}.
\end{equation}
The key lemma in this section is the following:
\begin{lemm}[Open in Moment space]
\label{lemm:open_moment}
Let $H$ be a measure supported on $[a,b]$, $-\infty < a < b < \infty$ with at least $M+2$ points of support, such that $\int \exp(\mu) dH(\mu) = 1$. Then, $(m_0(H), \dotsc, m_{M}(H))$ is an element of the interior of $\mathcal{M}$.
\end{lemm}

\begin{proof}
We will show at the end of the proof, that we may assume without loss of generality that there exist points $a \leq \xi_1 < \xi_2 < \dotsc < \xi_{M+2} \leq b$ such that $\min_{j=1}^{M+2} H(\cb{\xi_j}) > \zeta$ for some $\zeta >0$.

Take $\varepsilon >0$ (which we will specify later). Let $(m_0',\dotsc,m_{M}') \in \RR^{M+1}$ be such that $\Delta_k = m_k' - m_k(H)$ satisfies $\abs{\Delta_k} < \varepsilon$ for all $k=0,\dotsc,M$. We want to show that $(m_0',\dotsc,m_{M}') \in \mathcal{M}$. To this end, we will consider perturbations of $H$ of the following form. For $\mathbf{a} \in \RR^{M+2}$, we consider:
$$H_{\mathbf{a}} = H \; + \; \sum_{j=1}^{M+2} a_j \delta_{\xi_j},$$
where $\delta_{\xi}$ is the Dirac point mass at $\xi$. Our goal is to pick $\mathbf{a}$ by solving the following linear system:
$$
\begin{aligned}
&\sum_{j=1}^{M+2} a_j \exp(\xi_j) &&= \;\;0 \\
&\sum_{j=1}^{M+2} a_j \, \xi_j^k &&= \;\;\Delta_k \;\;\text{ for } \;\; k =0,\dotsc,M.
\end{aligned}
$$
Call $\bm{\Xi}$ the matrix of this linear system, then $\bm{\Xi} \mathbf{a} = (0, \Delta_0, \dotsc, \Delta_{M})^\top$, where
$$ \bm{\Xi} = \begin{pmatrix}
\exp(\xi_1) & \exp(\xi_2) & \exp(\xi_3) & \dots & \exp(\xi_{M+1}) & \exp(\xi_{M+2}) \\ 
1 & 1 & 1 & \dots & 1  & 1\\
\xi_1 & \xi_2 & \xi_3 & \dots & \xi_{M+1}  & \xi_{M+2}\\
\hdotsfor{6} \\
\xi_1^{M} & \xi_2^{M} & \xi_3^{M} & \dots & \xi_{M+1}^{M} & \xi^{M}_{M+2} \
\end{pmatrix}.$$
We will prove below that $\bm{\Xi}$ is invertible. We pick $\varepsilon >0$ small enough, so that:
$$\abs{\Delta_k} < \varepsilon \text{ for } k=0,\dotsc,M\;\;\Longrightarrow \;\; \Norm{\bm{\Xi}^{-1} (0, \Delta_0, \dotsc, \Delta_{M})^\top}_{\infty}  < \zeta/2.$$ 
With $\mathbf{a}$ picked as above, it then holds that:
$$H_{\mathbf{a}}(\cb{\xi_j}) = H(\cb{\xi_j}) \, + \, a_j \geq \zeta/2 > 0.$$
$$ \int \exp(\mu) dH_{\mathbf{a}}(\mu) = \int \exp(\mu)dH(\mu) + \sum_{j=1}^{M+2}a_j \exp(\xi_j) = 1 + 0 =1.$$
Hence $H_{\mathbf{a}}$ is a candidate measure with moments:
$$m_k(H_{\mathbf{a}}) = m_k(H) + \sum_{j=1}^{M+2}a_j \xi_j^k = m_k(H) + \Delta_k = m_k'.$$
Thus $(m_0',\dotsc,m_{M}') \in \mathcal{M}$, and so $(m_0(H), \dotsc, m_{M}(H))$ lies in the interior of $\mathcal{M}$.

We still need to prove the invertibility of $\bm{\Xi}$. To this end, we define the function

$$ D(\xi) = \begin{vmatrix}
\exp(\xi_1) & \exp(\xi_2) & \exp(\xi_3) & \dots & \exp(\xi_{M+1}) & \exp(\xi) \\ 
1 & 1 & 1 & \dots & 1  & 1\\
\xi_1 & \xi_2 & \xi_3 & \dots & \xi_{M+1}  & \xi\\
\hdotsfor{6} \\
\xi_1^{M} & \xi_2^{M} & \xi_3^{M} & \dots & \xi_{M+1}^{M} & \xi^{M} \
\end{vmatrix}$$
and want to prove that $D(\xi_{M+2}) = \abs{ \bm{\Xi}} \neq 0$. Suppose otherwise. Then $\xi_{M+2}$ is a root of $D(\cdot)$, and so are $\xi_1,\dotsc, \xi_{M+1}$. On the other hand, we can write $D(\xi)$ as:
$$ D(\xi) = b_{-1}\exp(\xi) + \sum_{j=0}^{M} b_j \xi^j,$$
for some $b_{-1}, \dotsc, b_{M} \in \RR$. If $b_{-1} =0$, then $D(\cdot)$ is a polynomial of degree $M$ and can have at most $M$ roots, which is a contradiction. If $b_{-1} \neq 0$, then by applying Rolle's theorem multiple times, we find that $b_j \exp(\cdot)$ must have a root, which is also a contradiction. Thus $D(\xi_{M+2}) \neq 0$ and $\bm{\Xi}$ is invertible. 

We also need to justify why we could assume in the beginning of the proof that there exist points $a \leq \xi_1 < \xi_2 < \dotsc < \xi_{M+2} \leq b$ such that $\min_{j=1}^{M+2} H(\cb{\xi_j}) > \zeta$ for some $\zeta >0$. We follow the proof idea of~\citet{Pinelis2017}. By assumption, the support set of $H$ consists of at least $M+2$ points in $[a,b]$. This means that there exist pairwise disjoint closed intervals $J_1, \dotsc, J_{M+2}$ in $[a,b]$ such that $H(J_j) > 0$ for all $j=1,\dotsc,M+2$. For each interval $J_j$ there exists a discrete measure $H_j$ supported on at most $M+3$ points in $J_j$ such that:
$$\int_{J_j} \exp(\mu) dH_j(\mu) = \int_{J_j} \exp(\mu) dH(\mu),\;\;\; \int_{J_j} \mu^k dH_j(\mu) = \int_{J_j} \mu^k dH(\mu),\; k=0,\dotsc,M.$$
This result follows e.g., from \citet[Theorem 1.2, Chapter II]{karlin1966tchebycheff} by noting that the proof of the invertibility of $\bm{\Xi}$ also demonstrates that $\mu \mapsto (1, \mu, \dotsc, \mu^{M}, \exp(\mu))$ is a Tchebycheff system \citep[Definition 1.1, Chapter I]{karlin1966tchebycheff} on every closed, nonempty interval that is a subset of $[a,b]$. Since $H_j(I_j) = H(I_j) > 0$, there exists $\xi_j \in I_j$ such that $H_j(\cb{\xi_j}) > 0$. 

Consider the measure $\widetilde{H}$ that is defined on Borel sets  $A \subset [a,b]$ as follows:
$$\widetilde{H}(A) = H\bigg(A \setminus \bigcup_{j=1}^{M+2} I_j\bigg) \, + \, \sum_{j=1}^{M+2} H_j(A \cap I_j).$$
Then $\widetilde{H}$ is also a measure supported on $[a,b]$ and:
$$\int_a^b \exp(\mu) d\widetilde{H}(\mu) = \int_a^b \exp(\mu) dH(\mu),\;\;\; m_k(\widetilde{H}) =m_k(H),\; k=1,\dotsc,M.$$
We may now repeat the argument of this proof with $\widetilde{H}$ replacing $H$ to arrive at the conclusion of the Lemma.
\end{proof}

\subsubsection{Proof for Proposition~\ref{prop:asymptotics_poisson_marginal_prob}: DKW-$F$-Localization.}
\label{subsubsec:poisson_pmf_dkw_proof}	

In this section we provide the proof of the statement for the DKW-$F$-Localization.

\begin{proof}
We divide the proof into three steps. Throughout we write $c_n = \sqrt{\log(2/\alpha)/(2n)}$ and $\abs{\ii}$ for the length of the DKW-$F$-localization interval.

\begin{itemize} 
\item \textbf{Step 1:} We first prove that $\abs{\ii} \leq 4c_n$ almost surely.
\item \textbf{Step 2:} Let $A_n$ be the event on which there exist distributions $F^M_{\ell}$, $F^M_{u}$ on $\cb{0,1,\dotsc,M-1,\, M, \triangleright}$ with $F^M_{\ell}$, $F^M_{u} \in \ff^{\text{DKW}}_n(\alpha)$ that make the inequalities used in Step 1 tight. We show that $\PP[G]{A_n} \to 1$ as $n \to \infty$.
\item \textbf{Step 3:} In Step 2, we ignored the fact that the distributions we constructed need not be marginal distributions in the empirical Bayes problem. Let $B_n^{u}$ be the event that the distribution $F^M_{u}$ from Step 2 may be represented as $F^M_{u} = F_{\tilde{G}}^M$ as in~\eqref{eq:marginal_density}, where $\tilde{G} \in \mathcal{G}$. $B_n^{\ell}$ is defined similarly for $F^M_{\ell}$. We then prove that also $\PP[\gprior]{B_n^{u} \cap B_n^{\ell}} \to 1$.  
\end{itemize}
Using the results from Steps 2 and 3, we see that with probability tending to $1$, it holds that $\abs{\ii} \geq 4c_n$, and so it also follows that $\abs{\ii}/(4c_n) = 1 + o_{\mathbb P_G}(1)$.

\noindent{\textbf{Step 1:}} Take any distribution $F \in \ff^{\text{DKW}}_n(\alpha)$. Then the following holds for its density at $\zo \in \cb{1,\dotsc,M}$.
\begin{equation}
\label{eq:naive_dkw_construction}
\begin{aligned}
f(\zo) &=  F(\zo) - F(\zo-1) \\
 &= \p{\widehat{F}_n(\zo) - \widehat{F}_n(\zo-1)} + \p{F(\zo) - \widehat{F}_n(\zo)} -  \p{F(\zo-1) - \widehat{F}_n(\zo-1)} \\
 &= \hat{f}_n(\zo) + \p{F(\zo) - \widehat{F}_n(\zo)} -  \p{F(\zo-1) - \widehat{F}_n(\zo-1)} \\
 &\leq \hat{f}_n(\zo) + 2c_n.
\end{aligned}
\end{equation}
In the last inequality we used the definition of the DKW band and $\hat{f}_n(\zo)=\#\cb{\Zo_i = \zo)}/n$. Similarly, we may conclude that $f(\zo) \geq \hat{f}_n(\zo) - 2c_n$. Combining these two results, we see that the DKW-$F$-Localization band for $L(\gprior) = f_{\gprior}(\zo)$ must satisfy,
$$ \ii \subset [\hat{f}_n(\zo) - 2c_n      ,\; \hat{f}_n(\zo) + 2c_n],$$
and so its length can be at most $4c_n$.\\

\noindent{\textbf{Step 2:}} We define $F_u$ as follows:
$$ F_{u}(\zo') =  \widehat{F}_n(\zo') \text{ for } \zo' \notin \cb{\zo-1,\zo},\; F_{u}(\zo-1)
 =  \widehat{F}_n(\zo-1) - c_n,\; F_{u}(\zo)=  \widehat{F}_n(\zo) + c_n.$$
$F_u$ is tight for the inequality in~\eqref{eq:naive_dkw_construction}, since:
$$ \p{F_u(\zo) - \widehat{F}_n(\zo)} -  \p{F_u(\zo-1) - \widehat{F}_n(\zo-1)} = 2c_n.$$
$F_u$ satisfies the constraints of the DKW-band, however $F_u$ is not necessarily a distribution function. Let us  define $A_{n}^{u}$ as the event on which $\widehat{F}_n(\zo-1) - c_n > \widehat{F}_n(\zo-2)$ (or $>0$ if $\zo=1$) and $\widehat{F}_n(\zo) + c_n < \widehat{F}_n(\zo+1)$. Since the true distribution is a Poisson mixture $F_G$ and $G$ is not only supported on the point $0$, it holds that $F_G(\zo-2) < F_G(\zo-1)$ (if $\zo \geq 2$ and $F_G(\zo-1) > 0$ if $\zo=1$) and that $F_G(\zo) < F_G(\zo+1)$. Since $c_n \to 0$ and by the Glivenko-Cantelli Theorem, we conclude that $\PP[\gprior]{A_{n}^{u}} \to 1$. We may similarly define and argue for $F_{\ell}$ (with a corresponding event $A_n^{\ell}$). Since $A_n \supset A_n^{\ell}\cap A_n^{u}$, we conclude.\\

\noindent{\textbf{Step 3:}} We define $H_{\gprior}$ as the measure that is absolutely continuous w.r.t. $\gprior$ with Radon-Nikodym derivative $dH_{\gprior}/d\gprior(\mu) = \exp(-\mu)$. Recall the definition of moments $m_k(\cdot)$ in~\eqref{eq:moment_def} and the moment space $\mathcal{M}$~\eqref{eq:momentspace}. For the measure $H_{\gprior}$ it holds that
$$ m_k(H_{\gprior}) =  \int_a^b \mu^k dH_{\gprior}(\mu) = \int_a^b \mu^k \exp(-\mu) \, d\gprior(\mu) = k! f_{\gprior}(k).$$
Furthermore, since $\gprior$ is supported on at least $M+2$ points, so is $H_{\gprior}$. By Lemma~\ref{lemm:open_moment}
$\p{m_0(H_{\gprior}), \dotsc, m_{M}(H_{\gprior})}$ lies in the interior of the moment space $\mathcal{M}$, i.e., there exists an open set $\mathcal{U} \subset \RR^{M+1}$ such that $\p{m_0(H_{\gprior}), \dotsc, m_{M}(H_{\gprior})} \in \mathcal{U}$ and $\mathcal{U} \subset \mathcal{M}$. We define the bijective mapping 
$$T = (T_0, \dotsc, T_{M}):  \RR^{M+1} \to \RR^{M+1},\;\; T_k((u_0,\dotsc, u_{M})) = \sum_{j=0}^k \frac{u_j}{j!},\;\;k=0,\dotsc,M.$$
Then, for two elements $\mathbf{u} =(u_0,\dotsc,u_{M}), \mathbf{v} = (v_0, \dotsc, v_{M}) \in \RR^{M+1}$, we let:
$$d(\mathbf{u},\, \mathbf{v}) = \max_{k=0}^{M} \abs{ T_k(\mathbf{u}) -  T_k(\mathbf{v})}.$$ 
$d(\cdot,\cdot)$ is a distance for $\RR^{M+1}$ that metrizes the standard topology. Thus there exists $\varepsilon >0$ so that:
$$\mathbf{u} \in \mathcal{U} \text{ for all } \mathbf{u} \text{ with } d\p{\mathbf{u},\;\; \p{m_0(H_{\gprior}), \dotsc, m_{M}(H_{\gprior})}} < \varepsilon.$$
Observing that $T_k\p{m_0(H_{\gprior}), \dotsc, m_{M}(H_{\gprior})} = F_G(k)$ for $k=0,\dotsc,M$, we conclude that:
$$
\begin{aligned}
 &d\p{T^{-1}(F_u(0), \dotsc,F_u(M)),\;\, \p{m_0(H_{\gprior}), \dotsc, m_{M}(H_{\gprior})}} \\
=\;\; &\max_{k=0}^{M}\abs{F_u(k) - F_G(k)} \;=\;\sup_{t} \abs{F_u^M(t) - F_G^M(t)}.
\end{aligned}
$$
By construction of $F_u$ in Step 2 and the Glivenko-Cantelli theorem, we see that the RHS above converges almost surely to $0$. In turn, this means that 
$$\PP{T^{-1}(F_u(0), \dotsc,F_u(M)) \in \mathcal{U}} \to 1.$$ 
On the event inside the probability above, we can find a measure $H_u$ supported on $[a,b]$, such that:
$$T\p{m_0(H_u), \dotsc, m_{M}(H_u)} = (F_u(0), \dotsc,F_u(M)).$$
Finally, let $G_u$ be the measure that is absolutely continuous w.r.t. $H_u$ with Radon-Nikodym derivative $dG_u(\mu)/dH_u(\mu) = \exp(\mu)$. $G_u$ is a probability measure, since:
$$\int dG_u(\mu) = \int \exp(\mu) dH_u(\mu) = 1,$$
by definition of the moment space $\mathcal{M}$. Furthermore, it holds that $F_u = F_{G_u}$ on $\cb{0,1\dotsc,M,\triangleright}$. We conclude after arguing analogously for $F_{\ell}$.
\end{proof}

\subsubsection{Proof for Proposition~\ref{prop:asymptotics_poisson_marginal_prob}: \Amari.}
\label{subsubsec:proof_amari_marginal_prob}

The modulus problem~\eqref{eq:continuous_modulus_problem} at $\delta>0$ takes the form:
\begin{equation}
\label{eq:modulus_problem_poisson_marginal_prob}
\underset{G_1, G_{-1} \in \gcal_n}{\text{maximize}}\;\; f_{\gprior_1}(\zo) - f_{\gprior_{-1}}(\zo)  \;\text{ s.t. }\;  \sum_{\zo' \in \cb{0,\dotsc,M}\cup\cb{\triangleright}} \frac{\p{f_{\gprior_1}(\zo') - f_{\gprior_{-1}}(\zo')}^2}{\barf(\zo')}  \leq  \frac{\delta^2}{n}.
\end{equation}
We first solve a relaxation of the above optimization problem (we will show that the relaxation is tight later), in which we introduce variables $h_j$ that formally correspond to $f_{\gprior_1}(j) -  f_{\gprior_{-1}}(j)$:
\begin{equation}
\label{eq:poisson_density_modulus_relax}
\begin{aligned}
&\underset{h_0,\dotsc,h_M, h_{\triangleright}}{\text{maximize}} 
& & h_{\zo}\\
& \text{subject to}
& & \sum_{j} \frac{h_{j}^2}{\barf(j)} \leq  \frac{\delta^2}{n} \\
&&& \sum_{j} h_{j} = 0.
\end{aligned} 
\end{equation}
This is a convex optimization problem.  Consider the Lagrangian with dual variables $\zeta \in \RR$ and $\xi < 0$ (the dual objective is unbounded for $\xi=0$):
$$\mathcal{L}(h; \xi, \zeta) = h_{\zo} \, + \, \xi \p{\sum_{j} \frac{h_{j}^2}{\barf(j)} -  \frac{\delta^2}{n}} \, + \, \zeta \sum_{j} h_{j}.$$
The derivative with respect to $h_z$ is equal to,
$$ \frac{\partial\mathcal{L}(h; \xi, \zeta)}{\partial h_z} = 1 + 2 \xi \frac{h_z}{\barf(z)} + \zeta,$$
and with respect to $h_j, j \neq z$:
$$ \frac{\partial\mathcal{L}(h; \xi, \zeta)}{\partial h_j} =  2 \xi \frac{h_j}{\barf(j)} + \zeta,$$
By the first order optimality conditions, we conclude that $h_{j}/\barf(j) = h_{j'}/\barf(j')$ for all $j, j' \neq \zo$, and so there exists a constant $t$ such that $h_{j} = t \barf(j)$ for all $j \neq z$. Furthermore, 
$$ h_{\zo} = - \sum_{j \neq \zo} h_j = - t \sum_{j \neq \zo} \barf(j) = -t (1 - \barf(\zo)).$$
Thus $t = -h_{\zo}/(1 - \barf(\zo))$ and we only need to optimize in~\eqref{eq:poisson_density_modulus_relax} with respect to a single parameter $h_{\zo}$, which we seek to maximize. The pseudo-$\chi^2$ constraint takes the form:
$$
\begin{aligned}
\sum_{j} \frac{h_{j}^2}{\barf(j)} &= \frac{h_z^2}{\barf(z)} + \sum_{j \neq z} t^2\barf(j) \\
&=\frac{h_z^2}{\barf(z)} + t^2(1-\barf(z)) \\
&= \frac{h_z^2}{\barf(z)}  + \frac{h_{z}^2}{1-\barf(z)} \\
&= \frac{h_z^2}{\barf(z)(1-\barf(z))}.
\end{aligned}
$$
We want the above to be equal to $\delta^2/n$, and so to maximize $h_z$ subject to the above constraint, we get
\begin{equation}
\label{eq:hz_optimal}
h_{\zo} = \frac{\delta}{\sqrt{n}} \sqrt{\barf(\zo)(1-\barf(\zo)) },
\end{equation}
as the optimal value of the relaxed modulus problem~\eqref{eq:poisson_density_modulus_relax}. To argue that the relaxation is tight (with probability tending to $1$), we need to exhibit priors $\gprior_1, \gprior_{-1} \in \gcal_n$ such that $f_{\gprior_1}(j) -  f_{\gprior_{-1}}(j) = h_j$, where $(h_j)_j$ is the maximizer of~\eqref{eq:poisson_density_modulus_relax} derived above. To do so, we proceed as follows. Let $\hat{f}_n^{\text{DKW}}(\zo)$ be the frequency of $\zo$ in the sample used to construct the pilot DKW-$F$-localization. Then define the pmf $f_u$ on $\cb{0,\dotsc,M, \triangleright}$ as:
$$f_u(\zo) =  \hat{f}_n^{\text{DKW}}(\zo) + \frac{h_{\zo}}{2}, \;\;\; f_u(\zo') = \hat{f}_n^{\text{DKW}}(\zo') - \frac{h_{\zo'}\barf(\zo')}{2(1 - \barf(\zo))} \text{ for } \zo' \in \cb{0,\dotsc,M, \triangleright}\setminus\cb{\zo}.$$
We make the following observation. First, $h_{\zo}$ in~\eqref{eq:hz_optimal} is of order $O_{\mathbb P}(1/\sqrt{n})$, which is of smaller order than the width of the DKW band $O_{\mathbb P}(\sqrt{\log(2/\alpha_n)}/\sqrt{k})$. Thus, arguing as in Steps 2 and 3 of Supplement~\ref{subsubsec:poisson_pmf_dkw_proof}, we can prove that the following event has probability tending to $1$: there exists a prior $G_1 = G_{1,n} \in \gcal$ such that $F_{G_1} \in \mathcal{F}_n$ and such that $f_{G_1}(\zo') = f_u(\zo')$ for all $\zo' \in \cb{0,\dotsc,M, \triangleright}$. The same argument also applies to $f_{\ell}$ defined as $f_{\ell}(\zo') = f_u(\zo') - h_{\zo'}$ for $\zo' \in \cb{0,\dotsc,M, \triangleright}$. We thus conclude that the optimal value of the modulus problem~\eqref{eq:modulus_problem_poisson_marginal_prob} is equal to (making the dependence of $\barf(\zo)$ on $n$ explicit):
$$ \omega_n(\delta) \, = \, \frac{\delta}{\sqrt{n}} \sqrt{\barf_n(z)(1-\barf_n(z)) }.$$
$\omega_n(\delta)$ is differentiable in $\delta$ with derivative
$$\omega_n'(\delta) = \frac{1}{\sqrt{n}}\cdot \sqrt{\bar{f}_n(\zo)(1-\bar{f}_n(\zo))} = \omega_n(\delta) /\delta.$$
Let us plug the above into~\eqref{eq:optimal_Q} to find the optimal $Q(\cdot)$. First, we consider the part of $Q$ that is a function of $z$.  For $j \neq \zo$ we get 
\begin{equation*}
\begin{aligned}
\frac{n\cdot \omega_n'(\delta)}{\delta}\cdot \frac{f_{\gprior_1}(j) - f_{\gprior_{-1}}(j)}{\barf_n(j)} &= -\frac{n\cdot \omega_n'(\delta)}{\delta}\cdot \frac{h_z }{1 - \barf_n(z)} \\
&= -\frac{n\cdot \omega_n'(\delta)}{\delta}\cdot \frac{\omega_n(\delta)}{1 - \barf_n(z)} \\
&= - \frac{n}{\delta^2} \frac{\omega_n(\delta)^2}{1 - \barf_n(z)}\\
&=  - \barf_n(z).
\end{aligned}
\end{equation*}
For $\zo$ we get:
\begin{equation*}
\frac{n\cdot \omega_n'(\delta)}{\delta}\cdot \frac{f_{\gprior_1}(z) - f_{\gprior_{-1}}(z)}{\barf_n(z)} = \frac{n\cdot \omega_n'(\delta)}{\delta} \cdot \frac{\omega_n(\delta)}{\barf_n(z)} = \frac{n \cdot \omega_n(\delta)^2}{\delta^2 \barf_n(z)} = \frac{\barf_n(1)(1-\barf_n(z))}{\barf_n(z)} = 1-\barf_n(z).
\end{equation*}
It remains to evaluate the additive component in~\eqref{eq:optimal_Q} that does not depend on $\zo$. Let $\gprior_0 = (\gprior_{1}+\gprior_{-1})/2$. We note that by construction $L(\gprior_0) = f_{\gprior_0}(\zo)$. Hence the constant term is equal to:
\begin{equation*}
\begin{aligned}
& -\frac{n\cdot \omega_n'(\delta)}{\delta}\cdot \sum_{j} \frac{\p{f_{\gprior_1}(j) - f_{\gprior_{-1}}(j)}f_{\gprior_{0}}(j)}{\barf_n(j)} \, + \, L(\gprior_0) \\
= \;\; &  -\frac{n\cdot \omega_n(\delta)^2}{\delta^2}\cdot \p{ \frac{ f_{\gprior_0}(\zo)}{\barf_n(\zo)} - \frac{1- f_{\gprior_0}(\zo)}{1-\barf_n(\zo)}}   \, + \, L(\gprior_0) \\
= \;\; & \barf_n(\zo)(1- f_{\gprior_0}(\zo)) - (1-\barf_n(\zo)) f_{\gprior_0}(\zo)  \, + \, L(\gprior_0) \\
= \;\; & \barf_n(\zo).
\end{aligned}
\end{equation*}
We conclude that for $j \neq \zo$, $Q(j) =  - \barf_n(\zo) + \barf_n(\zo) =0$ and for $\zo$, $Q(\zo) = 1-\barf_n(\zo) +  \barf_n(\zo) =1$. Thus $Q(\cdot) = \ind(\cdot = z)$. Hence $\hL = \hat{f}_n(\zo) = \#\cb{\Zo_i = \zo}/n$. The worst case absolute bias of $\hL$ is given by:
$$ \hB = \frac{1}{2}(\omega_n(\delta)-\delta \omega_n'(\delta)) = 0.$$
The confidence intervals of \Amari~in~\eqref{eq:im_iw_ci} ($\hV$ as in~\eqref{eq:sample_var_est}) hence take the form:
$$\hL \pm q_{1-\alpha/2} \cdot \sqrt{\hV},$$
with $q_{1-\alpha/2}$ the $1-\alpha/2$ quantile of the standard Normal distribution. Since $n \hV \to f_{\gprior}(\zo)(1-f_{\gprior}(\zo))$, we conclude.

\subsubsection{Proof for Proposition~\ref{prob:asymptotics_poisson_robbins}: DKW-$F$-Localization}
\label{subsubsec:floc_poisson_robbins_proof}
\begin{proof}
The proof will be structured very similarly to the proof in Supplement~\ref{subsubsec:poisson_pmf_dkw_proof} that concerned inference for $f_{\gprior}(\zo)$. In particular, we follow the same three steps as in that proof.	

\noindent{\textbf{Step 1:}} Take any distribution $F \in \ff^{\text{DKW}}_n(\alpha)$. Write $\theta(\zo) = (\zo+1)f(\zo+1)/f(\zo)$, we seek to provide lower and upper bounds on it. Note that when $F=F_{\gprior}$, then by~\eqref{eq:robbins_poisson}, we have that \smash{$\theta(\zo) = \EE[\gprior]{\mu \cond \Zo=\zo}$}.

\begin{equation}
\label{eq:naive_dkw_construction_postmean}
\begin{aligned}
(\zo+1)\frac{f(\zo+1)}{f(\zo)} &=  (\zo+1)\frac{F(\zo+1) - F(\zo)}{F(\zo) - F(\zo-1)} \\
 &= (\zo+1)\frac{\hat{f}_n(\zo+1) + \p{F(\zo+1) - \widehat{F}_n(\zo+1)} -  \p{F(\zo) - \widehat{F}_n(\zo)}}{\hat{f}_n(\zo) + \p{F(\zo) - \widehat{F}_n(\zo)} -  \p{F(\zo-1) - \widehat{F}_n(\zo-1)}} \\
 &\leq  (\zo+1)\frac{\hat{f}_n(\zo+1) + 2c_n}{\hat{f}_n(\zo) - 2c_n}.
\end{aligned}
\end{equation}
In the last inequality we used the definition of the DKW band. Similarly, we may conclude that
\begin{equation}
\label{eq:naive_dkw_construction_postmean_lowerbound}
(\zo+1)\frac{f(\zo+1)}{f(\zo)}  \geq  (\zo+1)\frac{\hat{f}_n(\zo+1) - 2c_n}{\hat{f}_n(\zo) + 2c_n}.
\end{equation}
Combining these two results, we see that the DKW-$F$-Localization band for $L(\gprior) = f_{\gprior}(\zo)$ must satisfy,
$$ \ii \subset \ii' := \sqb{(\zo+1)\frac{\hat{f}_n(\zo+1) - 2c_n}{\hat{f}_n(\zo) + 2c_n},\;\; (\zo+1)\frac{\hat{f}_n(\zo+1) + 2c_n}{\hat{f}_n(\zo) - 2c_n}}.$$
We will prove that the above inclusion is in fact an equality below (with high probability and for large $n$). For now, we verify that $\ii'$ has the claimed asymptotic length.
\begin{equation}
\label{eq:dkw_postmean_extended_length}
\begin{aligned}
\sqrt{n}\abs{\ii'} &= \sqrt{n}(\zo+1) \frac{4c_n \p{\hat{f}_n(\zo+1) + \hat{f}_n(\zo)}}{\p{\hat{f}_n(\zo)-2c_n}\p{\hat{f}_n(\zo)+2c_n}} \\
&= 2(\zo+1)\sqrt{2\log(2/\alpha)}\frac{f_{\gprior}(\zo) + f_{\gprior}(\zo+1)}{f_{\gprior}(\zo)^2} + o_{\mathbb P_{\gprior}}(1).
\end{aligned}
\end{equation}

\noindent{\textbf{Steps 2 and 3:}} We define $F_u$ as follows:
$$
\begin{aligned}
&F_{u}(\zo') =  \widehat{F}_n(\zo') \text{ for } \zo' \notin \cb{\zo-1,\zo, \zo+1},\\
&F_{u}(\zo-1) =  \widehat{F}_n(\zo-1) + c_n,\; F_{u}(\zo)=  \widehat{F}_n(\zo) - c_n,\; F_{u}(\zo+1)=  \widehat{F}_n(\zo+1) + c_n.
\end{aligned}
$$
Plugging $F_u$ into~\eqref{eq:naive_dkw_construction_postmean}, we see that the last inequality is an equality. Furthermore, arguing as in Steps 2 and 3 of Supplement~\ref{subsubsec:poisson_pmf_dkw_proof}, we can prove that the following event has probability tending to $1$: There exists a prior $G = G_{n} \in \gcal$ such that $F_{G} \in \mathcal{F}_n$ and such that $F_{G}(\zo') = F_u(\zo')$ for all $\zo' \in \cb{0,\dotsc,M, \triangleright}$. We may define $F_{\ell}$ that makes~\eqref{eq:naive_dkw_construction_postmean_lowerbound} tight analogously. Hence, on an event that has probability tending to $1$, it holds that $\ii = \ii'$. Thus the asymptotic length of $\ii$ is equal to the asymptotic length of $\ii'$ computed in~\eqref{eq:dkw_postmean_extended_length}.
\end{proof}

\subsubsection{Proof for Proposition~\ref{prob:asymptotics_poisson_robbins}: \Amari.}

\begin{proof}

In Supplement~\ref{subsubsec:proof_amari_marginal_prob} we solved the modulus problem in the Poisson problem, when $L(\gprior) = f_{\gprior}(\zo)$, and proved that the optimal $Q(\cdot)$ in~\eqref{eq:optimal_Q} takes the form $Q(\cdot) = \ind(\cdot = \zo)$ on an event with asymptotic probability $1$. Here we will start by proving a generalization of the above result.

Concretely, we will fix $a = (a_0, \dotsc, a_M, a_{\triangleright})$ and we will consider the linear functional $L(\gprior) = \sum_{\zo' \in \cb{0,\dotsc,M}\cup \cb{\triangleright}} a_{\zo'} f_{\gprior}(\zo')$. For notational convenience (and with some abuse of notation), we identity $f_{\gprior}$ with the vector $(f_{\gprior}(0), \dotsc, f_{\gprior}(M), f_{\gprior}(\triangleright))$. We also define the matrix $\bar{D} =   (\barf_{\gprior}(0), \dotsc, \barf_{\gprior}(M), \barf_{\gprior}(\triangleright))$ and write $h=(h_0,\dotsc,h_M, h_{\triangleright})$. The relaxed modulus problem  (compare to~\eqref{eq:poisson_density_modulus_relax}) takes the form:
\begin{equation}
\label{eq:poisson_linear_modulus_relax}
\begin{aligned}
&\underset{h}{\text{maximize}} 
& & a^\top h\\
& \text{subject to}
& & \sum_{j} \frac{h_{j}^2}{\barf(j)} \leq  \frac{\delta^2}{n} \\
&&& \sum_{j} h_{j} = 0.
\end{aligned} 
\end{equation}
Instead of the relaxed modulus problem, we consider the relaxed inverse modulus problem,\footnote{The proof of Theorem 3 in~\citet{cai2003note} uses a similar proof technique using the inverse modulus of continuity.} which is parameterized by $t>0$:
\begin{equation}
\label{eq:poisson_lincomb_inverse_modulus_relax}
\begin{aligned}
&\underset{h}{\text{minimize}} 
& & h^\top\bar{D}^{-1} h \\
& \text{subject to}
& & a^\top h = t \\
&&& \mathbf{1}^\top h = 0.
\end{aligned} 
\end{equation}
\eqref{eq:poisson_lincomb_inverse_modulus_relax} will enable us to also solve \eqref{eq:poisson_linear_modulus_relax} and then to compute $\omega_n(\delta)$. \eqref{eq:poisson_lincomb_inverse_modulus_relax} is also a convex problem, so we introduce the Lagrangian (with dual variables $\xi, \zeta \in \RR$)
$$\mathcal{L}(h; \zeta, \xi) = h^\top\bar{D}^{-1} h \, + \, 2\xi (a^\top h - t) \, + \, 2\zeta \mathbf{1}^\top h.$$
We multiply the dual variables by $2$ only for convenience. By the first order optimality conditions, we see that:
$$ \bar{D}^{-1}h = - \xi a - \zeta \mathbf{1} \, \Longrightarrow \, h = -\bar{D}\p{\xi a + \zeta \mathbf{1}}.$$
$\xi$ and $\zeta$ are determined by a system of two linear equations. Namely from $a^\top h=t, \mathbf{1}^\top h = 0$.
$$
\begin{aligned}
&\xi a^\top \bar{D} a \, &&+ \,\, \zeta  a^\top \bar{D} \mathbf{1} &&= \;\, -t \\
&\xi a^\top \bar{D} \mathbf{1} \, &&+ \,\, \zeta &&= 0. \;\, \\
\end{aligned}
$$
In deriving the second of the above inequalities, we used the fact that $\mathbf{1}^\top \bar{D} \mathbf{1} = 1$. It follows that:
$$  \xi = t \bigg / \cb{ \p{a^\top \bar{D} \mathbf{1}}^2 - a^\top \bar{D} a},\;\; \zeta = - \xi a^\top \bar{D} \mathbf{1}.$$
The objective value of~\eqref{eq:poisson_lincomb_inverse_modulus_relax} is then equal to:

$$
\begin{aligned}
h^\top\bar{D}^{-1}h &= \xi^2 a^\top\bar{D} a + \zeta^2 + 2\xi \zeta a \bar{D}\mathbf{1}\\
&= \xi^2\cdot \cb{a^\top\bar{D} a - \p{a^\top \bar{D} \mathbf{1}}^2}\\
&= t^2 \bigg / \cb{a^\top\bar{D} a - \p{a^\top \bar{D} \mathbf{1}}^2}.
\end{aligned}
$$
We have solved the relaxed inverse modulus problem. This yields the solution to the relaxed modulus problem~\eqref{eq:poisson_linear_modulus_relax} by choosing $t$ so that $h^\top\bar{D}^{-1}h = \delta^2/n$, and so, the optimal value of~\eqref{eq:poisson_linear_modulus_relax} is equal to:
$$ t^2 = \frac{\delta^2}{n}\cb{a^\top\bar{D} a - \p{a^\top \bar{D} \mathbf{1}}^2}.$$
Arguing as in Supplement~\ref{subsubsec:proof_amari_marginal_prob}, we find that the relaxed modulus problem is tight for the modulus problem (with probability tending to $1$) and on the latter event:
$$ \omega_n(\delta) = \frac{\delta^2}{n}\cb{a^\top\bar{D} a - \p{a^\top \bar{D} \mathbf{1}}^2}.$$
Consequently, $\omega_n$ is differentiable at $\delta>0$ with $\omega_n'(\delta) = \omega_n(\delta)/\delta$. We can continue as in the proof in Section~\ref{subsubsec:proof_amari_marginal_prob} by plugging the above into~\eqref{eq:optimal_Q}. The constant additive part of $Q(\cdot)$ is equal to $a^\top \bar{D} \mathbf{1}$. Hence, identifying $Q$ with the vector $(Q(0), \dotsc, Q(M), Q(\triangleright))$, we find that: 
$$
\begin{aligned}
Q &= \frac{n\cdot \omega_n'(\delta)}{\delta} \cdot \bar{D}^{-1} h \, + \, \p{a^\top \bar{D} \mathbf{1}}\mathbf{1} \\
&= -\frac{n \omega_n(\delta)}{\delta^2}   \p{\xi a +\zeta \mathbf{1}}  \, + \, \p{a^\top \bar{D} \mathbf{1}}\mathbf{1} \\ 
&= -\frac{n \omega_n(\delta)}{\delta^2}  \xi \cb{a - \p{a^\top \bar{D} \mathbf{1}} \mathbf{1}}  \, + \, \p{a^\top \bar{D} \mathbf{1}}\mathbf{1} \\
&= a
\end{aligned}
$$
In the last step, we used the fact that $-\frac{n \omega_n(\delta)}{\delta^2}\xi=1$, since:
$$
\begin{aligned}
-\frac{n \omega_n(\delta)}{\delta^2}\xi &= -\frac{n \omega_n(\delta)}{\delta^2} \omega_n(\delta)\bigg / \cb{ \p{a^\top \bar{D} \mathbf{1}}^2 - a^\top \bar{D} a}\\
&= -\frac{n}{\delta^2} \cdot \frac{\delta^2}{n}\cb{a^\top\bar{D} a - \p{a^\top \bar{D} \mathbf{1}}^2}\bigg / \cb{ \p{a^\top \bar{D} \mathbf{1}}^2 - a^\top \bar{D} a} =1.
\end{aligned}
$$
We note that the resulting estimator for $L(\gprior)$ is unbiased, i.e., the worst case bias in this case is equal to $0$.

We are ready to return to the study of Algorithm~\ref{alg:amari_ci}. Let $[c^{\ell}, c^u]$ be the pilot $F$-localization intervals for $\theta_{\gprior}(\zo)$. By construction $\PP[\gprior]{\theta_{\gprior}(\zo) \in [c^{\ell}, c^u]} \to 1$, and furthermore, by the proof in Supplement~\ref{subsubsec:floc_poisson_robbins_proof}, we also have that $c^{\ell} = \theta_{\gprior}(\zo) + o_{\mathbb P_{\gprior}}(1)$ and  $c^u = \theta_{\gprior}(\zo) + o_{\mathbb P_{\gprior}}(1)$. By the preceding argument, we get for $\zo' \in \cb{0,\dotsc,M,\triangleright}$:
$$Q^{\ell}(\zo')  = (\zo+1)\ind(\zo' = \zo+1) - c^{\ell} \ind(\zo' = \zo+1),\; Q^{u}(\cdot)  = (\zo+1)\ind(\zo' = \zo+1) - c^{u} \ind(\zo' = \zo+1).$$
In particular $\hL^{\ell} = (\zo+1)\hat{f}_n(\zo+1)-c^{\ell}\hat{f}_n(\zo)$, $\hL^{u} = (\zo+1)\hat{f}_n(\zo+1)-c^{u}\hat{f}_n(\zo)$, and for any $c \in [c^{\ell}, c^u]$, $\hL^c = (\zo+1)\hat{f}_n(\zo+1)-c\hat{f}_n(\zo)$. Next note that $\hB^c = 0$ and since $t_\alpha(0, V) = q_{1-\alpha/2}\sqrt{V}$ in~\eqref{eq:im_iw_ci}, to determine the \Amari~confidence interval for $\theta_{\gprior}(\zo)$, we need to determine all $c \in [c^{\ell},\,c^u]$ such that:
$$0 \in \widetilde{\ii}_{\alpha}(\zo; c) =  (\zo+1)\hat{f}_n(\zo+1)-c\hat{f}_n(\zo) \pm q_{1-\alpha/2} \sqrt{\hV^c}.$$
To do this it will be furthermore convenient to express $\hV^c$ in Algorithm~\ref{alg:amari_ci} in a slightly different form, namely
$$
\begin{aligned}
&\hV^c = (\zo+1)^2 \tilde{V}_2 + c^2\tilde{V}_1 - 2c(\zo+1) \tilde{V}_{12}, &&\tilde{V}_1 = \frac{1}{n-1} \hat{f}_n(\zo)(1-\hat{f}_n(\zo)),\\
&\tilde{V}_2 = \frac{1}{n-1} \hat{f}_n(\zo+1)(1-\hat{f}_n(\zo+1)),  &&\tilde{V}_{12} = -\frac{1}{n-1}\hat{f}_n(\zo+1) \hat{f}_n(\zo).
\end{aligned}
$$
Having rewritten $\hV^c$ as above and shortening $q=q_{1-\alpha/2}^2$, we see that:
$$ 0 \in \widetilde{\ii}_{\alpha}(\zo; c) \; \Longleftrightarrow \; \p{  (\zo+1) \hat{f}_n(\zo+1) - c \hat{f}_n(\zo)}^2 \leq q^2 \sqb{ (\zo+1)^2 \tilde{V}_2 + c^2\tilde{V}_1 - 2c(\zo+1) \tilde{V}_{12}}.$$
The latter condition is a quadratic inequality in $c$, that we may rearrange as:
$$
\p{ \hat{f}_n(\zo)^2 - q^2 \tilde{V}_1}c^2 \, + \, 2(\zo+1)\p{q^2 \tilde{V}_{12} - \hat{f}_n(\zo+1)\hat{f}_n(\zo)}c \, + \,  (\zo+1)^2 \p{\hat{f}_n(\zo+1)^2 - q^2 \tilde{V}_2} \leq 0.
$$
We make the observation that $c = (\zo+1)\hat{f}_n(\zo+1)/\hat{f}_n(\zo)$ is an interior point of the above inequality. Furthermore, on the event \smash{$\{\hat{f}_n(\zo)^2 > q^2 \tilde{V}_1\}$} the above is a convex quadratic, and so the set of $c$ satisfying the inequality must be a closed interval. Since \smash{$\PP[\gprior]{\hat{f}_n(\zo)^2 > q^2 \tilde{V}_1} \to 1$} as $n \to \infty$, we restrict attention to that event. On that event, the distance between the two roots of the quadratic is equal to:
$$ \frac{2(\zo+1)q}{ \hat{f}_n(\zo)^2 - q^2 \tilde{V}_1 } \sqrt{ q^2 \p{\tilde{V}_{12}^2 - \tilde{V}_1 \tilde{V}_2} +  \p{\tilde{V}_1 \hat{f}_n(\zo+1)^2 + \tilde{V}_2 \hat{f}_n(\zo)^2 - 2 \hat{f}_n(\zo)\hat{f}_n(\zo+1) \tilde{V}_{12}}}.$$
Noting that $n \tilde{V}_1 = f_{\gprior}(\zo)(1-f_{\gprior}(\zo)) + o_{\mathbb P_{\gprior}}(1)$,\; $n \tilde{V}_2 = f_{\gprior}(\zo+1)(1-f_{\gprior}(\zo+1)) + o_{\mathbb P_{\gprior}}(1)$ and  $n \tilde{V}_{12} = - f_{\gprior}(\zo) f_{\gprior}(\zo+1) + o_{\mathbb P_{\gprior}}(1)$, we conclude that the above is asymptotically equal to:
$$
\begin{aligned}
&\frac{2(\zo+1)q}{ \sqrt{n}f(\zo)^2}\sqrt{ f(\zo)(1-f(\zo))f(\zo+1)^2 + f(\zo+1)(1-f(\zo+1))f(\zo)^2  + 2 f(\zo)^2 f(\zo+1)^2}(1+o_{\mathbb P_{\gprior}}(1))\\
=\; &\frac{2(\zo+1)q}{ \sqrt{n}f(\zo)^2}\sqrt{ f(\zo)f(\zo+1)\p{ f(\zo)+f(\zo+1)}}(1+o_{\mathbb P_{\gprior}}(1)).
\end{aligned}
$$
This is the confidence interval length claimed in the statement of the Proposition.
\end{proof}

\subsection{Bernoulli model (Section~\ref{subsec:bernoulli_partial})}

In this section we consider model~\eqref{eq:EB} with $\Zo_i \cond \mu_i \; \sim \; \text{Bernoulli}(\mu_i)$, i.e., the Binomial model with a single ($N=1$) trial.  Furthermore, we do not impose additional structure on $\gcal$, i.e., we assume that $ G \in \gcal = \pp([0,1])$. Under the above model, $\Zo_i$ is supported on $\cb{0,1}$ and we can take $\lambda = \delta_0 + \delta_1$ to be the counting measure on $\cb{0,1}$ and $p(\zo \cond \mu) = \mu^{\zo}(1-\mu)^{1-\zo}$. The marginal distribution $F_G$ is fully determined by $f_G(1)$, since $f_G(0) = 1 - f_G(1)$. In this case, the $F$-localizations we consider take the following simplified form. First, the DKW $F$-localization~\eqref{eq:DKW} is equal to:
\begin{equation}
\label{eq:dkw_bernoulli}
\ff^{\text{DKW}}_n(\alpha) =  \cb{F \in \pp(\cb{0,1}) \text{ with pmf }f: \, \abs{f(1) - \hat{f}_n(1)} \leq   \sqrt{\log\p{2/\alpha}\big/(2n)}}.
\end{equation}
Second, for the $\chi^2$-$F$-localization~\eqref{eq:floc_chisq}, write $\tau^2 = \chi^2_{1,1-\alpha}$, then:
\begin{equation}
\label{eq:chi_squared_bernoulli}
\begin{aligned}
\ff_n^{\chi^2}(\alpha) = \bigg\{&F \in \pp(\cb{0,1}) \text{ with pmf }f: \\
&\abs{f(1) - \frac{\hat{f}_n(1) \, + \, \tau^2/(2n)}{1 \, + \,\tau^2 /n} } \leq   \frac{\sqrt{\tau^2/n}}{1 \, + \,\tau^2/n} \cdot \sqrt{ \hat{f}_n(1)(1-\hat{f}_n(1)) + \tau^2/(4n)} \bigg\}.
\end{aligned}
\end{equation}
An important observation that we will use throughout the following proofs, is that any distribution $F \in \pp(\cb{0,1})$ can be represented as $F_{\tilde{\gprior}}$ for some $\tilde{\gprior} \in \pp([0,1])$ in model~\eqref{eq:EB} with the Bernoulli likelihood.

\subsubsection{Proof of Proposition~\ref{prop:bernoulli_partial}: Second moment}
\label{subsubsec:bernoulli_second_moment_proof}
\begin{proof}
We study the second moment of the prior. As already mentioned in the main text, this is an example of a linear functional that is partially identified. We discuss the partial identification aspect first. Suppose we know the marginal distribution of $\Zo_i$ exactly, that is, we know $f_{\gprior}(1)$. Notice that $\EE[\gprior]{\mu} = \int \mu \, d\gprior(\mu) = f_{\gprior}(1)$. Then, the partial identification interval for $L(\gprior)$ is the following:
\begin{equation}
\label{eq:bernoulli_marginal_density_partial_id}
 L(\gprior) \in \sqb{ \p{ \int \mu \, d\gprior(\mu)}^2, \;  \int \mu \, d\gprior(\mu)} \; = \; \sqb{ f_{\gprior}(1)^2,\; f_{\gprior}(1)}.
\end{equation}
For example,when $f_{\gprior}(1) = 1/2$, then $L(\gprior)  \in [1/4, 1/2]$. Why is the above the partial identification interval? First note that $\int \mu^2 \, d\gprior(\mu) \leq \int \mu \, d\gprior(\mu)$ holds since $\gprior$ is supported on $[0,1]$ and  $\int \mu^2 \, d\gprior(\mu) \geq (\int \mu \, d\gprior(\mu))^2$ holds by Jensen's inequality. Furthermore, there exist choices of $\gprior$ that make both inequalities tight. In particular, if $\gprior = \delta_{\bar{\mu}}$ for some $\bar{\mu}$ then $\int \mu \, d\gprior(\mu) = \bar{\mu}, \; \int \mu^2 \, d\gprior(\mu) = \bar{\mu}^2$, while for $\gprior = (1-\bar{\mu})\delta_0 + \bar{\mu}\delta_1$, it holds that $\int \mu \, d\gprior(\mu) = \int \mu^2 \, d\gprior(\mu) = \bar{\mu}.$

We seek to determine the (asymptotic) length of the different confidence intervals we consider in this work. We start with the $F$-localization approaches.\\

\noindent \textbf{$F$-localization:} By the above discussion on partial identification, we find that the $F$-localization intervals take the form $[ \inf_{G} \cb{f_{\gprior}(1)^2},\, \sup_{G} \cb{f_{\gprior}(1)} ]$ where the extrema are taken over all $\gprior$ such that $F_{\gprior} \in \ff_n(\alpha)$. We further restrict attention to the event wherein $\inf_{G} \cb{f_{\gprior}(1)} \in (0,1)$ and $\sup_{G} \cb{f_{\gprior}(1)} \in (0,1)$. Since $f_{\gprior}(1) \in (0,1)$ under the assumptions of Proposition~\ref{prop:applications}, this event will occur with asymptotic probability $1$ for both $F$-localizations. For the DKW-$F$-localization, in view of~\eqref{eq:dkw_bernoulli}, we then get the interval:
$$ \ii^{\text{DKW}} = \sqb{\p{ \hat{f}_n(1) \,-\,\sqrt{\log\p{2/\alpha}\big/(2n)}}^2,\; \hat{f}_n(1) \,+\,\sqrt{\log\p{2/\alpha}\big/(2n)}}.$$
Similarly, for the $\chi^2$-$F$-localization~\eqref{eq:chi_squared_bernoulli} we get the interval (with $\tau^2 = \chi^2_{1,1-\alpha}$):
$$
\begin{aligned}
\ii^{\chi^2} = \Bigg[&\p{  \frac{\hat{f}_n(1) \, + \, \tau^2/(2n)}{1 \, + \,\tau^2 /n} \,-\,\frac{\sqrt{\tau^2/n}}{1 \, + \,\tau^2/n} \cdot \sqrt{ \hat{f}_n(1)(1-\hat{f}_n(1)) + \tau^2/(4n)}}^2,\\
&\frac{\hat{f}_n(1) \, + \, \tau^2/(2n)}{1 \, + \,\tau^2 /n} \,+\,\frac{\sqrt{\tau^2/n}}{1 \, + \,\tau^2/n} \cdot \sqrt{ \hat{f}_n(1)(1-\hat{f}_n(1)) + \tau^2/(4n)}\;\;\;\;\,\Bigg].
\end{aligned}
$$
Since $\hat{f}_n(1) = f_{\gprior}(1) + o_{\mathbb P}(1)$, as $n \to \infty$, it follows for both $\ii = \ii^{\text{DKW}}$ and $\ii = \ii^{\chi^2}$, that:
$$\abs{\ii} =  f_{\gprior}(1) -  f_{\gprior}(1)^2 + o_{\mathbb P}(1) = f_{\gprior}(1)(1 -  f_{\gprior}(1)) + o_{\mathbb P}(1),$$
as claimed.\\

\noindent \textbf{AMARI:} The modulus problem~\eqref{eq:continuous_modulus_problem} at $\delta>0$ takes the form:
\begin{equation}
\label{eq:modulus_problem_bernoulli_supplement}
\underset{G_1, G_{-1} \in \gcal_n}{\text{maximize}}\;\; L(\gprior_1) - L(\gprior_{-1}) \;\text{ s.t. }\;  \abs{f_{\gprior_1}(1) - f_{\gprior_{-1}}(1)}  \leq  \frac{\delta}{\sqrt{n}}\cdot \sqrt{\bar{f}_n(1)(1-\bar{f}_n(1))}.
\end{equation}
By~\eqref{eq:bernoulli_marginal_density_partial_id}, it may be simplified as:
\begin{equation}
\label{eq:modulus_problem_bernoulli_second_moment_simplified}
\underset{G_1, G_{-1} \in \gcal_n}{\text{maximize}}\;\; f_{\gprior_1}(1) - f_{\gprior_{-1}}(1)^2 \;\text{ s.t. }\;  \abs{f_{\gprior_1}(1) - f_{\gprior_{-1}}(1)}  \leq  \frac{\delta}{\sqrt{n}}\cdot \sqrt{\bar{f}_n(1)(1-\bar{f}_n(1))}.
\end{equation}
We write $p = f_{\gpriorR}(1)$ and $f_{\gpriorL}(1) = p + \varepsilon$, for a choice of $\varepsilon \geq 0$ that we will make below. Then we seek to find (feasible choices of $p$, $\varepsilon$) so that $(p+\varepsilon) - p^2$ is maximized. We seek to solve this problem for some $\delta = \delta_n \geq \delta^{\ell} > 0$. Throughout the rest of the proof we assume that $f_{\gprior}(1) \in [1/2,\, 1)$; the case $f_{\gprior}(1)  \in (0, 1/2)$ being analogous. We define:
$$\ubar{p}_n = \max\cb{1/2,\, \inf\cb{ f_{\tilde{\gprior}}(1) \mid \tilde{\gprior} \in \gcal_n}}.$$
We note that $\ubar{p}_n = f_{\gprior}(1) + o_{\mathbb P}(1)$. Since $p \mapsto (p+\varepsilon) - p^2$ is decreasing in $p$ for $p \geq 1/2$, it follows that~\eqref{eq:modulus_problem_bernoulli_second_moment_simplified} is optimized for the choice $f_{\gprior_1}(1) = p = \ubar{p}_n$. Furthermore, $\varepsilon \mapsto (p+\varepsilon) - p^2$ is increasing in $\varepsilon$, and so it is maximized for the largest admissible value of $\varepsilon$. By the constraint in~\eqref{eq:modulus_problem_bernoulli_second_moment_simplified}, we see that 
$$\varepsilon \leq (\delta/\sqrt{n})\cdot \sqrt{\bar{f}_n(1)(1-\bar{f}_n(1))}.$$
The above constraint may be replaced by an equality, since the RHS above is of order $O_{\mathbb P}(1/\sqrt{n})$ and so, the distribution $F$ with $f(1)=\ubar{p}_n + \varepsilon$ would be included in both $F$-localizations~\eqref{eq:dkw_bernoulli} and \eqref{eq:chi_squared_bernoulli} for $n$ large enough. We conclude that
$$ \omega_n(\delta) \, = \, \ubar{p}_n(1-\ubar{p}_n) \, + \,  \frac{\delta}{\sqrt{n}}\cdot \sqrt{\bar{f}_n(1)(1-\bar{f}_n(1))},$$
which is differentiable in $\delta$ with:
$$\omega_n'(\delta) = \frac{1}{\sqrt{n}}\cdot \sqrt{\bar{f}_n(1)(1-\bar{f}_n(1))}.$$
Let us plug the above into~\eqref{eq:optimal_Q} to find the optimal $Q(\cdot)$. First, we consider the part of $Q$ that is a function of $z$.  For $z=1$ we get 
\begin{equation*}
\frac{n\cdot \omega_n'(\delta)}{\delta}\cdot \frac{f_{\gprior_1}(1) - f_{\gprior_{-1}}(1)}{\barf_n(1)} = \frac{\barf_n(1)(1-\barf_n(1))}{\barf_n(1)} = 1-\barf_n(1).
\end{equation*}
Similarly, for $z=0$, we get $-\barf_n(1)$. It remains to evaluate the additive component in~\eqref{eq:optimal_Q} that does not depend on $\zo$. Let $\gprior_0 = (\gprior_{1}+\gprior_{-1})/2$, where $\gprior_{-1},\gprior_{1}$ are any priors that have marginal pmf at $\zo=1$ equal to $\ubar{p}_n$, respectively $\ubar{p}_n + \varepsilon$. Then
$$  L(\gprior_0)  = \frac{1}{2}\cb{(\ubar{p}_n+\varepsilon) + \ubar{p}_n^2} = \frac{\ubar{p}_n(1+\ubar{p}_n)}{2} + \frac{\delta}{2\sqrt{n}}\cdot \sqrt{\barf_n(1)(1-\barf_n(1))}.$$
$$ f_{\gprior_0}(1) = \ubar{p}_n + \frac{\varepsilon}{2} = \ubar{p}_n + \frac{\delta}{2\sqrt{n}}\cdot \sqrt{\barf_n(1)(1-\barf_n(1))}.$$
Using the above two results, we find that the constant term is equal to:
\begin{equation*}
\begin{aligned}
&-\frac{n\cdot \omega_n'(\delta)}{\delta}\cdot \sum_{\zo=0}^1 \frac{\p{f_{\gprior_1}(\zo) - f_{\gprior_{-1}}(\zo)}f_{\gprior_{0}}(\zo)}{\barf_n(\zo)} \, + \, L(\gprior_0)  \\ 
= \; &  \barf_n(1) - f_{\gprior_{0}}(1) +  L(\gprior_0) \\
= \; & \barf_n(1) -\ubar{p}_n +\frac{\ubar{p}_n(1+\ubar{p}_n)}{2} \\
=\; & \barf_n(1) - \frac{\ubar{p}_n(1-\ubar{p}_n)}{2}.
\end{aligned}
\end{equation*}
Hence we now have an explicit expression for $Q(\zo)$ in~\eqref{eq:optimal_Q} for $\zo \in \cb{0,1}$:
$$ Q(\zo) \; = \; \ind(\zo = 1) \,-\, \frac{\ubar{p}_n(1-\ubar{p}_n)}{2}.$$
This means that $\hL = \hat{f}_n(1) \,-\, \frac{\ubar{p}_n(1-\ubar{p}_n)}{2}$, where $\hat{f}_n(1) = \#\cb{\Zo_i=1}/n$. The worst case absolute bias of $\hL$ is given by:
$$ \hB = \frac{1}{2}(\omega_n(\delta)-\delta \omega_n'(\delta)) = \frac{\ubar{p}_n(1-\ubar{p}_n)}{2}.$$
With $\hV$ as in~\eqref{eq:sample_var_est}, we finally get the confidence interval~\eqref{eq:im_iw_ci}:
$$\ii_\alpha = \hL \pm t_\alpha(\hB, \hV) = \hat{f}_n(1) \,-\, \frac{\ubar{p}_n(1-\ubar{p}_n)}{2}  \, \pm \,t_\alpha(\hB, \hV).$$
Under the given asymptotics $\hV = o_{\mathbb P}(1)$, $\hB = f_{\gprior}(1)(1-f_{\gprior}(1))/2 +  o_{\mathbb P}(1)$ and so it follows that for $\alpha \in (0,1)$, $t_\alpha(\hB, \hV) = f_{\gprior}(1)(1-f_{\gprior}(1))/2 +  o_{\mathbb P}(1)$. We conclude that the left endpoint of the confidence interval converges in probability to:
$$f_{\gprior}(1) \,- \, f_{\gprior}(1)(1-f_{\gprior}(1))/2 \, -\, f_{\gprior}(1)(1-f_{\gprior}(1))/2  = f_{\gprior}(1)^2.$$ 
The right point of the \Amari~confidence interval converges in probability to:
$$f_{\gprior}(1) \,- \, f_{\gprior}(1)(1-f_{\gprior}(1))/2 \, +\, f_{\gprior}(1)(1-f_{\gprior}(1))/2  = f_{\gprior}(1).$$ 
Hence we conclude that the length of the\Amari~confidence intervals converges to the length of the partial identification interval.
\end{proof}

\subsubsection{Proof of Proposition~\ref{prop:bernoulli_partial}: Posterior mean}

\begin{proof}
We now turn to study,
$$\theta_{\gprior}(1) = \EE[\gprior]{\mu \cond \Zo=1} = \frac{ \int \mu^2\, d\gprior(\mu)}{f_{\gprior}(1)}.$$
For a fixed value of the denominator $f_{\gprior}(\mu)(1)$, we derived partial identification intervals for the numerator in~\eqref{eq:bernoulli_marginal_density_partial_id}. It directly follows that the partial identification intervals for $\theta_{\gprior}(1)$ are equal to:
$$ \theta_{\gprior}(1)  \in \sqb{ f_{\gprior}(1) ,\; 1}.$$

\noindent \textbf{$F$-localization:} The argument now is very similar to that for the second moment. With the DKW-$F$-localization, in view of~\eqref{eq:dkw_bernoulli}, we get the interval
$$ \ii^{\text{DKW}}(1) = \sqb{\hat{f}_n(1) \,-\,\sqrt{\log\p{2/\alpha}\big/(2n)},\;\, 1} \bigcap \; [0,\,1].$$
Similarly, for the $\chi^2$-$F$-localization~\eqref{eq:chi_squared_bernoulli} we get the interval (with $\tau^2 = \chi^2_{1,1-\alpha}$):
$$
\ii^{\chi^2}(1) = \sqb{ \frac{\hat{f}_n(1) \, + \, \tau^2/(2n)}{1 \, + \,\tau^2 /n} \,-\,\frac{\sqrt{\tau^2/n}}{1 \, + \,\tau^2/n} \cdot \sqrt{ \hat{f}_n(1)(1-\hat{f}_n(1)) + \tau^2/(4n)},\;\,1} \bigcap \; [0,\,1].
$$
Since $\hat{f}_n(1) = f_{\gprior}(1) + o_{\mathbb P}(1)$, as $n \to \infty$, it follows for both $\ii(1) = \ii^{\text{DKW}}(1)$ and $\ii(1) = \ii^{\chi^2}(1)$, that:
$$\abs{\ii(1)} =  1 - f_{\gprior}(1) + o_{\mathbb P}(1).$$
\\
\noindent \textbf{\Amari:} For fixed $c \in [0,1]$, we start by studying the modulus problem~\eqref{eq:modulus_problem_bernoulli_supplement} for the linear functional
$$L(\gprior) = \theta_{\gprior}^{\text{lin}}(\zo; c) = \int \mu^2 \,dG(\mu) \,-\, c \int \mu \, dG(\mu).$$
Following precisely the derivation in the proof for \Amari~in Supplement~\ref{subsubsec:bernoulli_second_moment_proof} and the notation used therein, we find that on an event with asymptotic probability $1$, it holds that the optimal $Q^c(\cdot)$~\eqref{eq:optimal_Q} for all $c\in [0,1]$ takes the form:
$$ Q^c(\zo) \; = \; (1-c)\ind(\zo = 1) \,-\, \frac{\ubar{p}_n(1-\ubar{p}_n)}{2},$$
with worst-case bias $\hB=\ubar{p}_n(1-\ubar{p}_n)/2$ and $\hV =  o_{\mathbb P}(1)$. Write $\widetilde{\ii}_{\alpha}(\zo; c)$ for the confidence interval for $\theta_{\gprior}^{\text{lin}}(\zo; c)$ and $\htheta_{-}^{\text{lin}}(\zo; c)$, resp.  $\htheta_{+}^{\text{lin}}(\zo; c)$ for its left and right endpoints. All $o_{\mathbb P}(1)$ terms above are uniform with respect to $c$, and so arguing again as in Supplement~\ref{subsubsec:bernoulli_second_moment_proof}, it follows that:
$$\sup_{c \in [0,1]} \abs{ \htheta_{-}^{\text{lin}}(\zo; c) \, -  \, f_{\gprior}(1)(f_{\gprior}(1) - c)} = o_{\mathbb P}(1),\;\; \sup_{c \in [0,1]} \abs{ \htheta_{+}^{\text{lin}}(\zo; c) \, -  \, f_{\gprior}(1)(1 - c)} = o_{\mathbb P}(1).$$
Recall that in Algorithm~\ref{alg:amari_ci} we seek to find all $c \in [c^{\ell},\, c^u]$ such that $0 \in \widetilde{\ii}_{\alpha}(\zo; c)$, where $[c^{\ell},\, c^u] \subset [0, 1]$ is the pilot interval. Fix $\zeta >0$ small, then by the above uniform convergence, we have that:
$$ \PP{ 0 \in \widetilde{\ii}_{\alpha}(\zo; c) \text{ for any } 0 \leq c \leq f_{\gprior}(1) - \zeta} \to 0 \text { as n } \to \infty,$$
and that:
$$ \PP{ 0 \in \widetilde{\ii}_{\alpha}(\zo; c) \text{ for all } 1 \geq c \geq f_{\gprior}(1) + \zeta} \to 1 \text { as n } \to \infty.$$
Since $\zeta >0$ was arbitrary, we thus we find that the left-most endpoint of $\ii_\alpha(\zo)$ converges in probability to $f_{\gprior}(1)$ and the right-most endpoint converges to $1$, i.e., the asymptotic confidence interval length is equal to $1-f_{\gprior}(1)$.

\end{proof}

\section{Computational aspects for $F$-localization}
\label{sec:floc_computation}

\subsection{Parametric convex programming for $F$-localization intervals}
\label{sec:worst_case_convex}

We explain how to compute $\htheta^+_\alpha(\zo)$ in~\eqref{eq:nbhood_worst_case} (the steps for $\htheta^-_\alpha(\zo)$ being analogous) when $\gcal$ and $\ff_n$ are convex, but not necessarily representable through linear constraints. Recall that $\theta_{\gprior}(\zo) = a_{\gprior}(\zo)/f_{\gprior}(\zo)$.  We first compute confidence intervals for $f_{\gprior}(\zo)$ using the same $F$-localization, i.e.,
$$\hf^-_\alpha(\zo) = \inf \cb{ f_{\gprior}(\zo) \mid \gprior \in \gcal\p{\ff_n(\alpha)}},\, \hf^+_\alpha(\zo) = \sup \cb{ f_{\gprior}(\zo) \mid \gprior \in \gcal\p{\ff_n(\alpha)}}.$$
The objective here is linear and the constraints are convex, and so the above is a convex programming problem. We then observe that
$$
\begin{aligned}
\htheta^+_\alpha(\zo) &= \sup \cb{ \theta_{\gprior}(\zo) \mid \gprior \in \gcal\p{\ff_n(\alpha)}} \\
        &= \sup \sqb{ \sup \cb{ a_{\gprior}(\zo)/t \mid \gprior \in \gcal\p{\ff_n(\alpha)}, \, f_{\gprior}(\zo)=t} \mid \; t \in [\hf^-_\alpha(\zo), \hf^+_\alpha(\zo)]}.
\end{aligned}
$$
Hence, we can proceed as follows. Let $\mathcal{T}$ be a fine discretization of $[\hf^-_\alpha(\zo), \hf^+_\alpha(\zo)]$, say with $100$ equidistant points. Then, for each $t \in \mathcal{T}$, compute:
$$\htheta^+_\alpha(\zo; t) = \sup \cb{ a_{\gprior}(\zo)/t \mid \gprior \in \gcal\p{\ff_n(\alpha)}, \, f_{\gprior}(\zo)=t}.$$
Note that this problem also has an objective that is linear in $G$, and specifies convex constraints on $\gprior$, i.e., it is a convex programming problem. Finally, we report $\htheta^+_\alpha(\zo) = \max_{t \in \mathcal{T}} \htheta^+_\alpha(\zo; t)$.

\subsection{Considerations for discretization of $\gcal$}
\label{sec:gcal_discretization}

If an infinite-dimensional $\gcal$ is specified, then it is important to guarantee that the error incurred when solving~\eqref{eq:nbhood_worst_case} or~\eqref{eq:continuous_modulus_problem} with a discretized class \smash{$\widetilde{\gcal}$} instead of $\gcal$, is negligible compared to e.g., the confidence interval width. This typically requires $\gcal$ to be tight, and in our applications we used the prior classes~\eqref{eq:all_dbns},~\eqref{eq:normal_mixing_class} and \eqref{eq:normal_scale_class} with $\Ksupport$ chosen as a compact set. 

The following proposition can be used to verify the accuracy of a discretization \smash{$\widetilde{\gcal}$}.
\begin{prop}
Consider the linear functional $L(\gprior) = \int \psi(\mu)\dgprior$ for a function $\psi(\cdot)$.
\begin{enumerate}[leftmargin=*, label=\alph*)]
\item If $\abs{\psi(\mu)} \leq C_{\psi}$, then, $\inf_{\tilde{\gprior} \in \widetilde{\gcal}}\abs{L(G) - L(\tilde{G})} \leq C_{\psi}/2\inf_{\tilde{\gprior} \in \widetilde{\gcal}} \{\TV(\gprior,\tilde{\gprior})\}$, where \\ 
\smash{$\TV(\gprior, \tilde{\gprior}) = \sup_{A}\lvert \gprior(A)-\tilde{\gprior}(A)\rvert$} is the total variation distance between $\gprior$ and $\tilde{\gprior}$.
\item  If $\psi(\mu)$ is $C_{\psi}$-Lipschitz continuous, then, $\inf_{\tilde{\gprior} \in \widetilde{\gcal}}\abs{L(G) - L(\tilde{G})} \leq C_{\psi}\inf_{\tilde{\gprior} \in \widetilde{\gcal}} \{ W_1(\gprior,\tilde{\gprior})\},$\\
with \smash{$W_1(G, \tilde{G}) = \inf\{ \EEInline{\abs{\mu - \tilde{\mu}}}:\; (\mu, \tilde{\mu}) \text{ random variables s.t. } \mu\sim G, \tilde{\mu} \sim \tilde{G}\}$} the Wasserstein distance between $G$ and $\tilde{G}$ (cf. \citetsupplement{panaretos2019statistical} and references therein).
\end{enumerate}
\label{prop:discretization}
\end{prop}
\begin{proof}
\textbf{a)}\; Recall that \smash{$\TV(G,\tilde{\gprior}) = \frac{1}{2}\int |dG(\mu)-d\tilde{G}(\mu)|$}. Thus,
$$ \abs{L(\gprior)- L(\tilde{\gprior)}} = \abs{\int \psi(\mu) \p{dG(\mu)-d\tilde{G}(\mu)}} \leq C_{\psi} \int \abs{dG(\mu)-d\tilde{G}(\mu)} \leq 2C_{\psi} \TV(G,\tilde{\gprior}).$$
\textbf{b)}\; Letting $\mu \sim G, \tilde{\mu} \sim \tilde{G}$, the optimal Wasserstein coupling, we get
$$ \abs{L(\gprior)- L(\tilde{\gprior)}} = \abs{\EE{\psi(\mu)} - \EE{\psi(\tilde{\mu})}} \leq \EE{\abs{\psi(\mu) - \psi(\tilde{\mu})}} \leq C_{\psi} \EE{\abs{\mu - \tilde{\mu}}} = C_{\psi} W_1(G,\tilde{G}).$$
\end{proof}

For example, when part b) of the Proposition is applicable, then it suffices for $\widetilde{\gcal}$ to be a cover of $\gcal$ in terms of the Wasserstein distance. In some cases, part b) is not applicable. For example, when constructing intervals for the local false sign rate in the standard Gaussian empirical Bayes problem, then the numerator $a_{\gprior}(\zo)$ in~\eqref{eq:ratio} takes the form $a_{\gprior}(z) = \int \psi(\mu)\dgprior$ with $\psi(\mu) = \ind( \mu \geq 0) \varphi(z-\mu)$, and so $\psi(\cdot)$ is not Lipschitz continuous. Instead, part a) of the Proposition is applicable, and so a cover in total variation suffices.

Below we provide details for the discretization of $\mathcal{P}(\mathcal{K})$~\eqref{eq:all_dbns} and $\law\nn(\tau, \mathcal{K})$~\eqref{eq:normal_mixing_class}, when $\mathcal{K}$ is a compact interval.

\subsubsection{Discretization of compactly supported distributions} 
Consider $\mathcal{P}([L,U])$~\eqref{eq:all_dbns}, the class of all distributions supported on the compact interval $\mathcal{K}=[L,U]$. We first discretize $\mathcal{K}=[L,U]$ as the finite grid
\begin{equation}
\label{eq:finite_grid}
\mathcal{K}(\Bdisc, L, U) = \cb{ L,\, L + \frac{U-L}{\Bdisc},\, L + 2\frac{U-L}{\Bdisc},\,\dotsc,\, U},\;\; \Bdisc  \in \mathbb N.
\end{equation}
Then $\mathcal{P}([L,U])$ may be discretized by considering $\mathcal{P}(\mathcal{K}(\Bdisc,L,U))$, the class of all distributions supported on the grid $\mathcal{K}(\Bdisc, L, U)$. This class is amenable to our optimization tasks. By enumerating the grid elements as $\mathcal{K}(\Bdisc,L, U) = \cb{\mu_1, \dotsc, \mu_{\Bdisc+1}}$, we may represent every $\gprior \in \mathcal{P}(\mathcal{K}(\Bdisc,L,U))$ by the probabilities \smash{$\pi_j = \PP[\gprior]{\mu = \mu_j}$} assigned to $\mu_j$, and so we may identify $\mathcal{P}(\mathcal{K}(\Bdisc,L,U))$ with the probability simplex:
\begin{equation}
\label{eq:prob_simplex}
S^{\Bdisc+1} = \cb{ (\pi_1, \dotsc, \pi_{\Bdisc+1}) \in [0,1]^{\Bdisc+1} \; \cond \; \sum_{j=1}^{\Bdisc + 1 } \pi_j = 1}.
\end{equation}
$S^{\Bdisc+1}$ is a linear polytope. See~\eqref{eq:dkw_loc_opt} for an explicit example of how it is used for computations. In Section~\ref{subsec:lord_dataset} we discretized $\mathcal{P}([0,1])$ as above with $\Bdisc + 1= 300$.

Finally, we note that the discretization $\mathcal{P}(\mathcal{K}(\Bdisc,L,U))$ provides a $O((U-L)/\Bdisc)$ covering of $\mathcal{P}([L,U])$ in Wasserstein distance, but not in total variation distance. For this discretization scheme, Proposition~\ref{prop:discretization} justifies inference for functionals satisfying b) of its statement. The implication for inference on empirical Bayes estimands is that we may use the above discretization to conduct inference for the posterior mean, e.g., in the Gaussian empirical Bayes problem.

\subsubsection{Discretization of Gaussian location mixtures}

Consider the class of distributions \smash{$\law\nn\p{\tau,\, \mathcal{K}}$} from~\eqref{eq:normal_mixing_class} in the special case where \smash{$\mathcal{K} = [L,U]$} is a compact interval. Then letting \smash{$\mathcal{K}(\Bdisc, L, U)$}~\eqref{eq:finite_grid} the equidistant discretization of $[L,U]$, we use \smash{$\law\nn\p{\tau, \mathcal{K}(\Bdisc, L, U)}$} as a discretization of \smash{$\law\nn\p{\tau,\, \mathcal{K}}$} that is amenable to efficient computation. We have the following numerical representation of \smash{$\law\nn\p{\tau, \mathcal{K}(\Bdisc, L,U)}$}:
\begin{equation}
\label{eq:discretized_gaussian}
\gprior \in \law\nn\p{\tau, \mathcal{K}(\Bdisc, L,U)} \, \Longleftrightarrow \;\, \gprior = \sum_{j =1}^{\Bdisc+1} \pi_j \nn(\mu_j, \tau^2),\,\,\, (\pi_1, \dotsc, \pi_{\Bdisc + 1}) \in S^{\Bdisc +1}.  
\end{equation}
Here~\eqref{eq:discretized_gaussian} refers to the probability simplex~\eqref{eq:prob_simplex} and so $\law\nn\p{\tau, \mathcal{K}(\Bdisc, L,U)}$ can also be represented in term of $S^{\Bdisc +1}$. Furthermore in the Gaussian empirical Bayes problem~\eqref{eq:EB} with $\Zo \cond \mu \sim \nn(\mu, \sigma^2)$, and $G$ discretized as in~\eqref{eq:discretized_gaussian} we typically do not need to resort to numerical quadrature. To see this, note that for $\gprior \in  \law\nn\p{\tau, \mathcal{K}(\Bdisc, L,U)}$:
$$ L(\gprior) = \sum_{j} \pi_j L( \nn(\mu_j, \tau^2)), $$
and for many functionals of interest there exist explicit expressions for $L( \nn(\mu_j, \tau^2))$. A special case of the above result is marginalization:
$$ \Zo \, \sim \, \sum_{j} \pi_j \nn\p{\mu_j, \tau^2 + \sigma^2}.$$

Finally, we note that by a direct calculation it follows that \smash{$\law\nn\p{\tau, \mathcal{K}(\Bdisc, L,U)}$} provides a \smash{$O((U-L)/\Bdisc)$} covering of \smash{$\law\nn\p{\tau, [L,U]}$} in both total variation and Wasserstein distance.  Inference based on the discretized class will thus be valid for linear functionals satisfying a) or b) of Proposition~\ref{prop:discretization} as long as $\Bdisc$ is large enough.

\section{Computational aspects for \Amari}
\label{sec:amari_computation}

\subsection{Discretization}
\label{subsec:amari_discretization}

\paragraph{Discretization of $\gcal$:} Here the same considerations apply as in Supplement~\ref{sec:gcal_discretization}.

\paragraph{Discretization of $\Zo$:} Computing~\eqref{eq:continuous_modulus_problem} for a continuous likelihood, such as \smash{$\nn(\mu, \sigma^2)$}, requires numerical integration, e.g., to compute \smash{$\int  (f^M_{\gprior^{\delta}_1}(\zo) - f^M_{\gprior^{\delta}_{-1}}(\zo))^2\big/\barf_n^M(\zo)\ d\lambda^M(\zo)$}. In this section we explain how to conduct the discretization rigorously. Our goal is to allow an arbitrary discretization (that may be coarse) by accounting for the discretization in the calculation of the worst-case bias. In particular, even if the discretization is too coarse, our intervals will have correct coverage, although they may be overly wide.

Fix $M > 0$ as in~\eqref{eq:tilde_Z}, and consider the grid,\footnote{In practice, to guarantee shorter confidence intervals, the grid should be made as dense as possible, subject to computational constraints, and should also become denser as $n$ increases. In the Gaussian problem for example, we discretize \smash{$[-M,M]$} as a dense equidistant grid, with step size \smash{$\ll \sigma$}. While we do not pursue this further here, existing theory for discretization in statistical inverse problems~\citepsupplement{johnstone1991discretization} suggests that even relatively coarse griding suffices to maintain the minimax risk. The result of Theorem~\ref{theo:lin_functional_clt} allows for an arbitrary discretization of \smash{$[-M,M]$} (by applying the results of that theorem to the `discretized' likelihood) that can also change with $n$.}
\begin{equation}
\label{eq:fine_grid}
\ii = \ii_n = \cb{-M = t_{1,n} < t_{2,n} < \dotsc < t_{K_n-1,n} = M},
\end{equation}
where the grid may depend on $n$. Also let us define $t_{0,n}=-\infty$, $t_{K_n,n} =+\infty$ and $I_{k,n} = [t_{k-1,n}, t_{k,n})$ for $k \in \{1,\dotsc, K_n\}$ and $K_n\in \NN$. In analogy to~\eqref{eq:tilde_Z}, we define:
\begin{equation}
\Zo_i^{\ii}  := \sum_{k=1}^{K_n} k\ind(\Zo_i \in I_{k,n}) \in \{1,\dotsc,K_n\}.
\end{equation}
Also let $\lambda^{\ii}$ the counting measure on $\cb{1,\dotsc,K_n}$ and analogously to the development after~\eqref{eq:tilde_Z}, define  $f_{\gprior}^{\ii}$ to be the marginal density of $\Zo_i^{\ii}$ with respect to $\lambda^{\ii}$, i.e., $f_{\gprior}^{\ii}(k) = \int f_{\gprior}(\zo) \ind(\zo \in I_{k,n})d\lambda(\zo)$ for $k \in \cb{1,\dotsc,K_n}$ and also define $\bar{f}^{\ii}(k)$ analogously.

In view of the above considerations, the modulus problem~\eqref{eq:continuous_modulus_problem} takes on the following discrete form:
\begin{equation}
\label{eq:discrete_modulus_problem}
 \sup\cb{ L(\gprior_1) - L(\gprior_{-1})\;\mid\; \gprior_1, \gprior_{-1} \in \gcal_n,\; n\cdot \sum_{k=1}^{K_n} \frac{(f^{\ii}_{\gprior_1}(k) - f^{\ii}_{\gprior_{-1}}(k))^2}{\barf^{\ii}(k)} \; \leq \; \delta^2}.
\end{equation}
Below we discuss the solution of this discretized form of the modulus.

\subsection{Computing the affine minimax estimator}
\label{subsec:computation_minimaxaffine}

\subsubsection{Direct form of the modulus problem}

To solve~\eqref{eq:discrete_modulus_problem} with modern convex optimization solvers, it is convenient to represent it as follows
\begin{subequations}
\label{eq:modulus_problem_socp}
\begin{align}
\sup_{\gprior_1, \gprior_{-1}} \quad & L(\gprior_1) - L(\gprior_{-1}) \label{eq:modulus_max_linear_diff}\\
\textrm{s.t.} \quad & \sqrt{\sum_{k=1}^{K_m} \frac{(f^{\ii}_{\gprior_1}(k) - f^{\ii}_{\gprior_{-1}}(k))^2}{\barf^{\ii}(k)}} \; \leq \; \frac{\delta}{\sqrt{n}}  \label{eq:modulus_modulus}\\
  &\gprior_1, \gprior_{-1} \in \gcal  \label{eq:modulus_priors}  \\ 
  &\gprior_1, \gprior_{-1} \textrm { are F-localized, i.e., } F_{\gprior_1}, F_{\gprior_{-1}} \in \ff_n  \label{eq:modulus_localization}
\end{align}
\end{subequations}

We make the following observations:
\begin{itemize}[label={--},noitemsep]
	\item The optimization variables are $\gprior_1, \gprior_2 \in \gcal$. With $\gcal$ suitably discretized, as in Section~\ref{sec:gcal_discretization}, these have finite-dimensional representations. The choices of (discretized) $\gcal$ considered in this work may be represented using a finite number of linear constraints.
	\item The objective~\eqref{eq:modulus_max_linear_diff} is linear in the optimization variables.
	\item The maps $\gprior_{\ell} \mapsto f^{\ii}_{\gprior_{\ell}}(k)$ are linear in $\gprior_{\ell}$ and so~\eqref{eq:modulus_modulus} corresponds to a second order cone constraint.
	\item The localization constraints in~\eqref{eq:modulus_localization} may be implemented as a finite number of constraints on $\gprior_1$, resp. $\gprior_{-1}$. These can be either linear (DKW and Gauss-$F$-localizations) or quadratic ($\chi^2$-$F$-localization) in the optimization variables. To see these two claims, first note that the maps $\gprior_{\ell} \mapsto F_{\gprior_{\ell}}$ are also linear. Furthermore, inspecting the proof of Theorem~\ref{theo:lin_functional_clt}, we see that~\eqref{eq:pilot_and_localization_quality} only needs to hold for the discretized distributions, $F^{\ii}_{\gprior_1}, F^{\ii}_{\gprior_{-1}},F^{\ii}_{\gprior}, \bar{F}^{\ii}$ (with triangular array asymptotics accounting for $\ii$ changing with $n$).
\end{itemize}
As a consequence of the above observations, the discretized modulus problem may be represented as a finite-dimensional second order conic program (SOCP)~\citep{boyd2004convex}, which in turn is efficiently solvable by modern convex optimization solvers such as Mosek~\citep{mosek} or Hypatia~\citep{coey2020towards}. In our numerical examples we use Mosek; our implementation can also use Hypatia.

\subsubsection{Superdifferential of the modulus problem and duality}
Evaluation of the estimator~\eqref{eq:optimal_Q} requires access to $\omega_n'(\delta)$, an element of the superdifferential of the modulus of continuity $\omega_n(\delta)$ at $\delta = \delta_n$. An element $\omega_n'(\delta) \cdot \sqrt{n}$ may be directly extracted upon solving~\eqref{eq:modulus_problem_socp} as the dual variable associated to the constraint~\eqref{eq:modulus_modulus}, provided that strong duality holds, and the primal and dual optima are attained. Many convex solvers, including Mosek and Hypatia, return the dual variables.

\begin{proof}[Argument sketch]
Define the Lagrangian of~\eqref{eq:modulus_problem_socp} for $\lambda \geq 0$:
$$ \mathcal{L}(\gprior_1, \gprior_{-1}, \lambda; \delta) =  L(\gprior_1) - L(\gprior_{-1}) - \lambda\sqb{\p{\sum_{k=1}^{K_m} \frac{(f^{\ii}_{\gprior_1}(k) - f^{\ii}_{\gprior_{-1}}(k))^2}{\barf^{\ii}(k)}}^{1/2} - \frac{\delta}{\sqrt{n}}}.$$
Note that we parametrize the optimization problem and also the Lagrangian by $\delta$. For any feasible $\gprior_1, \gprior_{-1}$ and $\lambda \geq 0$
\begin{equation}
\label{eq:duality_basic}
\mathcal{L}(\gprior_1, \gprior_{-1}, \lambda; \delta) \geq L(\gprior_1) - L(\gprior_{-1}).
\end{equation}
Let $G_{1}^{\delta}, G_{-1}^\delta$ be primal optimal solutions to~\eqref{eq:modulus_problem_socp} and let $\lambda^{\delta}$ be the optimal dual variable, then~\citep[Chapter 5.5.2]{boyd2004convex}:
\begin{equation}
\label{eq:strong_duality}
\omega_n(\delta) = L(\gprior_1^{\delta}) - L(\gprior_{-1}^{\delta}) = \mathcal{L}(\gprior_1^{\delta}, \gprior_{-1}^{\delta}, \lambda^{\delta}; \delta) =  \sup_{\gprior_1, \gprior_{-1}} \mathcal{L}(\gprior_1, \gprior_{-1}, \lambda^{\delta}; \delta)
\end{equation}
Now fix $\delta>0$ and take any $\Delta\delta$ such that $\Delta\delta > -\delta$. Also let $\gprior_1^{\delta + \Delta\delta}, \gprior_{-1}^{\delta + \Delta\delta}$ solutions to~\eqref{eq:modulus_problem_socp} at $\delta + \Delta\delta$. Putting all results together
$$
\begin{aligned}
\omega_n(\delta + \Delta\delta) &= L(\gprior_1^{\delta + \Delta\delta}) - L(\gprior_{-1}^{\delta + \Delta\delta}) \\
&\stackrel{\eqref{eq:duality_basic}}{\leq} \mathcal{L}(\gprior_1^{\delta + \Delta\delta}, \gprior_{-1}^{\delta + \Delta\delta}, \lambda^{\delta}; \delta + \Delta\delta) \ \\
&= L(\gprior_1^{\delta + \Delta\delta}) - L(\gprior_{-1}^{\delta + \Delta\delta}) - \lambda^{\delta}\sqb{\p{\sum_{k=1}^{K_m} \frac{(f^{\ii}_{\gprior_1^{\delta + \Delta\delta}}(k) - f^{\ii}_{\gprior_{-1}^{\delta + \Delta\delta}}(k))^2}{\barf^{\ii}(k)}}^{1/2} - \frac{\delta+ \Delta\delta}{\sqrt{n}}} \\
&= L(\gprior_1^{\delta + \Delta\delta}) - L(\gprior_{-1}^{\delta + \Delta\delta}) - \lambda^{\delta}\sqb{\p{\sum_{k=1}^{K_m} \frac{(f^{\ii}_{\gprior_1^{\delta + \Delta\delta}}(k) - f^{\ii}_{\gprior_{-1}^{\delta + \Delta\delta}}(k))^2}{\barf^{\ii}(k)}}^{1/2} - \frac{\delta}{\sqrt{n}}} + \frac{\lambda^{\delta}\Delta\delta}{\sqrt{n}}\\
&=  \mathcal{L}(\gprior_1^{\delta + \Delta\delta}, \gprior_{-1}^{\delta + \Delta\delta}, \lambda^{\delta}; \delta)  + (\lambda^{\delta}/\sqrt{n})\Delta\delta\ \\ 
&\stackrel{\eqref{eq:strong_duality}}{\leq} \omega_n(\delta) + (\lambda^{\delta}/\sqrt{n})\Delta\delta
\end{aligned}
$$
Thus $\lambda^{\delta}/\sqrt{n} \in \partial\omega_n(\delta)$, that is, $\lambda^{\delta}/\sqrt{n}$ is an element of the superdifferential of $\omega_n(\cdot)$ at $\delta$.
\end{proof}

\subsection{Bias-aware Normal confidence interval}
\label{subsec:bias_aware_formula}
Recall that for constructing the confidence intervals from~\eqref{eq:im_iw_ci}, we need to calculate (with $W \sim \mathcal{N}(0,1)$):
$$ t_\alpha(B,V) = \inf\cb{t : \PP{\abs{B + V^{1/2}W} \leq t} \geq 1 - \alpha \text{ for all } \abs{b} \leq B}$$
This is the same as:
$$ t_\alpha(B,V) = V^{1/2} \inf\cb{t : \PP{\abs{B/V^{1/2} + W} \leq t} \geq 1 - \alpha \text{ for all } \abs{b} \leq B}$$
It is not directly obvious how to calculate this, however here we will argue that the calculation reduces to calculating the quantile of the absolute value of a Normal distribution (and hence can be efficiently computed); this expression is also given in~\citet{armstrong2018optimal}:
\begin{prop}
\label{prop:bias_aware_formula}
Under the above setting it holds that:
$$ t_{\alpha}(B,V) = V^{1/2}\mathop{cv_{\alpha}}(B/V^{1/2} )$$
Here $\mathop{cv_{\alpha}}(u)$ is the $1-\alpha$ quantile of the absolute value of a $\mathcal{N}(u,1)$ distribution.
\end{prop}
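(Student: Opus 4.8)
The plan is to reduce the defining formula for $\hat t_\alpha(B,V)$ to a one‑dimensional Gaussian quantile by a scaling change of variables together with a monotonicity argument that identifies the least‑favourable bias. Throughout I take $V>0$, so that $B/V^{1/2}$ is well defined; the degenerate case $V=0$ gives $\hat t_\alpha(B,0)=B$ directly (and arises as the limit $V\to 0^+$, since $\mathop{cv_{\alpha}}(u)\to u-B+B=\ldots$, more simply $V^{1/2}\mathop{cv_{\alpha}}(B/V^{1/2})\to B$).

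First I would substitute $b=V^{1/2}u$, so that $\abs{b}\le B$ iff $\abs{u}\le U:=B/V^{1/2}$, and note $b+V^{1/2}Z=V^{1/2}(u+Z)$ with $u+Z\sim\mathcal N(u,1)$. Writing $g_s(u):=\PP{\abs{u+Z}\le s}=\Phi(s-u)-\Phi(-s-u)$ for $s\ge 0$, we have $\PP{\abs{b+V^{1/2}Z}\le t}=g_{t/V^{1/2}}(u)$, so~\eqref{eq:im_iw_ci} (equivalently, by continuity of the Gaussian CDF) reads
$$\hat t_\alpha(B,V)=\inf\cb{t\ge 0:\ g_{t/V^{1/2}}(u)\ge 1-\alpha\ \text{ for all }\abs{u}\le U}.$$

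The key step is the lemma that $u\mapsto g_s(u)$ is even and non‑increasing in $\abs{u}$ for each fixed $s\ge 0$. Evenness is immediate from $\Phi(-x)=1-\Phi(x)$; for $u\ge 0$ one computes $\tfrac{d}{du}g_s(u)=\varphi(s+u)-\varphi(s-u)\le 0$, since $\abs{s-u}\le s+u$ and $\varphi$ is even and decreasing on $[0,\infty)$. Hence $\min_{\abs{u}\le U}g_s(u)=g_s(U)$, so the constraint ``$g_s(u)\ge 1-\alpha$ for all $\abs{u}\le U$'' collapses to the single constraint $g_s(U)=\PP{\abs{\mathcal N(U,1)}\le s}\ge 1-\alpha$. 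Substituting $s=t/V^{1/2}$ (a monotone bijection of $[0,\infty)$) then gives $\hat t_\alpha(B,V)=V^{1/2}\inf\{s\ge 0:\PP{\abs{\mathcal N(U,1)}\le s}\ge 1-\alpha\}$; since $s\mapsto\PP{\abs{\mathcal N(U,1)}\le s}$ is continuous and non‑decreasing, this infimum is exactly the $1-\alpha$ quantile $\mathop{cv_{\alpha}}(U)$ of $\abs{\mathcal N(U,1)}$, yielding $\hat t_\alpha(B,V)=V^{1/2}\mathop{cv_{\alpha}}(B/V^{1/2})$.

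The only nontrivial ingredient is the monotonicity lemma, and there the obstacle is merely to record the one‑line derivative bound using unimodality of the Gaussian density, so I do not anticipate any real difficulty. One could alternatively invoke the standard fact that translating a symmetric unimodal law away from $0$ can only decrease the mass it assigns to an interval centred at $0$ (Anderson's lemma is the general version), but here the direct computation is shorter and self‑contained.
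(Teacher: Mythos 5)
Your proposal is correct and takes essentially the same route as the paper: rescale by $V^{1/2}$ (you via the substitution $b=V^{1/2}u$, the paper via ``WLOG $V=1$''), then show that the Gaussian coverage probability is a decreasing function of $\abs{b}$ by the same one-line derivative computation $\varphi(s+u)-\varphi(s-u)\le 0$ using unimodality of $\varphi$. The only cosmetic difference is that the paper first invokes $\abs{b+Z}\overset{d}{=}\abs{-b+Z}$ to reduce to $b\ge 0$ while you phrase it as evenness of $g_s$; both also implicitly replace the strict inequality in the definition of $\hat t_\alpha$ by a non-strict one, which you correctly flag as harmless by continuity.
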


\begin{proof}
For convenience of notation and without any loss of generality, let us assume $V=1$. First let us note that $\abs{b + W} \stackrel{\mathcal{D}}{=} \abs{-b + W}$ for any $b$, hence:
$$ t_\alpha(B,V) = \inf\cb{t : \PP{\abs{b + W} \leq t} \geq 1 - \alpha \text{ for all } 0 \leq b \leq B}$$
Next, observe that for $b=B$, \smash{$\abs{B + W} \sim |\mathcal{N}(B,1)|$}, and thus by definition:
$$ \inf\cb{t : \PP{\abs{B + W} \leq t} \geq 1 - \alpha} = \mathop{cv_{\alpha}}(B)$$
We now just need to check what happens for $0 \leq b \leq B$, and indeed we will need some stochastic dominance argument. It suffices to argue that for any fixed $t>0$ and $0 \leq b \leq B$:

$$ \PP{ |B+W| \leq t} \leq  \PP{ |b+W| \leq t }$$
Thus, if we let $h(b) = \PP{ |b+W| \leq t }$ it suffices to show $h'(b)  \leq 0 \text{ for all } b\geq 0$, so that it is decreasing. A direct calculation yields (with $\Phi, \varphi$ the standard Normal CDF and pdf respectively):
$$h(b) = \Phi(t-b) - \Phi(-t-b).$$
So:
$$h'(b)= -\varphi(t-b) + \varphi(-t-b) \leq 0.$$
The last inequality holds since $\abs{t-b} \leq \abs{-t-b}$ for $t,b\geq0$.
\end{proof}

\section{Exponential family (logspline) G-modeling}
\label{sec:exp_family}
In this section we summarize the empirical Bayes approach introduced by~\citet{efron2016empirical} and \citet{narasimhan2016g}. The key idea is to specify $\mathcal{H}$ as a flexible exponential family of effect size distributions with natural parameters \smash{$\alpha = (\alpha_1,\dotsc,\alpha_p)$}, sufficient statistic \smash{$Q(\mu): \RR \mapsto \RR^{p}$} and base measure $H$. Concretely, distributions \smash{$G \in \mathcal{H}$} are parametrized by $\alpha$ with Radon-Nikodym derivative \smash{$g_{\alpha}(\mu) = dG/dH(\mu)$} defined as
\begin{equation}
    \label{eq:efron_exp}
     g_{\alpha}(\mu) =\exp(Q(\mu)^\top \alpha - A(\alpha)).
\end{equation}
$A(\alpha)$ is such that $\int g_{\alpha}(\mu)dH(\mu) = 1$. It is worth pointing out, that in contrast to our setting, $\mathcal{H}$ is not a convex class. $\alpha$ is estimated by $\hat{\alpha}$, the maximizer of the log (marginal) likelihood $\ell(\alpha)$ in model~\eqref{eq:EB}:
\begin{equation}
\ell(\alpha) = \sum_{i=1}^n\log\p{ \int p(\Zo_i \mid \mu)  g_{\alpha}(\mu)dH(\mu)}.
\end{equation}
\citet{efron2016empirical} further recommends to maximize the penalized likelihood $\ell(\alpha)-s(\alpha)$ instead, where \smash{$s(\alpha) = c_0 ||\alpha||_2$ for some $c_0 > 0$}. The empirical Bayes quantity \smash{$\theta_{\gprior}(\zo) = \EE[\gprior]{h(\mu_i) \cond \Zo_i=\zo}$} can then be estimated by the plug-in estimator \smash{$\hat{\theta}(\zo) = \EE[\hG]{h(\mu_i) \cond \Zo_i=\zo}$}, where \smash{$\hG$} is the prior with $dH$-density \smash{$g_{\hat{\alpha}}(\cdot)$}. Standard delta method calculations and maximum-likelihood asymptotics can then be used to estimate standard errors and correct bias due to the penalization (but not due to misspecification). \citet{efron2016empirical} demonstrates that even under misspecification, such a family of effect size distributions leads to practical (albeit biased) empirical Bayes point estimates.

We use the following parameters for the method in our numerical results.

\begin{itemize}[label={--}, leftmargin=*, nosep]
\item We take the base measure $H$ to be the uniform measure $U[-4, 4]$, the sufficient statistic to be a natural spline with 5 degrees of freedom with equidistant knots on the above grid and $c_0 =0.001$.
\item In Figure~\ref{fig:gmodel_varying_dof} we use the same settings as above but vary the degrees of freedom from $2$ to $12$.
\end{itemize}

\section{Sensitivity analysis for the prostate dataset}
\label{sec:supplement_sensitivity}

In this Supplement, we explore some of the issues raised in Section~\ref{subsec:sensitivity} by revisiting the prostate data analysis of Section~\ref{subsec:prostate}. There we we posited that \smash{$\gprior \in \gcal = \law\nn(\tau^2, [-3.3])$}~\eqref{eq:normal_mixing_class} with $\tau=0.25$. The first question we ask, is whether a goodness-of-fit test can guide the choice of $\tau$ in a data-driven way. To test \smash{$H_0(\tau): \gprior \in  \law\nn(\tau^2, [-3.3])$}, we use the Split Likelihood-Ratio (SLR) test of~\citet*{wasserman2020universal}, of which we provide a brief explanation. 

First, we randomly split our observations into two folds, $I_0$ and $I_1$. Then, let \smash{$\widehat{G}_1$} be the nonparametric maximum likelihood estimator of $\gprior$ in the class \smash{$\mathcal{P}(\RR)$} using $\Zo_i,\, i \in I_1$ and \smash{$\widehat{G}_0$} the nonparametric maximum likelihood estimator of $\gprior$ in the class \smash{$\law\nn(\tau^2, [-3.3])$} using $\Zo_i,\, i \in I_0$. The Split Likelihood-Ratio (SLR) is defined as:
$$ \text{SLR} = \prod_{i \in I_0} \frac{ f_{\widehat{G}_1}(\Zo_i)}{ f_{\widehat{G}_0}(\Zo_i)}.$$
\citet*{wasserman2020universal} prove that the test \smash{$\cb{\text{SLR} > 1/\alpha}$} is a finite-sample valid level $\alpha$ test for the null hypothesis \smash{$H_0(\tau)$}. The second column of Table~\ref{tab:sensitivity} shows the SLR for \smash{$\tau \in \cb{0.02, 0.1, 0.25, 0.5, 55}$}. The SLR test at level $\alpha=0.05$ only rejects the model with $\tau=0.55$, and the SLR statistic becomes smaller as $\tau$ decreases. 

Next, we consider inference for the local false sign rate \smash{$\PP[\gprior]{\mu \geq 0 \cond \Zo=2}$} and posterior mean \smash{$\EE[\gprior]{\mu \cond \Zo=2}$} at $\Zo=2$ using the Gauss-$F$-Localization approach. The last two columns of Table~\ref{tab:sensitivity} show the confidence intervals for each choice of $\tau$ (that was not rejected by the SLR test). We observe that as $\tau$ becomes smaller, the confidence intervals for  \smash{$\PP[\gprior]{\mu \geq 0 \cond \Zo=2}$} become substantially wider, while the confidence intervals for the posterior mean are less sensitive. One way of determining how pessimistic a given choice of $\tau$ may be, is to inspect the worst-case priors in~\eqref{eq:nbhood_worst_case}. Figure~\ref{fig:worstcase} shows these for the local false sign rate \smash{$\PP[\gprior]{\mu \geq 0 \cond \Zo=2}$}.

\begin{table}[H]
\centering
\begin{tabular}{@{}ccccc@{}} \toprule
$\tau$ &  SLR & Goodness of fit rejected & CI for $\PP[\gprior]{\mu \geq 0 \cond \Zo=2}$ & CI for $\EE[\gprior]{\mu \cond \Zo=2}$ \\ \midrule
0.02 & 0.0030 & $\text{\sffamily X}$ & 0.1859 -- 0.9996 & 0.0196 -- 0.7156\\
0.10 & 0.0032 & $\text{\sffamily X}$ & 0.4491 -- 0.9746 & 0.0302 -- 0.7127\\
0.25 & 0.0042 & $\text{\sffamily X}$ & 0.6396 -- 0.8905 & 0.0922 -- 0.6960\\
0.50 & 1.8847 & $\text{\sffamily X}$ & 0.8043 -- 0.8325 & 0.3834 -- 0.5291\\
0.55 & 74.106 & $\checkmark$ &&
\\ \bottomrule
\end{tabular}
\caption{Goodness-of-fit testing and sensitivity analysis for the Prostate data.}
\label{tab:sensitivity}
\end{table}

\begin{figure}
\centering
\begin{tabular}{c}
\begin{adjustbox}{width=\linewidth}\input{tikz_figures/worst_case_priors.tikz}\end{adjustbox} 
\end{tabular}
\caption{\textbf{Lebesgue density of worst-case priors} in the Gauss-$F$-Localization approach~\eqref{eq:nbhood_worst_case} for the local false sign rate \smash{$\PP[\gprior]{\mu \geq 0 \cond \Zo=2}$} in the Prostate dataset. Each panel corresponds to a different specification of the prior class $\mathcal{G}$, namely \smash{$\gcal=\law\nn(\tau^2, [-3.3])$} where $\tau$ varies across panels.}
\label{fig:worstcase}
\end{figure}

\end{appendix}

\end{document}